\date{}
\newtheorem{Theo}{Theorem}
\newtheorem{Lem}{Lemma}
\newtheorem{Def}{Definition}
\newtheorem{Remark}{Remark}
\newtheorem{Proposition}{Proposition}
\newtheorem{proof}{Proof}
\begin{document}
\title{Three-user D2D Coded Caching with Two Random Requesters and One Sender }

\author{Wuqu~Wang,
	Nan~Liu,~\IEEEmembership{Member,~IEEE,}
	and Wei~Kang,~\IEEEmembership{Member,~IEEE}
	
	\thanks{The authors are with the National Mobile Communications Research Laboratory, Southeast University, Nanjing 210096, China (e-mail: \{wuquwang, nanliu, wkang\}@seu.edu.cn).}
	
}

\maketitle

\begin{abstract}

In device-to-device (D2D) coded caching problems, it is possible that not all users will make file requests in the delivery phase. Hence,  we propose a new D2D centralized coded caching problem, named \emph{the 3-user D2D coded caching with two random requesters and one sender (2RR1S)}, where in the delivery phase, any two of the three users will make file requests, and the user that does not make any file request is the designated sender. 
We find the optimal caching and delivery scheme, denoted as \emph{the 2RRIS scheme}, for any number of files $N$ by proving matching converse and achievability results. It is shown that coded cache placement is needed to achieve the optimal performance. Furthermore, the optimal rate-memory tradeoff has a uniform expression for $N \geq 4$ and different expressions for $N=2$ and $3$.

To examine the usefulness of the proposed model and scheme, we adapt the 2RR1S scheme to three scenarios. The first one is the \emph{3-user D2D coded caching} model proposed by Ji \emph{et al}. By characterizing the optimal rate-memory tradeoff for the 3-user D2D coded caching when $N=2$,  which was previously unknown, we show that the adapted 2RR1S scheme is in fact optimal for the 3-user D2D coded caching problem when $N=2$ and the cache size is medium. The benefit comes from \emph{coded} cache placement which is missing from existing D2D coded caching schemes. The second scenario is where in the delivery phase, each user makes a file request randomly and independently with the same probability $p$. We call this model \emph{the request-random D2D coded caching} problem. Adapting the 2RR1S scheme to this scenario, we show the superiority of our adapted scheme over other existing D2D coded caching schemes for medium to large cache size. The third scenario is the \emph{K-user D2D coded caching with $K-s$ random requesters and $s$ senders} problem, for which an achievability result is obtained by generalizing the 2RR1S scheme.
\end{abstract}

\begin{IEEEkeywords}
Coded caching, device-to-device, optimal rate-memory tradeoff, random request
\end{IEEEkeywords}

\section{Introduction}

The applications of wireless networks have developed from traditional real-time voice communication to multimedia transmissions such as video, virtual/augmented reality game, high definition map etc., which require the throughput of each user to increase by nearly 1000 times \cite{6191306}. Fortunately, such content can be pre-stored into the user's storage during periods of low network utilization, thus avoiding network congestion during peak hours. This technology is known as \emph{caching} \cite{6495773}, \cite{6871674}. The caching process is typically divided into two phases \cite{6620392}. The \emph{placement phase} happens during the off-peak hours, where the server fills the users’ caches before the users request any content, while the \emph{delivery phase} represents the transmission stage of the server when the users reveal their demands during peak hours. Caching technology has developed rapidly in recent years, and it is currently considered as one of the effective solutions to relieve the load pressure of wireless networks. 

In traditional caching, the users cache the most likely requested contents, and the server transmits the uncached portions of the files requested by the users. 
Both the cached contents of the users and the transmitted signal of the server are uncoded. 
Contrary to traditional caching, Maddah-Ali and Niesen proposed an idea \cite{6620392} of combining coded multi-casting and device caching to satisfy multiple uni-cast demands simultaneously through coded multi-cast transmissions, which is known as \emph{coded caching}. 
The coded caching problem allows both coded cache contents of the users and coded transmission from the server. 
The goal is to design a caching and delivery scheme such that the worst-case delivery rate is the smallest, where ``worst-case'' refers to the largest delivery rate among all possible request demands of the users.  When the optimal caching and delivery scheme that achieves the smallest worst-case delivery rate can be identified for any cache size of the users, the optimal rate-memory tradeoff is found for the system. 
If each user directly stores a subset of the files' bits without coding, then the cache placement scheme is called \emph{uncoded}, otherwise it is called \emph{coded}. The coded caching problem studied in \cite{6620392} is of a \emph{centralized} nature, where it is assumed that the set of users present during the placement phase will each request a file at the beginning of the delivery phase. \emph{Decentralized coded caching} has been studied in \cite{6807823}, which considers the possibility that some users may leave or turn off during the delivery phase, and studies less coordinated caching strategies.


To further reduce the traffic load of the server at peak hours, Ji \emph{et al}. \cite{7342961} propose a framework for device-to-device (D2D) coded caching. During the placement phase, similar to coded caching \cite{6620392},  the server fills the users’ caches before the users request any content. During the delivery phase, when each user makes a request for a file, the server is inactive and it is up to the users to transmit signals among themselves so that each user can decode its requested file based on the transmitted signals of the other users and its local cache content. For the centralized D2D coded caching problem, \cite{7342961} used the caching strategy of \cite[Algorithm 1]{6620392}, which is uncoded, and devised a novel delivery scheme fit for the D2D scenario. Furthermore, a widely recognized D2D caching converse was proposed in \cite{7342961}, and it has been shown that the proposed D2D caching and delivery scheme is order optimal within a constant factor when the memory size is large. However, the optimal caching and delivery scheme and the corresponding optimal rate-memory tradeoff for the centralized D2D coded caching problem remain open. 

The optimal caching and delivery scheme for the centralized D2D coded caching problem was characterized in \cite{8830435} under the assumption that the cache placement and delivery are constrained to be \emph{uncoded} and \emph{one-shot}, respectively. One-shot delivery schemes satisfy the condition that each user can decode any bit of its requested file from its own cache and the transmitted signal from at most one user. 
It has been shown in \cite{8830435} that one-shot delivery schemes are optimal within a factor of $2$ under the constraint of uncoded cache placement, and optimal within a factor of $4$ without the constraint of uncoded cache placement.


In addition to \cite{7342961} and \cite{8830435}, there are many other researches for the D2D coded caching problem, such as distinct cache sizes \cite{8977539}, private caching  \cite{9770800},  private caching with a trusted server \cite{9149038,9174279}, secure coded caching \cite{8832245}, secure delivery \cite{7247223}, finite file packetizations \cite{9133151}, wireless multi-hop D2D networks \cite{8007070,9298816}, partially cooperative D2D communication networks \cite{8255121}\cite{9834579}, placement delivery array (PDA)-based design \cite{8620232} and so on. 
Among these papers, to the best of our knowledge, only \cite{7342961} studies the fundamental limits of centralized D2D coded caching allowing coded placement since the nature of the problem is complex and therefore difficult to solve. 
More specifically, the converse result given in \cite{7342961} characterizes the performance of schemes allowing coded cache placement, while the achievability scheme proposed in \cite{7342961} employs uncoded cache placement. Furthermore, \cite{7342961} shows that the proposed scheme is order optimal within a constant factor when the number of users is less than the number of files and the memory size of the users is not very small.

The D2D coded caching problems discussed above assume that all users will make a  file request at the beginning of the delivery phase. However, in practice, this may not always be true. It is possible that some users do not request any files, or they request a file at a much later time than the other users, for example, after the delivery phase of the other uses have been completed. Hence, investigating the D2D coded caching problem where the number of requesters is less than the number of total users is of practical interest. Note that this is different from decentralized D2D coded caching \cite{7342961,8832245,9133151,8904142}, and coded caching with offline users \cite{ArXivPaper,9834866,8830435}, because the users who do not request are still present in the delivery phase and will help with the transmission. 

\subsection{Main Contributions}
In this paper, noting that the number of requesters may be smaller than the number of users, we propose a new D2D coded caching model, called \emph{the 3-user D2D coded caching with two random requesters and one sender (2RR1S)}, where during the delivery phase, any two out of the three users will request a file while the user that is not requesting any files is the designated sender. 
%
The main contributions of the paper can be summarized as follows.

\noindent
1) 
For the \emph{the 3-user D2D coded caching with 2RR1S}, we characterize the fundamental performance limits, i.e., the optimal rate-memory tradeoff, for any number of files $N$. We also find the optimal caching and delivery scheme, called the \emph{2RR1S scheme}, which requires coded cache placement.\\
\noindent
2) To illustrate the usefulness of the proposed model and the 2RR1S scheme, we adapt the 2RR1S scheme to three D2D coded caching scenarios:\\
a) The first scenario is  the 3-user D2D coded caching problem \cite{7342961}. 
 Using the 2RR1S scheme as a baseline scheme, we propose a new coding and delivery scheme, named \emph{the rotated 2RR1S scheme}, which employs coded cache placement. 
By characterizing the optimal rate-memory tradeoff for the 3-user D2D coded caching when $N=2$, which was previously unknown, we show that the rotated 2RR1S scheme proposed is in fact optimal for the 3-user D2D coded caching problem when $N=2$ and the cache size is medium. The benefit comes from \emph{coded} cache placement which is missing from existing D2D coded caching schemes \cite{7342961,8830435}.\\ 
\noindent
b) The second scenario is the case where at the beginning of the delivery phase, each user makes a file request randomly and independently with the same probability $p$, and all three users participate in sending the delivery signals to satisfy the users' requests.  We call this problem the \emph{request-random D2D coded caching}. 
We adapt the 2RR1S scheme to this setting and show that its performance is better than existing D2D coded caching schemes for medium to large cache size.\\
\noindent
c) The third scenario extends the 3-user D2D coded caching with 2RR1S to the \emph{K-user D2D coded caching with $K-s$ random requesters and $s$ senders} problem. We obtain an achievability result for this problem, which is inspired by the 2RR1S scheme and employs coded cache placement.

\section{System Model}\label{System Model}
We introduce a new model in this paper, called the \emph{3-user D2D coded caching with two random requesters and one sender (2RR1S)}. 
There is a server connected to a database of $N$ independent files, $W_1,...,W_N$, and each file consists of $F$ bits, i.e.,
\begin{align}
	H(W_1)&=H(W_2)=\cdots=H(W_N)=F, \nonumber\\
	H(W_1,W_2,\cdots, W_N)&=H(W_1)+H(W_2)+\cdots+H(W_N). \nonumber
\end{align}
There are $K=3$ users in the system, each with a cache of size $MF$ bits, $M\leq N$. The system operates in two phases. In the placement phase, each user's cache is filled with a function of the $N$ files, where we denote the content in the cache of User $k$ as $Z_k$, $k=1,2,3$. 
%
In the delivery phase, any 2 out of the 3 users will make a file request, and the file requests are known to all $3$ users. The user who does not make the file request will send a signal $X^D$, where $D$ denotes the request triple. The signal $X^D$ is received correctly by the two users with file requests, and it is required that each of these two users can decode its requested file using the signal received and its own cache content. We say that the request vector $D=(0,d_2,d_3)$ when Users 2 and 3 request Files $W_{d_2}$ and $W_{d_3}$, respectively, and User 1 does not request anything and is the designated sender. Similarly, the request vector $D$ can take the values of $D=(d_1,0,d_3)$ and $D=(d_1,d_2,0)$, $d_1,d_2,d_3 \in [N]$. Note that we use $[N]$ to represent the set $\{1,2,\cdots,N\}$ and $W_{[N]}$ to represent the set $\{W_1,W_2,\cdots,W_N\}$.

More specifically, a caching and delivery scheme for this system model consists of

\noindent
1) three caching functions
	\begin{equation*}
	\varphi_k : [2^F]^N\to[2^{M F}], k=1,2,3,
	\end{equation*}
	which maps the $N$ files into the cached contents of the users, denoted by $Z_k=\varphi_k(W_1,\cdots,W_N)$, $k=1,2,3$. Since the cached contents are deterministic functions of the files, we have
\begin{align}
H(Z_1, Z_2, Z_3|W_{[N]})=0.\label{1711}
\end{align}	
	\noindent
	2) $3N^2$ encoding functions
	\begin{equation*}
	\phi^D: [2^{MF}]\to[2^{R^D(M) F}], 
	\end{equation*}
	i.e., the encoding function $\phi^D$ denotes the mapping from the cached content of the sender to the signal sent by the sender, and this mapping is a function of the file requests of the other two users. We use $X^{(0,d_2,d_3)}$ to denote the signal sent by User 1, when Users 2 and 3 are requesting files $W_{d_2}$ and $W_{d_3}$, respectively, i.e., $X^{(0,d_2,d_3)}=\phi^{(0,d_2,d_3)}(Z_1)$. Similarly, we define $X^{(d_1,0,d_3)}$ and $X^{(d_1,d_2,0)}$, $d_1,d_2,d_3 \in [N]$. The signal transmitted by the sender for request vector $D$ consists of $R^D(M)F$ bits, where $M$ is the cache size of the three users. Thus, we have
	\begin{align}
H(X^{(0,d_2,d_3)}|Z_1)=0, \quad H(X^{(d_1,0,d_3)}|Z_2)=0, \quad H(X^{(d_1,d_2,0)}|Z_3)=0. \label{eq:0}
\end{align}

\noindent	
3) $6N^2$ decoding functions
	\begin{equation*}
	\psi^D_k: [2^{MF}]\times[2^{R^D(M)F}]\to[2^F], k\in\{i|\text{ the $i$-th element of } D \neq 0\}, 
	\end{equation*}
	which is the decoding function used at User $k$, when the request vector is $D$.

It is required that the caching and delivery scheme enables correct decoding at the users requesting files, i.e., 
\begin{equation}
\begin{aligned}\label{eq:1}
H(W_{d_2}|Z_2, X^{(0,d_2,d_3)})=0, \qquad &H(W_{d_3}|Z_3, X^{(0,d_2,d_3)})=0,\qquad
H(W_{d_1}|Z_1, X^{(d_1,0,d_3)})=0, \\
 H(W_{d_3}|Z_3, X^{(d_1,0,d_3)})=0,\qquad 
&H(W_{d_1}|Z_1, X^{(d_1,d_2,0)})=0, \qquad H(W_{d_2}|Z_2, X^{(d_1,d_2,0)})=0,
\end{aligned}
\end{equation}
which is called the decodability constraint. 
We see from (\ref{eq:1}) that for the decodability constraint to be satisfied, the contents of the cache of any two users must be able to fully recover all $N$ messages, i.e.,
\begin{equation}\label{eq:2}
H(W_{[N]}|Z_\mathcal{I})=0,\qquad \mathcal{I}=\{1,2\}, \{2,3\}, \{3,1\}.
\end{equation}
which means that we must have $2M\geq N$.

Since in the delivery phase, any two of the three users may make file requests and the delivery needs to be done by the third user through a common link between itself and the two users, we call this problem \emph{the 3-user D2D coded caching with two random requesters and one sender (2RR1S)}. 
The schematic diagram of the system model is shown in Fig.~\ref{fig:1}.
\begin{figure}[htbp]
	\centerline{\includegraphics[width=8cm]{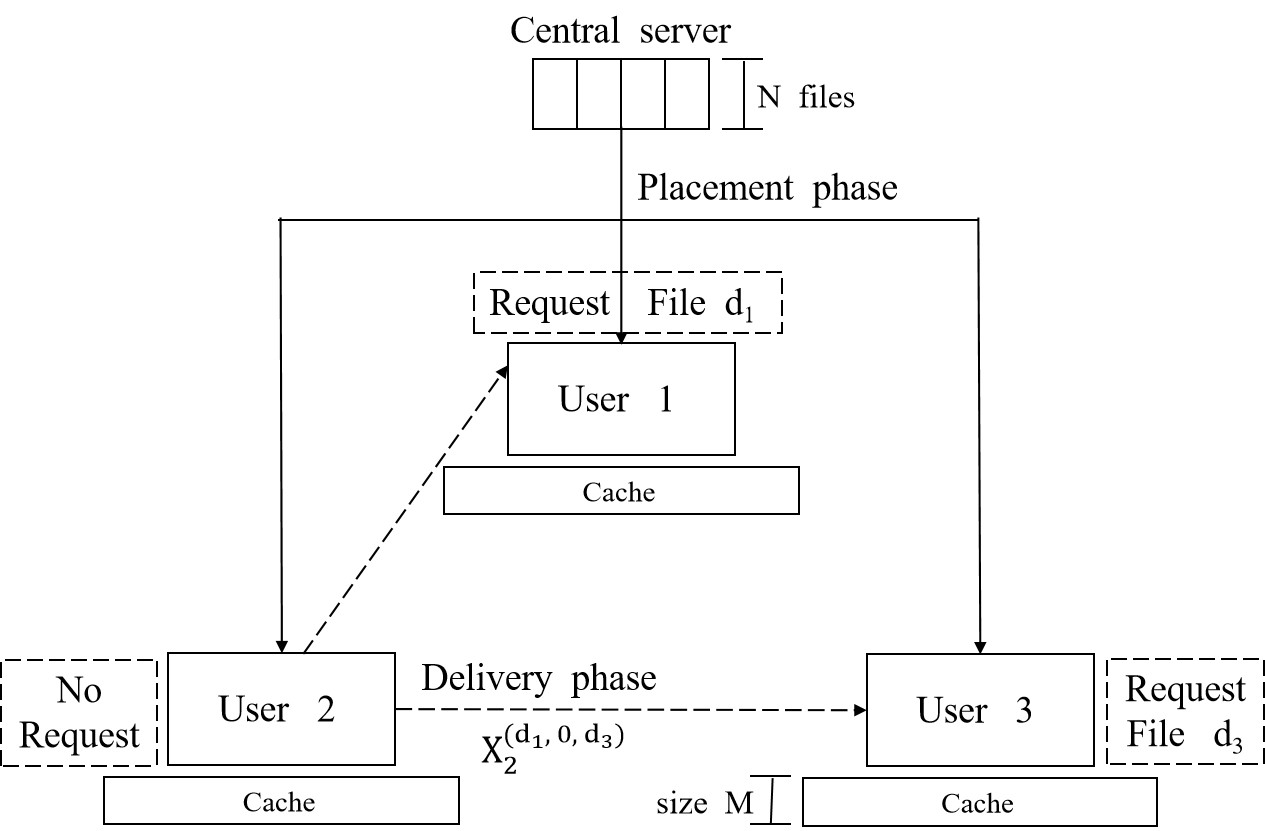}}
	\caption{System model for the 3-user D2D coded caching with two random requesters and one sender. In this realization, User 2 does not request and therefore is the designated sender. Solid and dotted lines indicate the placement and delivery phases, respectively.}
	\label{fig:1}
\end{figure}

For a given caching and delivery scheme that satisfies the decodability constraint (\ref{eq:1}), the performance metric of interest is the worst-case delivery rate, i.e., $R(M) \triangleq \max_D R^D(M)$, where $R^D(M) F$ is the number of symbols transmitted to satisfy  demand $D$. The minimum achievable worst-case rate is given by
\begin{equation*}
R^*(M) \stackrel{\triangle}{=}\inf R(M),
\end{equation*}
where the infimum is taken over all possible caching and delivery schemes that satisfy (\ref{1711}), (\ref{eq:0}) and (\ref{eq:1}). 

To simplify notation, we drop the normalization measure $F$ in the rest of the paper, where the value of $H(W_i)$ is normalized as ``1'', $\forall i$ \cite{8733056}.

\subsection{Symmetric schemes} \label{220919a}
We observe that the characteristic of symmetry \cite[Section 3]{e20080603} applies to 
the problem of the 3-user D2D coded caching with 2RR1S. 
%
More specifically, let $\bar{\pi}(\cdot)$ be a permutation function on the user index set $\{1,2,3\}$, and denote its inverse function as $\bar{\pi}^{-1}(\cdot)$. Further let $\mathcal{Z} \subseteq \{Z_1,Z_2,Z_3\}$, $\mathcal{X} \subseteq \{X^{(0,d_2,d_3)},X^{(d_1,0,d_3)},X^{(d_1,d_2,0)}| d_1, d_2,d_3 \in [N]\}$. The mapping $\bar{\pi}(\mathcal{Z})$ denotes $\{Z_{\bar{\pi}(k)}| Z_k\in\mathcal{Z}\}$ and the mapping $\bar{\pi}(\mathcal{X})$ denotes $\left\{X^{\left(d_{\bar{\pi}^{-1}(1)},d_{\bar{\pi}^{-1}(2)},d_{\bar{\pi}^{-1}(3)} \right)}| \break X^{(d_1,d_2,d_3)}\in\mathcal{X}\right\}$.
\emph{User-index-symmetric schemes} \cite[Section 3]{e20080603} are defined as follows. 
\begin{Def}\label{Definition 3}
	\normalfont
	A caching and delivery scheme is called \emph{user-index-symmetric} if for any permutation function $\bar{\pi}(\cdot)$, any subset of files $\mathcal{W}$, any subset of caches $\mathcal{Z}$, and any subset of transmitted signals $\mathcal{X}$, we have the following relation:
	\begin{equation}
		H(\mathcal{W},\mathcal{Z},\mathcal{X})=H(\mathcal{W},\bar{\pi}(\mathcal{Z}),\bar{\pi}(\mathcal{X})). \nonumber
	\end{equation}
\end{Def}

For example, consider the permutation function $\bar{\pi}(1)=2,\bar{\pi}(2)=3,\bar{\pi}(3)=1$. For a user-index-symmetric scheme for the 3-user D2D coded caching with 2RR1S, the entropy $H(W_1,Z_1,\break X^{(1,0,2)})$ under the permutation $\bar{\pi}$ is equal to $H(W_1,Z_2,X^{(2,1,0)})$.

Similar to the definition of user-index-symmetric schemes, \emph{file-index-symmetric schemes} \cite[Section 3]{e20080603} may be defined. More specifically, let $\hat{\pi}(\cdot)$ be a permutation function on the file index set $\{1,2,\cdots,N\}$, and further let $\hat{\pi}(0)=0$. Let $\mathcal{W} \subseteq \{W_1,W_2,\cdots, W_N\}$, and by representing the mapping $\hat{\pi}(\mathcal{W})$ as $\{W_{\hat{\pi}(n)}| W_n\in\mathcal{W}\}$ and the mapping $\hat{\pi}(\mathcal{X})$ as 
$ \{X^{(\hat{\pi}(d_1),\hat{\pi}(d_2),\hat{\pi}(d_3))}| X^{(d_1,d_2,d_3)}\in\mathcal{X}\}$, \emph{File-index-symmetric schemes}  \cite[Section 3]{e20080603} are defined as follows. 
\begin{Def}\label{Definition 4}
	\normalfont
	A caching and delivery scheme is called \emph{file-index-symmetric} if for any permutation function $\hat{\pi}(\cdot)$, any subset of files $\mathcal{W}$, any subset of caches $\mathcal{Z}$, and any subset of transmitted signals $\mathcal{X}$, we have the following relation:
	\begin{equation}
		H(\mathcal{W},\mathcal{Z},\mathcal{X})=H(\hat{\pi}(\mathcal{W}),\mathcal{Z},\hat{\pi}(\mathcal{X})). \nonumber
	\end{equation}
\end{Def}

For example, for the 3-user D2D coded caching with 2RR1S, if User 2 does not request, the permutation function $\hat{\pi}(0)=0,\hat{\pi}(1)=2,\hat{\pi}(2)=3,\hat{\pi}(3)=1$, will map $W_1$ to $\hat{\pi}(W_1)=W_2$, but map $X^{(1,0,2)}$ to $X^{(2,0,3)}$. For such a file-index-symmetric scheme, the entropy $H(W_1,Z_1,X^{(1,0,2)})$ under the permutation is equal to $H(W_2,Z_1,X^{(2,0,3)})$.

The characteristics of symmetry \cite[Section 3]{e20080603} applies to the 3-user D2D coded caching with 2RR1S problem. More specifically, we have the following lemma for the 3-user D2D coded caching with 2RR1S, whose proof is similar to that of \cite[Proposition 3.1]{e20080603}.
\begin{Lem}\label{Lemma0}
	For the 3-user coded caching problem with 2RR1S, for any caching and delivery scheme, there exists a caching and delivery scheme which is both user-index-symmetric and file-index-symmetric with an equal or smaller worst-case delivery rate.
\end{Lem}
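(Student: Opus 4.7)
The plan is to use a file-splitting/memory-sharing argument over all user-index and file-index permutations, in the spirit of \cite[Proposition 3.1]{e20080603}. Given any valid scheme with worst-case rate $R(M)$, I would split each file $W_n$ into $6\cdot N!$ sub-files of equal length, one for every pair $(\bar\pi,\hat\pi)$ with $\bar\pi$ a permutation on $\{1,2,3\}$ and $\hat\pi$ a permutation on $[N]$ (extended by the convention $\hat\pi(0)=0$). On the sub-file block indexed by $(\bar\pi,\hat\pi)$, I would run the original scheme but with all user labels permuted by $\bar\pi$ and all file labels permuted by $\hat\pi$.

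Concretely, on block $(\bar\pi,\hat\pi)$ the cache of the new user $k$ is set to the content that the original caching function $\varphi_{\bar\pi^{-1}(k)}$ produces when fed the relabeled file tuple $(W_{\hat\pi^{-1}(1)},\ldots,W_{\hat\pi^{-1}(N)})$. For a new-scheme demand $D$ (whose zero entry designates the sender), the relabeled demand seen by the original scheme on this block is obtained by applying the inverse user and file permutations to $D$; the new sender then transmits the signal that the corresponding original $\phi$-function would produce from its block-cache, and the two requesters decode their sub-file blocks through the relabeled original decoders. Decodability on each block is inherited directly from (\ref{eq:1}) after relabeling, and concatenating the decoded sub-files recovers each requested file in full.

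For the rate, the number of bits transmitted under new-scheme demand $D$ equals $\frac{F}{6\cdot N!}\sum_{(\bar\pi,\hat\pi)} R^{D'(\bar\pi,\hat\pi,D)}(M)$, where $D'(\bar\pi,\hat\pi,D)$ denotes the relabeled demand. Since every summand is bounded by $R(M)$, the new scheme's worst-case rate is at most $R(M)$. Symmetry then holds by construction: applying a user permutation $\bar\sigma$ to the new scheme amounts to the reindexing $\bar\pi\mapsto\bar\sigma^{-1}\bar\pi$ of the outer family of blocks, which leaves the family unchanged and therefore preserves every joint entropy of the form $H(\mathcal{W},\mathcal{Z},\mathcal{X})$; the analogous statement for $\hat\pi\mapsto\hat\sigma^{-1}\hat\pi$ yields file-index symmetry.

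The main obstacle is the careful bookkeeping needed to ensure that the simultaneous user and file relabelings of the demand vector are consistent with the sender role carried by the zero entry of $D$. Because $\hat\pi(0)=0$ by convention, the zero position of the demand transforms covariantly under $\bar\pi$ alone, and the designated sender on block $(\bar\pi,\hat\pi)$ is always the user whose index $\bar\pi$ maps to the zero position of the original demand; this is the only genuine novelty beyond the classical argument in \cite[Proposition 3.1]{e20080603}, and once this correspondence is pinned down the rest of the verification is routine.
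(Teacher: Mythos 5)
Your proposal is correct and follows essentially the same route as the paper's proof: split each file into $K!\,N!=6\cdot N!$ equal segments, run the base scheme under a distinct joint user-and-file permutation on each segment, and space-share, so that the per-demand rate becomes an average of base-scheme rates each bounded by the worst case $R(M)$ while symmetry holds by construction. Your additional bookkeeping about how the zero entry of the demand vector transforms under the user permutation is a detail the paper leaves implicit, but it does not change the argument.
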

\begin{proof}
Consider a base code, which may not be user-index-symmetric or file-index-symmetric. For demand $D$, the delivery rate is $R^{D}(M)$. Its worst-case delivery rate $R(M)=\max_{D}R^{D}(M)$. We form a new code as follows. Split each file into $N!K!$ segments, each having the same size. For each file segment, use the base code with a different user-index and file-index joint permutation to achieve $N!K!$ relevant codes, whose rate is $R^{D}_{\bar{\pi}\hat{\pi}}(M)$ for demand $D$. Then, form a new code by the space sharing of all the $N!K!$ codes. Due to the symmetry of construction, the new code is indeed user-index-symmetric and file-index symmetric. Furthermore, the delivery rate of the new code for demand $D$ is 
\begin{align}
	\bar{R}^D(M)=\sum_{ \pi, \hat{\pi}} \frac{1}{N! K!} R^{D}_{\bar{\pi}\hat{\pi}}(M) \overset{(a)}{\leq} \sum_{\pi, \hat{\pi}} \frac{1}{N! K!} R(M)=R(M) \label{0424a}
\end{align}
where $(a)$ follows from the fact that $R(M)$ is the worst-case delivery rate over all possible demand $D$. From (\ref{0424a}), taking the maximum over $D$ on both sides, we obtain that the worst-case delivery rate of the new code $\bar{R}(M)=\max_D \bar{R}^D(M)$ is no larger than the worst-case delivery rate of the base code, i.e., $R(M)$, which proves Lemma~\ref{Lemma0}.
\end{proof}

This observation of Lemma~\ref{Lemma0} can be used to simplify the proof of the converse, 
where it is sufficient to consider only caching and delivery schemes that satisfy both user-index symmetry and file-index symmetry.

\section{Main Results on the 3-user D2D coded caching with 2RR1S} \label{SecNan01}
In this paper, we find the optimal rate-memory tradeoff for the proposed 3-user D2D coded caching with 2RR1S for any number of files. It turns out that the rate-memory tradeoff satisfies a uniform formula in the case of more than 4 files, and takes on distinct formulas in the case of 2 files and 3 files. More specifically, we have the following theorem when $M \geq \frac{N}{2}$. Note that when $M< \frac{N}{2}$, the problem is infeasible, i.e., there exists no caching and delivery scheme that can satisfy the decodability constraint (\ref{eq:1}). 
\begin{Theo} \label{MainResult}
For the problem of the 3-user D2D coded caching with 2RR1S where $M \geq \frac{N}{2}$,

\noindent
(1) For $N\geq4$, the worst-case delivery rate $R(M)$
 must satisfy
\begin{equation}
4M+NR(M)\geq3N, \qquad M+NR(M)\geq N, \label{eq:8} 
\end{equation}
where the corner points are $(M,R(M))=(\frac{1}{2}N,1)$, $(\frac{2}{3}N,\frac{1}{3})$, $(N,0)$.
Conversely, there exist caching and delivery schemes for any nonnegative $R(M)$ satisfying (\ref{eq:8}).

\noindent
(2) For $N=2$,  the worst-case delivery rate $R(M)$
must satisfy
\begin{equation}
18M+8R(M)\geq25, \qquad 3M+3R(M)\geq5, \qquad M+2R(M)\geq2\label{eq:6},
\end{equation}
where the corner points are $(M,R(M))=(1,\frac{7}{8})$, $(\frac{7}{6},\frac{1}{2})$, $(\frac{4}{3},\frac{1}{3})$, $(2,0)$. Conversely, there exist caching and delivery schemes for any nonnegative $R(M)$ satisfying (\ref{eq:6}).

\noindent
(3) For $N=3$,  the worst-case delivery rate $R(M)$
must satisfy
\begin{equation}
6M+4R(M)\geq13, \qquad 3M+3R(M)\geq7, \qquad M+3R(M)\geq3\label{eq:7},
\end{equation}
where the corner points are $(M,R(M))=(\frac{3}{2},1)$, $(\frac{11}{6},\frac{1}{2})$, $(2,\frac{1}{3})$, $(3,0)$. Conversely, there exist caching and delivery schemes for any nonnegative $R(M)$ satisfying (\ref{eq:7}).

\end{Theo}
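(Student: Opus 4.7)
The plan is to prove the theorem in two halves—converse and achievability—handling each subcase ($N\geq 4$, $N=3$, $N=2$) in turn. Throughout, I would invoke Lemma~\ref{Lemma0} to restrict attention to schemes that are both user-index-symmetric and file-index-symmetric, so entropies of caches and signals depend only on the pattern of the subset, not the specific indices. A useful preliminary observation is that the decodability constraint \eqref{eq:2} together with $H(Z_1,Z_2,Z_3\mid W_{[N]})=0$ from \eqref{1711} forces $H(Z_i,Z_j)=N$ for every pair and $H(Z_1,Z_2,Z_3)=N$, so each cache is a deterministic function of the other two and $H(W_{[N]}\mid Z_k)=N-M$.

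For the converse, I would first derive the bound $M+NR(M)\geq N$ by a straightforward cut-set argument: fix User~1 as requester and any $d_3\in[N]$; then by \eqref{eq:1} the tuple $\bigl(Z_1,\{X^{(n,0,d_3)}\}_{n\in[N]}\bigr)$ determines $W_{[N]}$, so
\begin{equation*}
N \;=\; H(W_{[N]}) \;\leq\; H(Z_1) + \sum_{n=1}^{N} H(X^{(n,0,d_3)}) \;\leq\; M + NR(M).
\end{equation*}
The tighter bound $4M+NR(M)\geq 3N$ for $N\geq 4$ is the main obstacle. I would attack it by combining two-cache entropies (which equal $N$) with signals originating from several different senders under the joint user/file-index symmetry, pushing submodularity of entropy until the prefactor~$4$ emerges from summing decodability relations across all three sender identities. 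For $N=3$ and $N=2$, the analogous inequalities $6M+4R\geq 13$ and $18M+8R\geq 25$ require finer arguments that also exploit the file-symmetric lower bound $H(W_n\mid Z_k)\geq (N-M)/N$ on individual-file uncertainty, and then take weighted sums of decodability relations over all demand patterns whose integer coefficients reflect the combinatorial structure of the small number of files.

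For achievability, it suffices to exhibit a scheme at each corner point and memory-share in between. The trivial endpoint $(N,0)$ is obtained by caching every file at every user. The minimum-cache point (which is $(N/2,1)$ for $N\geq 4$) is obtained by a coded placement in which any two caches are jointly uniform on $W_{[N]}$ and the sender forwards one combined packet satisfying both requesters. The genuinely new corner $(2N/3, 1/3)$ requires \emph{coded} placement: I would split each file into three equal pieces and store linear combinations across users so that (i) any two caches together determine all files, preserving feasibility \eqref{eq:2}, and (ii) for any demand~$D$ the designated sender can transmit a single length-$F/3$ XOR packet from which both requesters recover their files by cancelling summands known from their caches. The additional small-$N$ corners $(7/6,1/2)$ and $(4/3,1/3)$ for $N=2$, and $(11/6,1/2)$ and $(2,1/3)$ for $N=3$, are achieved by finer subfile partitions (into eighths or twelfths respectively) combined with coded placements of the same flavour.

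The hardest step throughout will be the converses for $4M+NR\geq 3N$ and its small-$N$ analogues, because these are precisely the inequalities that distinguish the optimal coded-placement rate from what uncoded placement alone can attain. Although the symmetry equalities $H(Z_i,Z_j)=N$ and $H(Z_1,Z_2,Z_3)=N$ strongly constrain the cache side, substantial freedom remains in how a signal's entropy couples with its originating cache, and identifying the right joint-entropy expansion and applying submodularity (or a Han-type inequality) with the correct integer weights is where the real work lies.
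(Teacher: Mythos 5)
There are genuine gaps on both sides of the argument. On the converse, your derivation of $M+NR(M)\geq N$ via the cut set $\bigl(Z_1,\{X^{(n,0,d_3)}\}_{n\in[N]}\bigr)$ is correct and in fact more self-contained than the paper, which simply inherits this bound from the known 2-user coded-caching converse. But the bounds that actually pin down the region --- $4M+NR(M)\geq 3N$ for $N\geq 4$ and the analogues $18M+8R\geq 25$, $3M+3R\geq 5$ ($N=2$), $6M+4R\geq 13$, $3M+3R\geq 7$ ($N=3$) --- are left as a plan (``push submodularity until the prefactor $4$ emerges''), not a proof. The paper's chain for $N\geq 4$ closes only because of a specific auxiliary inequality for file-index-symmetric schemes, $NH(Z_1|W_1)\geq (N-1)H(Z_1)$ (cited from \cite[Lemma 4]{8733056}), which converts the residual $H(Z_1|W_1)$ term back into $H(Z_1)\leq M$ with exactly the right coefficient; nothing in your sketch supplies this step or a substitute. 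For $N=2,3$ the paper needs several bespoke entropy chains (its Lemmas~3--6, found by computer-aided search, using Proposition~1) whose integer weights are not ``read off'' from the combinatorics in any evident way, so ``weighted sums of decodability relations'' cannot be accepted as a proof outline. A small additional slip: $H(W_{[N]}|Z_k)=N-H(Z_k)\geq N-M$, with equality only if the cache is full-entropy.

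On achievability there is a concrete error: the corner $\left(\tfrac{2}{3}N,\tfrac{1}{3}\right)$ does \emph{not} require coded placement --- it is achieved by the MAN uncoded symmetric placement (each file split into three subfiles indexed by $2$-subsets of $\{1,2,3\}$) with the designated sender transmitting a single XOR of length $F/3$. The corners that genuinely need coded placement are $\left(\tfrac{1}{2}N,1\right)$ and the extra small-$N$ points $\left(1,\tfrac{7}{8}\right)$, $\left(\tfrac{7}{6},\tfrac{1}{2}\right)$ ($N=2$) and $\left(\tfrac{11}{6},\tfrac{1}{2}\right)$ ($N=3$); you have the roles reversed, and you omit $\left(1,\tfrac{7}{8}\right)$ entirely while listing $\left(\tfrac{4}{3},\tfrac{1}{3}\right)$ and $\left(2,\tfrac{1}{3}\right)$ as new corners when they are instances of the generic $\left(\tfrac{2}{3}N,\tfrac{1}{3}\right)$ point. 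Moreover, the constructions for $\left(\tfrac{7}{6},\tfrac{1}{2}\right)$ and $\left(\tfrac{11}{6},\tfrac{1}{2}\right)$ are not ``finer partitions of the same flavour'': they need an extra MDS-type cached column (e.g.\ $A_2\oplus B_1$ at User~1) together with transmitter preprocessing so that the sender can synthesize XOR packets it does not store directly; this mechanism is the crux of those corner points and is absent from your proposal.
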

The proof of Theorem \ref{MainResult} is given in Section \ref{SecProof}. 
We make the following remarks regarding of the result of Theorem \ref{MainResult}, including comparisons with existing work. 
\begin{Remark}\label{Remark 0}
The performance of the 3-user D2D coded caching with 2RR1S is upper bounded by the optimal performance of the original 2-user coded caching problem where the sender is the server. 
This is because the server knows everything and is more capable than any of the D2D sender nodes. The optimal rate-memory tradeoff for the original coded caching problem with 2 users and $N$ files was found in \cite{e20080603}, and the converse result of $M+NR(M)\geq N$ is proved. From Theorem \ref{MainResult}, we see that when the memory is large enough, i.e., $M\in[\frac{2}{3}N,N]$, $M+NR(M) =N$ is achievable for the 3-user D2D coded caching with 2RR1S, which means that when the memory is large, the random D2D sender node is as capable as the all-knowing server.
\end{Remark}

\begin{Remark}\label{Remark 5}
\normalfont
We find that the proposed optimal scheme employs \emph{coded} cache placement when $M\in[\frac{1}{2}N,\frac{2}{3}N)$, while uncoded  cache placement is sufficient when $M\in[\frac{2}{3}N,N]$. This observation is similar to the result of the traditional coded caching problem in \cite{e20080603} where uncoded placement is sufficient, i.e., optimal, when $M\in[\frac{K-1}{K}N,N]$.
\end{Remark}

\begin{Remark}\label{Remark 4}
\normalfont
The corner point of $\left(\frac{1}{2}N,1\right)$ for $N \geq 2$, the corner points of $\left(1,\frac{7}{8}\right)$, $\left( \frac{7}{6}, \frac{1}{2}\right)$ for $N=2$, and the corner point $\left( \frac{11}{6}, \frac{1}{2}\right)$ for $N=3$ all employ \emph{coded} cache placement. More specifically, to deal with the fact that the identity of the transmitter is unknown at the time of cache placement, the cache content of the users are MDS coded across the three users. Furthermore, in the case of the corner points $\left( \frac{7}{6}, \frac{1}{2}\right)$ for $N=2$ and $\left( \frac{11}{6}, \frac{1}{2}\right)$ for $N=3$, transmitter preprocessing is required, where the designated sender needs to do some computation to obtain the transmitted signal. 
 We also show in the achievability proof that the two corner points $\left( \frac{7}{6}, \frac{1}{2}\right)$ for $N=2$ and $\left( \frac{11}{6}, \frac{1}{2}\right)$ for $N=3$ belong to the more general set of achievable corner points $\left(\frac{4N-1}{6},\frac{1}{2}\right)$ for $N \geq 2$. These corner points are optimal for $N=2,3$, but for $N \geq 4$, they are sub-optimal. 

\end{Remark}

\begin{Remark}\label{Remark 2}
\normalfont
In the problem considered, we have three users, i.e., $K=3$. From Theorem~1, we see that the number of corner points is different for the case of $N>K$ and the case of $N\leq K$. This is because the corner point $(\frac{4N-1}{6},\frac{1}{2})$ is below the converse line $4M+NR(M)\geq 3N$ only when $N\leq K$, which means that $4$ corner points exist when $N\leq K$ and $3$ corner points  exist when $N>K$. Note that in the traditional coded caching problem, e.g., \cite{8226776}, and other D2D coded caching problems where a tight converse exists, e.g.,  \cite{8830435}, the number of corner points are different for the case of $N\geq K$ and the case of $N<K$. 
\end{Remark}

\section{Proof of Theorem \ref{MainResult}} \label{SecProof}
In this section, we prove the converse and achievability for Theorem \ref{MainResult}. The proof is different for $N \geq 4$, $N=3$ and $N=2$, where $N$ is the number of files.  
\subsection{Achievability}\label{Ach}
\subsubsection{ $N \geq 4$}\label{AchN>4}
We show that the three corner points $(\frac{1}{2}N,1)$, $(\frac{2}{3}N,\frac{1}{3})$ and $(N,0)$ are achievable as long as $N \geq 2$. It will be shown via the converse proof that the three corner points $(\frac{1}{2}N,1)$, $(\frac{2}{3}N,\frac{1}{3})$ and $(N,0)$ are optimal only when $N \geq 4$ is satisfied. 

The corner point of $(N,0)$ is trivial as all D2D nodes have enough cache to store all messages and therefore, the delivery rate is zero. As for the corner point of $(\frac{1}{2}N,1)$, its achievability scheme is as follows: split all files into two subfiles of equal sizes, denoted as $W_n=(W_{n,1},W_{n,2})^N_{n=1}$. In the cache placement phase, the cache content of the three users are given as
\begin{align*}
Z_1=(W_{n,1}\oplus W_{n,2})^N_{n=1}, 
Z_2=(W_{n,1})^N_{n=1}, 
Z_3=(W_{n,2})^N_{n=1}. 
\end{align*}
In the delivery phase, we have
\begin{align*}
&X^{(0,d_2,d_3)}=\{ W_{d_2,1} \oplus W_{d_2,2}, W_{d_3,1} \oplus W_{d_3,2}\}, \nonumber\\
&X^{(d_1,0,d_3)}=\{W_{d_1,1}, W_{d_3,1}\},
X^{(d_1,d_2,0)}=\{W_{d_1,2}, W_{d_2,2}\}, 
\end{align*}
and it is easy to check that each user's demand can be correctly decoded. Thus, the delivery rate of $R(M)=1$ is achieved for cache size $M=\frac{1}{2}N$. 
As can be seen, coded caching is necessary to achieve the corner point of $(\frac{1}{2}N,1)$. 

Lastly, we provide the achievability scheme for the corner point of $\left(\frac{2}{3}N, \frac{1}{3} \right)$. The caching scheme is the same as that of the Maddah-Ali Niesen (MAN) uncoded symmetric placement in \cite[Algorithm 1]{6620392}, more specifically, all files are split into three subfiles of equal sizes, denoted as $W_n=(W_{n,\left\{1,2\right\}},W_{n,\left\{1,3\right\}},W_{n,\left\{2,3\right\}})^N_{n=1}$, and the cache placement is
\begin{align*}
Z_1=(W_{n,\left\{1,2\right\}}, W_{n,\left\{1,3\right\}})^N_{n=1},
Z_2=(W_{n,\left\{1,2\right\}}, W_{n,\left\{2,3\right\}})^N_{n=1},
Z_3=(W_{n,\left\{1,3\right\}}, W_{n,\left\{2,3\right\}})^N_{n=1}.
\end{align*}
In the delivery phase, we have
\begin{align*}
&X^{(0,d_2,d_3)}=\{ W_{d_2,\{1,3\}} \oplus W_{d_3,\{1,2\}}\}, 
X^{(d_1,0,d_3)}=\{W_{d_1,\{2,3\}} \oplus W_{d_3,\{1,2\}}\}, \nonumber\\
&X^{(d_1,d_2,0)}=\{W_{d_1,\{2,3\}} \oplus W_{d_2,\{1,3\}}\}, 
\end{align*}
and it is easy to check that each user's demand can be correctly decoded. Thus, the delivery rate of $R(M)=\frac{1}{3}$ is achieved for cache size $M=\frac{2}{3}N$. 
To achieve the corner point of $\left(\frac{2}{3}N, \frac{1}{3} \right)$, coded cache placement is not necessary, and the MAN symmetric uncoded placement scheme \cite{6620392} is used. 
Finally, memory sharing between the corner points $(\frac{1}{2}N,1)$, $(\frac{2}{3}N,\frac{1}{3})$ and $(N,0)$ proves that (\ref{eq:8}) in Theorem \ref{MainResult} is achievable.

\subsubsection{$N=2$}\label{AchN=2}
We will prove that the four corner points $(1,\frac{7}{8})$, $(\frac{7}{6},\frac{1}{2})$, $(\frac{4}{3},\frac{1}{3})$, $(2,0)$ are achievable. First, note the fact that the corner points $(\frac{4}{3},\frac{1}{3})$, $(2,0)$ are achievable has been proved in Section \ref{AchN>4} which works for $N= 2$. So in the following, we will prove that the remaining two points are achievable. 

To achieve the corner point of $(1,\frac{7}{8})$, 
we split both files into $8$ subfiles, i.e., $W_1=(A_n)^8_{n=1}$ and $W_2=(B_n)^8_{n=1}$. Coded cache placement is employed as shown in Table~\ref{table 1}, where the cache size is indeed $1$.
%
%
\begin{table}[htb]
\centering
\caption{ Cache placement for achieving the corner point $(1,\frac{7}{8})$ when $N=2$}
\label{table 1}
\begin{tabular}{|c|c c c c c c c c|}
  \hline
 $Z_1$: &$A_1\oplus B_2$&$A_2\oplus B_1$&$B_4$&$A_4$&$A_5$&$B_5$&$A_7\oplus A_8$&$B_7\oplus B_8$\\
  \hline
  $Z_2$: &$A_1$&$B_1$&$A_3\oplus B_4$&$A_4\oplus B_3$&$B_6$&$A_6$&$A_7$&$B_7$\\
  \hline
$Z_3$: &$B_2$&$A_2$&$A_3$&$B_3$&$A_5\oplus B_6$&$A_6\oplus B_5$&$A_8$&$B_8$\\
  \hline
\end{tabular}
\end{table}

For the delivery phase, the transmitted signal depends on the request vector as: 
\begin{align}
	X^{(0,d_2,d_3)}&=\{W_{d_2,2}\oplus W_{d_3,1}, A_4, B_4, A_5, B_5, A_7\oplus A_8, B_7\oplus B_8\}, \quad d_2\neq d_3, 
	\nonumber\\
	X^{(0,d_2,d_3)}&=\{W_{d_2,7}\oplus W_{d_2,8}, A_4, B_4, A_5, B_5, A_1\oplus B_2, A_2\oplus B_1\}, \quad d_2= d_3, 
	\label{0421a}\\
	X^{(d_1,0,d_3)}&=\{W_{d_1,3}\oplus W_{d_3,4}, A_1, B_1, A_6, B_6, A_7, B_7\}, \quad d_1 \neq d_3, 
	\nonumber\\
	X^{(d_1,0,d_3)}&=\{W_{d_1,7}, A_1, B_1, A_6, B_6, A_3\oplus B_4, A_4\oplus B_3\}, d_1=d_3, 
	\nonumber\\
	X^{(d_1,d_2,0)}&=\{W_{d_1,6}\oplus W_{d_2,5}, A_2, B_2, A_3, B_3, A_8, B_8\}, \quad d_1 \neq d_2,
	\nonumber\\
	X^{(d_1,d_2,0)}&=\{W_{d_1,8}, A_2, B_2, A_3, B_3, A_5\oplus B_6, A_6\oplus B_5\}, \quad d_1=d_2,
	\nonumber
\end{align}
where $d_1, d_2, d_3\in\{1,2\}$. As can be seen, when the two random requesters request different files, both the 7-th and the 8-th elements of the designated sender's cache are transmitted, but one of the first six elements do not need to be transmitted, resulting in a delivery rate of $\frac{7}{8}$. On the other hand, when the two random requesters request the same file, the first six elements of the designated sender's cache are transmitted, and only one of the 7-th or 8-th element is transmitted, i.e., if both random requesters request file $W_1$, then the 7-th element of the designated sender's cache is transmitted, and if both random requesters request file $W_2$, then the 8-th element of the designated sender's cache is transmitted. This again results in a delivery rate of $\frac{7}{8}$.
It is easy to check that the decoding constraint is satisfied.
Note that even though the caching and delivery scheme looks asymmetrical in user index in terms of expressions, it is in fact user-index symmetrical in terms of entropy.


To achieve the corner point of $(\frac{7}{6},\frac{1}{2})$, 
we split both files into $6$ subfiles, i.e., $W_1=(A_n)^6_{n=1}$ and $W_2=(B_n)^6_{n=1}$. The caching scheme is given in Table~\ref{table 2}, where the cache size is indeed $\frac{7}{6}$.

\begin{table}[!htbp]
\centering
\caption{Cache placement for achieving the corner point $(\frac{7}{6},\frac{1}{2})$ when $N=2$}
\label{table 2}
\begin{tabular}{|c|c c c c c c c|}
  \hline
  $Z_1$: &$A_1\oplus A_2$&$B_1\oplus B_2$&$A_4$&$B_4$&$A_5$&$B_5$&$A_2\oplus B_1$\\
  \hline
  $Z_2$: &$A_1$&$B_1$&$A_3\oplus A_4$&$B_3\oplus B_4$&$A_6$&$B_6$&$A_4\oplus B_3$\\
  \hline
  $Z_3$: &$A_2$&$B_2$&$A_3$&$B_3$&$A_5\oplus A_6$&$B_5\oplus B_6$&$A_6\oplus B_5$\\
  \hline
\end{tabular}
\end{table}

For the delivery phase, the transmitted signal depends on the request vector as
\begin{align}
	X^{(0,d_2,d_3)}&=\{W_{d_2,2}\oplus W_{d_3,1}, W_{d_3,4}, W_{d_2,5}\},
	X^{(d_1,0,d_3)}=\{W_{d_1,3}\oplus W_{d_3,4}, W_{d_3,1}, W_{d_1,6}\},
	\label{delivery1/2}\\
	X^{(d_1,d_2,0)}&=\{W_{d_1,6}\oplus W_{d_2,5}, W_{d_2,2}, W_{d_1,3}\},
	\label{delivery1/22}
\end{align}
 where $d_1, d_2, d_3\in\{1,2\}$. 
 Note that the first six columns are MDS coded across the three users, so that any two of them can recover the entire segment, for example, the first column can recover the segment $(A_1, A_2)$ using the cache of any two users, and the second column can recover the segment $(B_1,B_2)$ using the cache of any two users. The last column is a coded version that enables User 1 to transmit any pairwise linear combination of $(A_1,A_2,B_1,B_2)$, User 2 to transmit any pairwise linear combination of $(A_3,A_4,B_3,B_4)$, and User 3 to transmit any pairwise linear combination of $(A_5,A_6,B_5,B_6)$. This offers flexibility in the delivery signal of the designated sender based on the demand of the other two users, i.e., 
  transmitter preprocessing is needed. For example, when the request vector is $D=(0,2,1)$, i.e., User 1 does not request and is the designated sender, Users 2 and 3 requests Files $W_2$ and $W_1$, respectively, the transmitted signal is $(B_2 \oplus A_1, A_4, B_5)$. As can be seen, $B_2 \oplus A_1$ is not directly stored in the cache of User 1. Preprocessing at User 1 as
$
 B_2 \oplus A_1=(A_1\oplus A_2)\oplus(B_1\oplus B_2)\oplus(A_2\oplus B_1),
$
is needed before $B_2 \oplus A_1$ is transmitted. More generally, when User $k$ needs to transmit $A_{2k-1}\oplus B_{2k}$ which is not cached, User $k$ does the following preprocessing:
$
	A_{2k-1}\oplus B_{2k}=(A_{2k-1}\oplus A_{2k})\oplus(B_{2k-1}\oplus B_{2k})\oplus(A_{2k}\oplus B_{2k-1}),  k=1,2,3. 
$
The fact that the decoding constraint is satisfied can be checked. We mention here that sometimes several modulo-sums need to be computed to decode, rather than just one modulo-sum. For example, in the case where the request vector is $D=(0,2,1)$ as discussed above, upon receiving $(B_2 \oplus A_1, A_4, B_5)$, which is the transmitted signal of User 1, User 2 decodes $B_3$ and $B_4$ by first computing $(A_3 \oplus A_4) \oplus A_4$ to obtain $A_3$, and then decode $B_4$ as $(A_3 \oplus B_4) \oplus A_3$, and finally, decode $B_3$ as $(B_3 \oplus B_4) \oplus B_4$. Note that the transmission rate of the proposed scheme is $\frac{3}{6}=\frac{1}{2}$. 
%
%
%
%
%
Memory sharing between the corner points $(1,\frac{7}{8})$, $(\frac{7}{6},\frac{1}{2})$, $(\frac{4}{3},\frac{1}{3})$ and $(2,0)$ proves that (\ref{eq:6}) in Theorem \ref{MainResult} is achievable.
\subsubsection{$N=3$}\label{AchN=3}
In the case of $N=3$, the achievability of corner points $(\frac{3}{2},1)$, $(2,\frac{1}{3})$ and $(3,0)$ has been proven in Section \ref{AchN>4} which works for $N=3$. Thus, we only need to prove the achievability of the corner point $(\frac{11}{6},\frac{1}{2})$. 

We split the three files $W_1, W_2, W_3$ into 6 subfiles, which can be represented as $W_1=(A_n)^6_{n=1}$, $W_2=(B_n)^6_{n=1}$ and $W_3=(C_n)^6_{n=1}$. Coded cache placement is employed as shown in Table~\ref{table 3}, where the cache size is indeed $\frac{11}{6}$. 

\begin{table}[htb]
\centering
\caption{ Cache placement for achieving the corner point $(\frac{11}{6},\frac{1}{2})$ when $N=3$}
\label{table 3}
\begin{tabular}{|c|c c c c c c|}
  \hline
  $Z_1$: &$A_1\oplus A_2$&$B_1\oplus B_2$&$C_1 \oplus C_2$&$A_4$&$B_4$&$C_4$\\
  \hline
  $Z_2$: &$A_1$&$B_1$&$C_1$&$A_3\oplus A_4$&$B_3\oplus B_4$&$C_3 \oplus C_4$ \\
  \hline
  $Z_3$: &$A_2$&$B_2$&$C_2$&$A_3$&$B_3$&$C_3$ \\
  \hline
  \hline
   $Z_1$: &$A_5$&$B_5$& $C_5$ &$A_2\oplus B_1$  & $B_2 \oplus C_1$ & \\
  \hline
  $Z_2$:  & $A_6$&$B_6$&$C_6$ & $A_4\oplus B_3$& $B_4+C_3$&\\
  \hline
  $Z_3$: &$A_5\oplus A_6$& $B_5\oplus B_6$&$C_5 \oplus C_6$ & $A_6\oplus B_5$ & $B_6 \oplus C_5$&\\
  \hline
\end{tabular}
\end{table}

For the delivery phase, the transmitted signal depends on the request vector as
\begin{align*}
X^{(0,d_2,d_3)}=\{W_{d_2,2}\oplus W_{d_3,1}, W_{d_3,4}, W_{d_2,5}\},\nonumber\\
X^{(d_1,0,d_3)}=\{W_{d_1,3}\oplus W_{d_3,4}, W_{d_3,1}, W_{d_1,6}\},\nonumber\\
X^{(d_1,d_2,0)}=\{W_{d_1,6}\oplus W_{d_2,5}, W_{d_2,2}, W_{d_1,3}\}.
\end{align*}
It can be checked that the proposed scheme has no decoding error and the transmission rate is $\frac{3}{6}=\frac{1}{2}$. 

The scheme above is a generalization of the scheme that achieves the corner point $\left(\frac{7}{6}, \frac{1}{2} \right)$ in $N=2$. 
The first nine columns are MDS coded across the three users, so that any two of them can recover the entire segment. The last two columns are a coded version that enables the designated sender to do transmitter preprocessing and send out the signal needed based on the demands of the other two users. 

More generally, for any $N \geq 2$, the corner point $\left(\frac{4N-1}{6},\frac{1}{2}\right)$ is achievable as follows: split all files $W_1, \cdots, W_N$ into six subfiles, i.e., $W_n=(W_{n,1}, \cdots, W_{n,6})$, $n \in [N]$. The cache placement at the three users are given as
\begin{align*}
	Z_1=\{W_{n,1}\oplus W_{n,2}, W_{n,4}, W_{n,5}, W_{n,2}\oplus W_{n+1,1}\}^{N-1}_{n=1} \bigcup \{W_{N,1}\oplus W_{N,2}, W_{N,4}, W_{N,5} \}, \\
	Z_2=\{W_{n,3}\oplus W_{n,4}, W_{n,1}, W_{n,6}, W_{n,4}\oplus W_{n+1,3}\}^{N-1}_{n=1} \bigcup \{W_{N,3}\oplus W_{N,4}, W_{N,1}, W_{N,6}\},\\
	Z_3=\{W_{n,5}\oplus W_{n,6}, W_{n,2}, W_{n,3}, W_{n,6}\oplus W_{n+1,5}\}^{N-1}_{n=1} \bigcup \{W_{N,5}\oplus W_{N,6}, W_{N,2}, W_{N,3}\}.
\end{align*}
The delivery scheme is given by (\ref{delivery1/2}) and (\ref{delivery1/22}). 
%
Note that for the case of $N \geq 4$, the corner point $\left(\frac{4N-1}{6},\frac{1}{2}\right)$, though achievable, is not optimal, i.e., it lies above the memory-sharing curve of the three achievable corner points $(\frac{1}{2}N,1)$, $(\frac{2}{3}N,\frac{1}{3})$ and $(N,0)$. 
Finally, we conclude that memory sharing between the corner points $(\frac{3}{2},1)$, $(\frac{11}{6},\frac{1}{2})$, $(2,\frac{1}{3})$ and $(3,0)$ proves that (\ref{eq:7}) in Theorem \ref{MainResult} is achievable.
\subsection{Converse}
\subsubsection{$N \geq 4$}\label{ProofN>4}
As mentioned in Remark \ref{Remark 0}, 
the performance of the 3-user D2D coded caching with 2RR1S is upper bounded by the optimal performance of the original 2-user coded caching problem where the sender is the server. Thus, the converse result of the original coded caching problem \cite{6620392}, more specifically, 
\begin{align}
M+NR(M)\geq N, \quad N \geq 2, \label{22070401}
\end{align} is also a converse result for the 3-user D2D coded caching with 2RR1S. Hence, we only need to prove $4M+NR(M)\geq3N$. 

To prove $4M+NR(M)\geq3N$, we follow similar steps as those in the proof of \cite[Lemma 1]{e20080603} and utilize the finding in \cite[Lemma 4]{8733056}, which shows $NH(Z_{1}|W_{1})\geq(N-1)H(Z_{1})$ for any file-index-symmetric schemes. Furthermore, the property given in (\ref{eq:2}) for the model under consideration is exploited in the proof. 

More specifically, as discussed in Section \ref{220919a}, we may, without loss of generality, consider only user-index-symmetric and file-index-symmetric caching and delivery schemes. For any  user-index-symmetric and file-index-symmetric caching and delivery scheme with achievable rate $R(M)$, it must satisfy
\begin{align}
&(N-1)H({X^{(1,0,2)}},Z_{1},W_{1})
\nonumber \\
=&(N-1)[H(Z_{1},W_{1})+H(X^{(1,0,2)}|Z_{1},W_{1})]
\nonumber\\
=&(N-1)H(Z_{1},W_{1})+\sum_{i=2}^NH(X^{(1,0,i)}|Z_{1},W_{1}) \label{Nan01}\\
\geq&(N-1)H(Z_{1},W_{1})+H(X^{(1,0,[2:N])}|Z_{1},W_{1})
\nonumber\\
=&(N-2)H(Z_{1},W_{1})+H(X^{(1,0,[2:N])},Z_{1},W_{1})
\nonumber\\
=&H(Z_{3},W_{1})+H(Z_{2},W_{1})+H(X^{(1,0,[2:N])},Z_{1},W_{1})
+(N-4)H(Z_{1},W_{1})
\label{Nan02}\\
=& H(Z_{3},W_{1})+H(Z_{2},X^{(1,0,[2:N])},W_{1})+H(X^{(1,0,[2:N])},Z_{1},W_{1})
+(N-4)H(Z_{1},W_{1}) \label{Nan03}\\
\geq &H(Z_{3},W_{1})+H(Z_{1},Z_{2},X^{(1,0,[2:N])},W_{1})+H(X^{(1,0,[2:N])},W_{1})
+(N-4)H(Z_{1},W_{1})
\label{Nan04}\\
=&H(Z_{3},W_{1})+H(W_{[N]})+H(X^{(1,0,[2:N])},W_{1})
+(N-4)H(Z_{1},W_{1})
\label{Nan05}\\
\geq &N+H(Z_{3},X^{(1,0,[2:N])},W_{1})+H(W_{1})+(N-4)H(Z_{1},W_{1})
\label{Nan06})\\
=&2N+1+(N-4)H(Z_{1},W_{1}), \label{Nan07}
\end{align}
where $X^{(1,0,[2:N])}$ denotes $\{X^{(1,0,2)},\cdots,X^{(1,0,N)}\}$, (\ref{Nan01}) follows from the property of file-index-symmetric schemes,
(\ref{Nan02}) follows from the property of user-index-symmetric schemes, (\ref{Nan03}) follows from (\ref{eq:0}), (\ref{Nan04}) and (\ref{Nan06}) both follow from the sub-modular property of the entropy function, i.e.,
$
H(Y_\mathcal{A})+H(Y_\mathcal{B}) \geq H(Y_{\mathcal{A} \bigcup \mathcal{B}})+H(Y_{\mathcal{A} \bigcap \mathcal{B}}), 
$
(\ref{Nan05}) follows from (\ref{1711}) and (\ref{eq:2}), and (\ref{Nan07}) follows from the decodability constraint in (\ref{eq:1}), more specifically, $H(W_{[2:N]}|Z_{3},X^{(1,0,[2:N])})=0$, where $W_{[2:N]}$ denotes $\{W_2,\cdots,W_N\}$.
Using the result of (\ref{Nan07}), we have
\begin{align}
(N-1)H({X^{(1,0,2)}|Z_{1},W_{1}})
=&(N-1)H({X^{(1,0,2)},Z_{1},W_{1}})-(N-1)H(Z_{1},W_{1})
\nonumber\\
\geq &2N+1+(N-4)H(Z_{1},W_{1})-(N-1)H(Z_{1},W_{1})
\label{Nan08}\\
=&2(N-1)-3H(Z_{1}|W_{1}), \label{Nan09}
\end{align}
where (\ref{Nan08}) follows from (\ref{Nan07}). Finally, we have
\begin{align}
M+R(M)\geq&H(Z_{1})+H(X^{(1,0,2)})\geq H(Z_{1},X^{(1,0,2)},W_{1})
\nonumber\\
=&H(W_{1})+H(Z_{1}|W_{1})+H({X^{(1,0,2)}|Z_{1},W_{1}})
\nonumber\\
\geq &1+H(Z_{1}|W_{1})+2-\frac{3}{N-1}H(Z_{1}|W_{1})
\label{Nan10}\\
=& 3+\frac{N-4}{N-1}H(Z_1|W_1) \nonumber\\
\geq &3+\frac{N-4}{N}H(Z_{1}), \label{Nan11}
\end{align}
where (\ref{Nan10}) follows from (\ref{Nan09}), (\ref{Nan11}) follows \cite[Lemma 4]{8733056} that shows $NH(Z_{1}|W_{1})\geq(N-1)H(Z_{1})$ for any file-index-symmetric schemes. From (\ref{Nan11}), we have
$
4M+NR(M) \geq 3N, 
$
which completes the proof of (\ref{eq:8}) in Theorem \ref{MainResult}.


\subsubsection{$N=2,3$}
The converse proof for the case of $N=2$ and $N=3$ is given in Appendix \ref{0420a}. The proof benefited greatly from the computer-aided discovery of outer bounds presented in \cite{e20080603}. The computer-aided approach is useful for investigating the fundamental limits of coded caching for small number of users and files.

\section{Discussions}
To show the usefulness and applicability of the proposed model, i.e., the 3-user D2D coded caching with 2RR1S, and its optimal scheme given in Section \ref{Ach}, which we call the \emph{2RR1S scheme}, we adapt the 2RR1S scheme to three coded caching scenarios. 

\subsection{Scenario 1: the traditional 3-user D2D Coded Caching Problem}\label{0420d}

In the traditional 3-user D2D coded caching problem \cite{7342961}, it is assumed that in the delivery phase, all three users make file requests and they all participate in the signal transmission as senders. The following lemma illustrates the relationship between the 3-user D2D coded caching with 2RR1S and the traditional 3-user D2D coded caching problem.
\begin{Lem} \label{0420b}
Any achievable scheme for the 3-user D2D coded caching with 2RR1S can be exploited as a baseline scheme to come up with an achievable scheme for the traditional 3-user D2D coded caching problem.
\end{Lem}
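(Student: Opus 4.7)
The plan is to convert any achievable 2RR1S scheme with per-user cache size $M$ and worst-case delivery rate $R$ into a valid caching and delivery scheme for the traditional 3-user D2D coded caching problem at the same per-user cache size $M$. The construction invokes the 2RR1S scheme three times in parallel, with the ``designated sender'' role rotated among the users, which is why we call it the rotated 2RR1S scheme.

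First, I would partition each file $W_n$ into three equal-size sub-files $W_n = (W_n^{(1)}, W_n^{(2)}, W_n^{(3)})$, each of size $F/3$. For each $j \in \{1,2,3\}$, I would apply the 2RR1S placement $\{\varphi_1,\varphi_2,\varphi_3\}$ to the collection $\{W_n^{(j)}\}_{n \in [N]}$ under a user-index permutation $\pi_j$ satisfying $\pi_j(j)=1$---for example, a cyclic shift---so that user $j$ takes the designated-sender role for sub-scheme $j$. User $k$'s cache contribution from sub-scheme $j$ is $\varphi_{\pi_j(k)}(W_1^{(j)},\ldots,W_N^{(j)})$, of size $MF/3$; concatenating across $j=1,2,3$ gives a total cache of $MF$ bits at each user, matching the constraint.

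Second, in the delivery phase for a request $(d_1,d_2,d_3)$, I would have each user $j$ transmit the 2RR1S signal $\phi^{D_j}$ for sub-scheme $j$, where $D_j$ is the 2RR1S request vector with $0$ in coordinate $j$ and $d_k$ in each other coordinate $k$. By 2RR1S decodability, each user $k \neq j$ combines its cache for sub-scheme $j$ with $\phi^{D_j}$ to recover $W_{d_k}^{(j)}$. After the three transmissions, user $k$ has obtained $W_{d_k}^{(j)}$ for every $j \neq k$, i.e., two of the three sub-files of its requested file. To deliver the remaining sub-file $W_{d_k}^{(k)}$, I would invoke the 2RR1S requirement (\ref{eq:2}): the caches of the two users other than $k$ in sub-scheme $k$ jointly determine all of $\{W_n^{(k)}\}_{n \in [N]}$, so one of them can compute $W_{d_k}^{(k)}$ and transmit it directly to user $k$.

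The main obstacle is verifying that this rotated construction satisfies all the constraints of the traditional 3-user D2D problem---the cache size bound $M$, the requirement that each user's transmitted signal depend only on its own cache, and decodability at every requester---but each of these follows directly from the corresponding 2RR1S guarantees applied independently on each sub-file, together with the supplementary transmissions for the diagonal sub-files $W_{d_k}^{(k)}$. The resulting scheme achieves some finite delivery rate (at most $R+1$ using the naive supplementary transmissions, and potentially lower via coded multicasting of the missing sub-files), thereby establishing the lemma.
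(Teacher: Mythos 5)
Your construction diverges from the paper's and, as written, has a gap in the final step. You split each file into three sub-files, run one 2RR1S sub-scheme per sender role, and observe that after the three transmissions each user $k$ is still missing the ``diagonal'' piece $W_{d_k}^{(k)}$ of its own request. To deliver it you appeal to (\ref{eq:2}) and assert that, since the caches of the two users other than $k$ in sub-scheme $k$ \emph{jointly} determine all of $\{W_n^{(k)}\}_{n\in[N]}$, ``one of them can compute $W_{d_k}^{(k)}$ and transmit it directly.'' That inference is false: joint recoverability does not imply individual recoverability. Take the optimal placement at $M=\tfrac{N}{2}$, where (up to the permutation $\pi_k$) the two non-$k$ users of sub-scheme $k$ hold $(W_{n,1}^{(k)})_n$ and $(W_{n,2}^{(k)})_n$ respectively: neither can compute the full $W_{d_k}^{(k)}$ alone, so neither can send it, and your claimed supplementary rate (hence the bound $R+1$) does not follow. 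The step can be repaired---e.g., within sub-scheme $k$ invoke a second 2RR1S delivery for a request vector in which user $k$ occupies a \emph{requester} coordinate (decodability (\ref{eq:1}) covers that case), or have both other users flush enough of their sub-scheme-$k$ caches---and since the lemma only asserts existence of \emph{some} achievable scheme, a patched version would suffice; but the argument as stated does not close.

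The paper avoids the diagonal problem entirely by a different decomposition: each file is split into only \emph{two} halves, the cache placement of the baseline scheme is kept unchanged, and the three 2RR1S deliveries $X^{(0,d_2,d_3)}$, $X^{(d_1,0,d_3)}$, $X^{(d_1,d_2,0)}$ are applied to the halves so that each requester receives one half from each of the two other users (e.g., User 1 gets half $(a)$ of $W_{d_1}$ from User 2's transmission and half $(b)$ from User 3's). Since every 2RR1S delivery serves exactly two requesters and every user needs exactly two halves, the count $3\times 2 = 3\times 2$ closes with no supplementary transmission, yielding rate $\tfrac{3}{2}R(M)$ --- strictly better than what your rotation-plus-patch would give. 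You may want to adopt that two-way split, or at least replace the flawed ``one of them can compute it'' step with a correct supplementary delivery.
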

%
%
\begin{proof}
Take a caching and delivery scheme for the 3-user D2D coded caching with 2RR1S that satisfies (\ref{eq:1}), denoted as Scheme $A$, and suppose it requires the split of each file into $L$ subfiles, i.e., $W_n=(W_{n,1}, W_{n,2}, \cdots, W_{n,L})$, $n \in [N]$. Then, further split each subfile into two  parts of equal sizes, denoted as Part $(a)$ and Part $(b)$, respectively, i.e., $W_{n,l}=(W_{n,l}^{(a)}, W_{n,l}^{(b)})$, $l \in [L], n \in [N]$. The caching scheme for the traditional 3-user D2D coded caching problem is the same as that of Scheme $A$. In terms of the delivery scheme, when User $k$ requests file $W_{d_k}$, $k=1,2,3$,  User 1 uses the delivery scheme of scheme $A$ for the request vector $(0, d_2, d_3)$, acting on Part (a) of each file, User 2 uses the delivery scheme of Scheme $A$ for the request vector $(d_1, 0, d_3)$, acting on Part (a) of File $W_{d_1}$ and Part (b) of the other files, and User 3 uses the delivery scheme of scheme $A$ for the request vector $(d_1, d_2, 0)$, acting on Part (b) of each file. As a result, the achievable delivery rate for the traditional 3-user D2D coded caching problem when User $k$ requests file $W_{d_k}$, $k=1,2,3$, is $\frac{1}{2} \left(R^{(0,d_2,d_3)}(M)+R^{(d_1,0,d_3)}(M)+R^{(d_1,d_2,0)}(M) \right)$, where $R^{(0,d_2,d_3)}(M), R^{(d_1,0,d_3)}(M), R^{(d_1,d_2,0)}(M)$ denotes the delivery rates of Scheme $A$.

As an example, take a caching and delivery scheme for the 3-user D2D coded caching with 2RR1S when $M=\frac{1}{2}N$ as follows: 
split all files into two subfiles of equal size, denoted as $W_n=(W_{n,1},W_{n,2})^N_{n=1}$, i.e., $L=2$. The caching scheme is $
Z_1=(W_{n,1}\oplus W_{n,2})^N_{n=1}$, 
$Z_2=(W_{n,1})^N_{n=1}$, 
$Z_3=(W_{n,2})^N_{n=1}.
$
The delivery scheme is
$
X^{(0,d_2,d_3)}=\{ W_{d_2,1} \oplus W_{d_2,2}, W_{d_3,1} \oplus W_{d_3,2}\}$, 
$X^{(d_1,0,d_3)}=\{W_{d_1,1}, W_{d_3,1}\}$, 
$X^{(d_1,d_2,0)}=\{W_{d_1,2}, W_{d_2,2}\}$.
It can be checked that each user's demand can be correctly decoded for the 3-user D2D coded caching with 2RR1S, and the delivery rate of $R^D(M)=1$ for any demand vector $D$. 

The above scheme can be used as a baseline scheme to come up with an achievable scheme for the traditional 3-user D2D coded caching problem as follows: further split each subfile into two equal parts, i.e., $W_n^{l}=(W_{n,l}^{(a)}, W_{n,l}^{(b)}), n \in [N], l\in [2]$. The caching scheme is the same as the above scheme, i.e., User 1 caches $(W_{n,1}^{(a)}\oplus W_{n,2}^{(a)}, W_{n,1}^{(b)}\oplus W_{n,2}^{(b)})^N_{n=1}$, 
User 2 caches $(W_{n,1}^{(a)}, W_{n,1}^{(b)})^N_{n=1}$, and User 3 caches
$(W_{n,2}^{(a)}, W_{n,2}^{(b)})^N_{n=1}.
$ The delivery scheme when User $k$ requests file $W_{d_k}$, $k=1,2,3$, is as follows: User 1 transmits $X^{(0,d_2,d_3)}$ of the above scheme acting on Part $(a)$ of the files only, i.e., $\{ W_{d_2,1}^{(a)} \oplus W_{d_2,2}^{(a)}, W_{d_3,1}^{(a)} \oplus W_{d_3,2}^{(a)}\}$, User 2 transmits  $X^{(d_1,0,d_3)}$ of the above scheme acting on Part $(a)$ of File $W_{d_1}$ and Part $(b)$ of the other files, i.e., $\{ W_{d_1,1}^{(a)}, W_{d_3,1}^{(b)}\}$, and User 3 transmits $X^{(d_1,d_2,0)}$ of the above scheme acting on Part $(b)$ of files, i.e., $\{ W_{d_1,2}^{(b)}, W_{d_2,2}^{(b)}\}$. It is easy to check that each user's demand can be correctly decoded for the traditional 3-user D2D coded caching problem, and the delivery rate is $\frac{3}{2}$ for any demand vector $(d_1,d_2,d_3)$. 
\end{proof}
%

Applying Lemma \ref{0420b}, using the 2RR1S scheme
as a baseline scheme, we obtain a scheme for the traditional 3-user D2D coded caching problem, called the \emph{rotated 2RR1S scheme, which achieves a delivery rate of $\frac{3}{2}R(M)$, where $R(M)$ is characterized by Theorem \ref{MainResult}}. We discuss its performance in the following.

\subsubsection{Better performance for $N=2$ due to coded cache placement}
The performance of the proposed rotated 2RR1S scheme is shown in Fig.~\ref{fig:Compare1} by the red solid line for  $N=2,3,4$, respectively. 
The achievable rates found in \cite{7342961} and \cite{8830435}, are denoted by the black dash-dot line and the blue dashed line, respectively. Recall that the achievable schemes in \cite{7342961} and \cite{8830435} both employ uncoded cache placement and one-shot delivery. 
It can be seen that the proposed rotated 2RR1S scheme does not offer better performance in the case of 3 or 4 files. So we focus on the case of 2 files, i.e., Fig. \ref{fig:Compare1}(a). 
\begin{figure}[htbp]
	\centerline{\includegraphics[width=17cm]{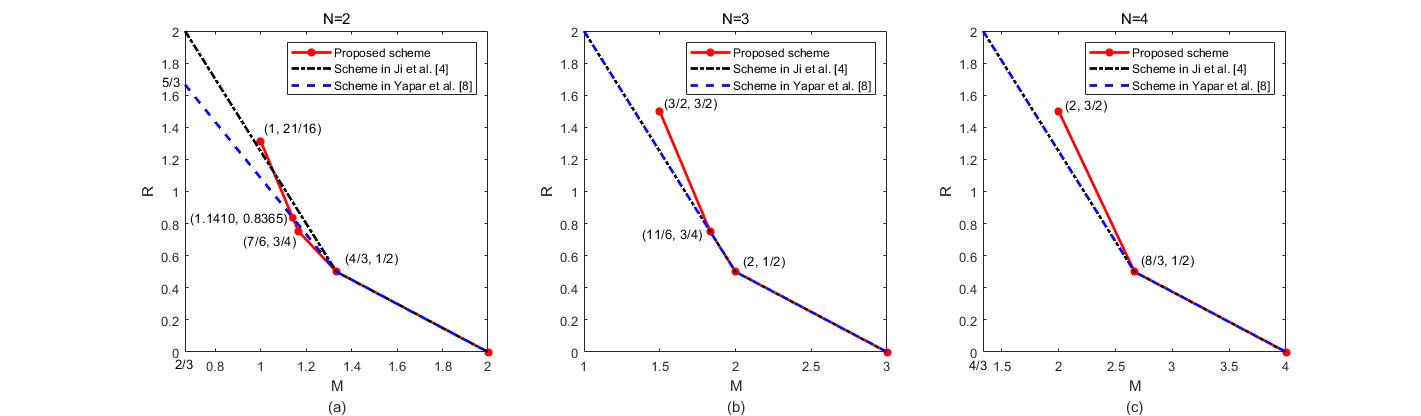}}
	\caption{Comparison of the worst-case rate achieved for three schemes for the traditional $3$-user D2D coded caching problem when  $N=2$, $N=3$, $N=4$, respectively.}
	\label{fig:Compare1}
\end{figure}

The proposed rotated 2RR1S scheme outperforms both schemes of \cite{7342961} and \cite{8830435} when $M\in \left[1.1410,\frac{4}{3} \right]$, for $N=2$, which is due to \emph{coded} cache placement. Meanwhile, the rate of the rotated 2RR1S scheme is the same as the rate of the schemes in \cite{7342961} and\cite{8830435}, when the cache capacity is large. When the cache capacity is small, the performance of the rotated 2RR1S scheme is in general loose. This is because the rotated 2RR1S scheme places a restriction on the number of senders being 1. As a result, first of all, the rotated 2RR1S scheme is only possible when $M \geq \frac{N}{2}$. Secondly, some caching schemes that are feasible for the traditional 3-user D2D coded caching problem are not included in the rotated schemes of Lemma \ref{0420b} as they require multiple senders to satisfy the decodability constraint. Hence, even though the baseline scheme is optimal for the 3-user D2D coded caching with 2RR1S, the adaptation may be sub-optimal for the traditional 3-user D2D coded caching problem. Improvements over existing D2D coded caching schemes for $N \geq 3$ are of great interest and left for future work. To do so, coded cache placement and delivery schemes that are not of the one-shot delivery nature should be considered. 


\subsubsection{Optimal for $N=2$ and medium cache size}
To further understand the performance of the proposed rotated 2RR1S scheme, we first characterize the optimal rate-memory tradeoff for the traditional 3-user D2D coded caching problem \cite{7342961} when the number of files is 2, i.e.,  $N=2$. This result is previously unknown.  
%
\begin{Theo} \label{MR_normal}
For the  traditional 3-user D2D coded caching problem for $N=2$, when $M \geq \frac{1}{3}N$, we have 
\begin{equation}
2M+R(M)\geq3, \qquad 3M+2R(M)\geq5, \qquad 3M+4R(M)\geq6\label{eq:2,3},
\end{equation}
where the corner points are $(\frac{2}{3},\frac{5}{3})$, $(1,1)$, $(\frac{4}{3},\frac{1}{2})$, $(2,0)$. Conversely, there exist caching and delivery schemes for any nonnegative $R(M)$ satisfying (\ref{eq:2,3}).
\end{Theo}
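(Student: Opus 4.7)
My plan is to separate the argument into achievability and converse, verifying the four stated corner points for achievability and the three linear inequalities for the converse, and then closing the region by memory sharing.

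For achievability, the point $(2,0)$ is trivial, since each user can cache both files. For $(4/3,1/2)$, I would apply the rotation construction of Lemma~\ref{0420b} to the optimal 2RR1S scheme at $(4/3,1/3)$ identified in Theorem~\ref{MainResult} for $N=2$, since $\tfrac{3}{2}\cdot\tfrac{1}{3}=\tfrac{1}{2}$. For the middle corner $(1,1)$, I would use an MDS-coded placement: split each file into two halves, let User~2 cache the first half of each file, let User~3 cache the second half of each file, and let User~1 cache the bitwise XOR of the two halves; for the delivery, when the two non-designated requesters want the same file User~1 transmits the relevant XOR (serving both with a single half-file transmission), while when they want different files Users~2 and~3 exchange the required subfiles directly, and in both situations one additional half-file transmission (chosen to exploit the MDS structure) covers the third user's request. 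Case checking across the eight request vectors confirms total rate $1$. For the minimum feasible point $(2/3,5/3)$, I would use the MAN-style uncoded placement at $t=KM/N=1$, where each file is split into three equal subfiles and each user caches one subfile per file, and verify by case analysis that the worst case requires exactly $5/3$ units of delivery.

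For the converse, by the natural analogue of Lemma~\ref{Lemma0} for the traditional 3-user D2D setting I may restrict attention to schemes that are both user-index-symmetric and file-index-symmetric. The key ingredients are: (i)~for any request pattern $D=(d_1,d_2,d_3)$, decodability gives $H(W_{d_k}\,|\,Z_k,X_i^D,X_j^D)=0$ for $\{i,j,k\}=\{1,2,3\}$; (ii)~each $X_j^D$ is a deterministic function of $Z_j$; and (iii)~the three caches together contain every file. For each of the three inequalities I would construct an appropriate compound entropy $H(Z_{\mathcal{I}},X^{D_1},\ldots,X^{D_r},W_{\mathcal{K}})$, expand it two ways via the chain rule, iterate the submodular inequality $H(A)+H(B)\geq H(A\cup B)+H(A\cap B)$, and use decodability and symmetry to collapse the expression into the target linear form $aM+bR\geq c$. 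Since $N=2$ produces only $2^3=8$ request patterns, averaging over requests sharing the same profile should sharpen the bounds further.

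The principal obstacle, hinted at by the authors' remark that the analogous converse proof in the appendix ``benefited greatly from the computer-aided discovery of outer bounds,'' is that none of these three inequalities seems to follow from a single elementary cut-set argument; the correct nested sequence of submodular and decodability steps has to be engineered by hand. I expect $2M+R\geq 3$ to follow from an argument in the spirit of the chain~\eqref{Nan01}--\eqref{Nan11} in Section~\ref{ProofN>4}, with the simultaneous activity of all three senders absorbed into one extra symmetry step; the intermediate bound $3M+2R\geq 5$ to require pairing transmissions from two distinct request vectors that share a common sender; and the tightest bound $3M+4R\geq 6$, which is active only in the large-memory regime, to be the most delicate, likely requiring an average over several request patterns so that the combination of two transmissions with all three caches recovers both files.
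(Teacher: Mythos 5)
Your achievability argument is essentially sound and, for the key corner point $(1,1)$, takes a genuinely different route from the paper: you reuse the $M=N/2$ MDS placement of the 2RR1S scheme (two subfiles per file, User~1 caching the XORs), whereas the paper builds a cyclic six-subfile placement in which each user stores two coded and four uncoded pieces and each user unicasts two subfiles to its successor. Your construction is simpler and does achieve rate $1$, but your delivery description contains a miscount: in the distinct-demands case the two exchanged half-files from Users~2 and~3 already total rate $1$, and User~1 recovers \emph{both} files from its two XORs plus the overheard transmissions, so no ``additional half-file transmission'' is permitted there --- taken literally, your description gives rate $3/2$ in that case. The other three corner points are fine (the paper simply cites the uncoded-placement schemes of the prior D2D work for $(\tfrac{2}{3},\tfrac{5}{3})$, $(\tfrac{4}{3},\tfrac{1}{2})$, $(2,0)$, and your rotated-2RR1S route to $(\tfrac{4}{3},\tfrac{1}{2})$ is equivalent).

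The genuine gap is the converse: you describe a methodology (restrict to symmetric schemes, build compound entropies, iterate submodularity and decodability) but never produce the three inequality chains, and for these bounds the methodology is not the proof --- the specific nested sequences are the entire content. The paper needs two nontrivial auxiliary lemmas, each itself a long chain of submodularity, decodability, and joint user/file-index symmetry steps, to establish $2M+R\geq 3$ and $3M+2R\geq 5$; these were found with computer assistance and do not follow from your sketch. Your triage of the three bounds is also inverted: $3M+4R(M)\geq 6$, which you flag as ``the most delicate,'' is in fact the easy one --- it is the $N=2$ case of a clean symmetrized cut-set bound $3M+2NR(M)\geq 3N$, obtained by writing $H(Z_k)+N\sum_{j\neq k}H(X_j^{(1,2,2)})\geq N$ for each $k$ (file-index symmetry converts the $N$ copies of each transmission into $N$ distinct demand instances that, together with one cache, determine all files) and summing over $k$. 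The hard work lies in $2M+R\geq 3$ and $3M+2R\geq 5$, where one must carefully pair caches with transmissions from \emph{several} demand vectors and repeatedly exploit the identity $H(X_1^{(1,1,2)})=H(X_2^{(1,2,2)})=\cdots$ coming from the combined symmetries; without exhibiting those chains the converse is unproven.
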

\begin{proof}
We present the achievability proof of Theorem \ref{MR_normal} in the following to show the need for coded cache placement. The converse proof is given in Appendix~\ref{AppB}.

We prove that the four corner points $(\frac{2}{3},\frac{5}{3})$, $(1,1)$, $(\frac{4}{3},\frac{1}{2})$, $(2,0)$, as stated in Theorem~\ref{MR_normal}, are achievable. The achievability proof of the three corner points $(\frac{2}{3},\frac{5}{3})$, $(\frac{4}{3},\frac{1}{2})$, and $(2,0)$ is given in \cite{8830435}, and it has been shown that uncoded cache placement is sufficient to achieve these three corner points. Thus, we only need to prove that the corner point  $(1,1)$ is achievable.

We split the two files $W_1$ and $W_2$ into 6 subfiles, which can be represented as $W_1=(A_n)^6_{n=1}$, $W_2=(B_n)^6_{n=1}$. For $M=1$, User 1 caches 
$(A_1\oplus B_1, A_2\oplus B_2, A_3, A_4, B_3, B_4)$, User 2 caches
$(A_3\oplus B_3, A_4\oplus B_4, A_5, A_6, B_5, B_6)$, and User 3 caches
$(A_5\oplus B_5, A_6\oplus B_6, A_1, A_2, B_1, B_2)$.
Note that coded cache placement is employed. 
During the delivery phase, when User $k$ requests $W_{d_k}$, then, User 1 transmits$
\lbrace W_{d_3,3}, W_{d_3,4}\rbrace$, User 2 transmits $\lbrace W_{d_1,5}, W_{d_1,6}\rbrace$, and User 3 transmits $\lbrace W_{d_2,1}, W_{d_2,2}\rbrace$.
It can be checked that each user can decode its requested file and the delivery rate is $1$. 
Memory sharing between the corner points $(\frac{2}{3},\frac{5}{3})$, $(1,1)$, $(\frac{4}{3},\frac{1}{2})$ and $(2,0)$ shows that  (\ref{eq:2,3}) in Theorem~\ref{MR_normal} is achievable.
\end{proof}
%
%
%
\begin{figure}[htbp]
	\centerline{\includegraphics[width=9cm]{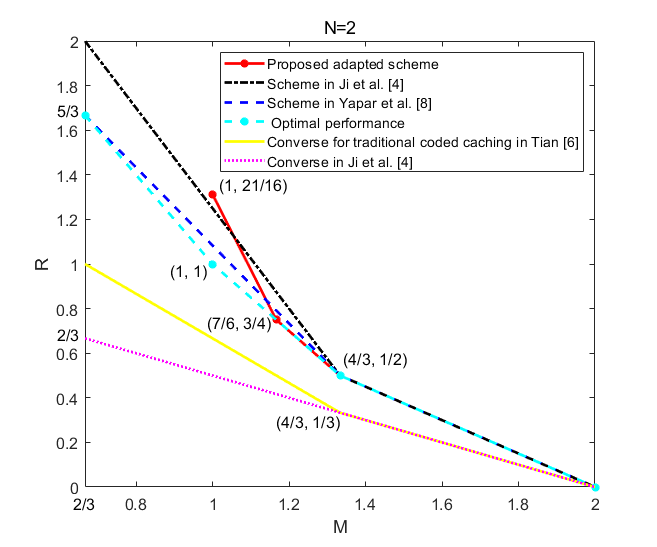}}
	\caption{Achievable performance and converse results for the traditional 3-user D2D coded caching problem when $N=2$.}
	\label{fig:Comparison}
\end{figure}
%

Now that we have found the optimal performance of the traditional 3-user D2D coded caching problem for $N=2$, i.e., Theorem \ref{MR_normal}, we may plot it in Fig.~\ref{fig:Comparison}, denoted by the cyan dotted dashed line. We further compare it to the performance of existing achievability schemes.  
The performance of our rotated 2RR1S scheme is denoted by the dotted red solid line, the scheme of \cite{7342961} is denoted by the black dash-dot line, and the scheme of \cite{8830435} is denoted by blue dashed line. As can be seen, the proposed rotated 2RR1S scheme is in fact optimal for $M\in \left[\frac{7}{6},\frac{4}{3} \right]$ in the case of $N=2$. Hence, when the cache size is in the range of $M\in \left[\frac{7}{6},\frac{4}{3} \right]$, using the scheme adapted from the 2RR1S scheme, which is the optimal scheme for the 3-user D2D coded caching with 2RR1S, does not cause any performance loss. We further observe that neither the existing schemes of \cite{7342961} and \cite{8830435}, nor the rotated 2RR1S scheme is optimal for the cache size of $M \in \left[\frac{2}{3},\frac{7}{6} \right]$. For this cache size range, the scheme used in proving Theorem \ref{MR_normal} is optimal. 

Next, we provide two remarks about the result of Theorem  \ref{MR_normal}. 

\begin{Remark}\label{Remark 7}
Recall that \cite{8830435} has found the optimal performance of the traditional $K$-user D2D coded caching problem for any number of files under the assumption of uncoded cache placement and one-shot delivery. Comparing \cite[Corollary 2]{8830435} with Theorem \ref{MR_normal} above, for 3 users and 2 files, we see that while the corner points $(\frac{2}{3},\frac{5}{3})$, $(\frac{4}{3},\frac{1}{2})$, and $(2,0)$ are the same, the corner point $(1,1)$ exists and is optimal when we remove the constraint of uncoded cache placement and one-shot delivery. We see in the proof of Theorem \ref{MR_normal} above that the achievability scheme of the corner point $(1,1)$ employs coded cache placement, but is still a one-shot delivery scheme. 
%
%
\end{Remark}

\begin{Remark}
In terms of existing converse results, the optimal result for the traditional coded caching problem in the case of 3 users and 2 files was found in \cite{e20080603}. This also serves as a converse result for the traditional 3-user D2D coded caching problem and is plotted by the yellow solid line in Fig. \ref{fig:Comparison}. The converse results derived in \cite{7342961} is denoted by the purple dotted line. From Fig.~\ref{fig:Comparison}, we can see that both converse results are rather loose, compared to the optimal performance found in Theorem \ref{MR_normal}. 
\end{Remark}

\subsection{Scenario 2: the 3-user request-random D2D coded caching}\label{LO}
In this section, we consider a 3-user D2D coded caching scenario, where in the delivery phase, each user makes a file request randomly and independently with the same probability $p$, and all three users participate in sending the delivery signal to satisfy the requests. This model is applicable in practice because not all users have file requests at the delivery phase, or some users have file requests much later than other users, for example, after the delivery phase of the other uses have been completed. We call this problem the \emph{request-random D2D coded caching}. 
%
%
%
We denote the number of users requesting files as $r$, and since
user requests happen at the beginning of the delivery phase, the server does not know the value of $r$ during the placement phase.


The performance of interest is the \emph{average worst-case delivery rate}, which is defined as follows. For a given caching and delivery scheme designed for the 3-user request-random D2D coded caching, for each $r \in [0,3]$, denote $R_r'(M)$ as the maximum delivery rate, where the maximum is over all possible size-$r$ requester sets and all possible file demands, i.e., $[N]^r$. The average worst-case delivery rate of the scheme is defined as
\begin{align}
	\bar{R}'^p(M)=\sum_{x=0}^{3} \text{Pr} \left[r=x\right]R_x'(M), \label{0423b}
\end{align}
where $\text{Pr} \left[r=x\right]=\binom{3}{x}p^x(1-p)^{3-x}$.
%
%
%

We first adapt the 2RR1S scheme to the request-random D2D coded caching problem, and the adapted version is called the \emph{request-random 2RR1S scheme}. Its caching scheme is the same as the 2RR1S scheme. As for the delivery scheme, when $r=2$, we use the delivery scheme of the 2RR1S scheme. 
When $r=3$, we use the delivery scheme of the rotated 2RR1S scheme described in Section \ref{0420d}. When $r=1$, 
we use the delivery scheme of the 2RR1S scheme, assuming a fake requester who requests the same file as the true requester. Furthermore, if some parts of the delivery signal are solely for the benefit of the fake requester, these are not transmitted. To give an example, when the number of files is $N=2$, the cache size is $M=1$, according to the 2RR1S scheme, each file is split into 8 subfiles, i.e., $W_1=(A_n)^8_{n=1}$ and $W_2=(B_n)^8_{n=1}$, and the caching scheme is the same as that of the 2RR1S scheme, given in Table \ref{table 1}. In the delivery phase, when the request vector is $D=(0,0,1)$, i.e., $r=1$, and Users 1 and 2 do not request, while User 3 requests file $W_1$, we assume that User 2 is a fake requester, who also requests file $W_1$, and User 1 is the designated sender. Thus,  for the new fake request vector $D'=(0,1,1)$, the 2RR1S scheme dictates that the signal transmitted by User 1 is $X^{(0,1,1)'}=\{A_7\oplus A_8, A_4, B_4, A_5, B_5, A_1\oplus B_2, A_2\oplus B_1\}$, as indicated in (\ref{0421a}). However, among these,  the subfiles $B_4, A_2\oplus B_1$ only serve the fake requester User 2. Therefore,  they are not transmitted and the true signal transmitted for the request-random 2RR1S scheme is $\{A_7\oplus A_8, A_4, A_5, B_5, A_1\oplus B_2\}$, and the achievable rate is $\frac{5}{8}$.

Accordingly, the performance of the request-random 2RR1S scheme is: 
1) when $r=1$: for $N=2$, the corner points are $(M,R'_1(M))=(1,\frac{5}{8})$, $(\frac{7}{6},\frac{1}{2})$, $(\frac{4}{3},\frac{1}{3})$, $(2,0)$; for $N=3$, the corner points are $(M,R'_1(M))=(\frac{3}{2},\frac{1}{2})$, $(\frac{11}{6},\frac{1}{2})$, $(2,\frac{1}{3})$, $(3,0)$; and for $N\geq 4$, the corner points are $(M,R'_1(M))=(\frac{1}{2}N,\frac{1}{2})$, $(\frac{2}{3}N,\frac{1}{3})$, $(N,0)$. The achievable rate $R'_1(M)$ is the lower convex envelope of these corner points. 
2) when $r=2$: $R'_2(M)$ is the same as $R(M)$ characterized in Theorem \ref{MainResult}.
3) when $r=3$: $R'_3(M)$ is the same as $\frac{3}{2}R(M)$, where $R(M)$ is characterized  in Theorem \ref{MainResult}.Then, the average worst-case delivery rate of the request-random 2RR1S scheme can be obtained via (\ref{0423b}). 

Existing schemes, developed for the traditional 3-user D2D coded caching problem, such as that of \cite{8830435}, can also be adapted to the request-random D2D coded caching scenario. More specifically, let Scheme $B$ be a scheme for the traditional 3-user D2D coded caching problem. Then, we adapt scheme $B$ for the 3-user request-random D2D coded caching as follows: the caching scheme is the same as that of scheme $B$. When $r=3$, the delivery scheme is the same as that in Scheme $B$. When $r=2$, we use the delivery scheme of Scheme $B$ while assuming that the user who does not request to be a fake requester, and it requests the same file as one of the two true requesters. When $r=1$, we use the delivery scheme of Scheme $B$ while assuming that the two users who do not request are fake requesters, and they both request the same file as the one true requester. Similar to the request-random 2RR1S scheme, if some parts of the delivery signal are solely for the benefit of the fake requesters, these are not transmitted. Accordingly, the performance of the adapted scheme of \cite{8830435} is: 
1) when $r=1$: for $N\geq 2$, the corner points are $(M,R'_1(M))=(\frac{1}{3}N,\frac{2}{3})$, $(\frac{2}{3}N,\frac{1}{3})$, $(N,0)$. The achievable rate $R'_1(M)$ is the lower convex envelope of the corner points. 
2) when $r=2$: for $N\geq 2$, the corner points are $(M,R'_2(M))=(\frac{1}{3}N,\frac{4}{3})$, $(\frac{2}{3}N,\frac{1}{2})$, $(N,0)$. The achievable rate $R'_2(M)$ is the lower convex envelope of the corner points. 
3) when $r=3$: $R'_3(M)$ is the same as rate characterized in \cite[Corollary 2]{8830435}.
Then, the average worst-case delivery rate of the adapted scheme of \cite{8830435} can be obtained via (\ref{0423b}).

\begin{figure}[htbp]
	\centerline{\includegraphics[width=9cm]{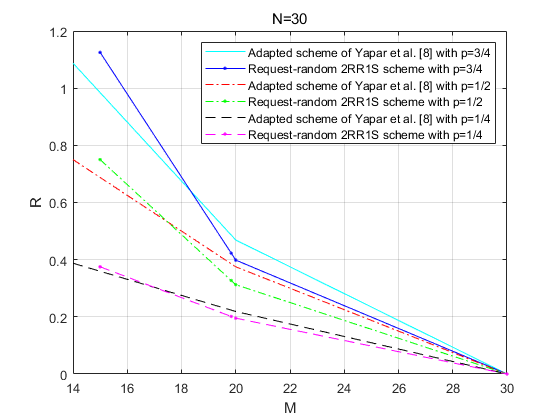}}
	\caption{Comparison of the average worst-case rate achieved for adapted schemes for the 3-user request-random D2D coded caching problem when $N = 30$.}
	\label{fig:Comparison_r}
\end{figure}

In Fig. \ref{fig:Comparison_r}, for $N=30$, we compare the average worst-case delivery rate of the request-random 2RR1S scheme and the adapted scheme of \cite{8830435}, which is the best-known achievable scheme for the traditional 3-user D2D coded caching problem.
%
%
%
The performance of our proposed request-random 2RR1S scheme is given by the dotted blue solid line when $p=\frac{3}{4}$, the dotted green dash-dot line when $p=\frac{1}{2}$, and the dotted purple dashed line when $p=\frac{1}{4}$. 
The performance of the adapted scheme of \cite{8830435},  is given by the cyan solid line when $p=\frac{3}{4}$, the red dash-dot line when $p=\frac{1}{2}$, and the black dashed line when $p=\frac{1}{4}$.
It can be seen that our proposed request-random 2RR1S scheme outperforms the adapted scheme of \cite{8830435} when the cache size is medium to large. 
Through numerical evaluation, we find that the gain can be as high as 17\% when $p=0.59$ and the cache size is $M=\frac{2}{3}N$. 

\subsection{Scenario 3: $K$-user D2D coded caching with $K-s$ random requesters and $s$ senders}\label{L1}
\vspace{-0.1in}
In this subsection, we study the $K$-user D2D coded caching scenario, where in the delivery phase, any $K-s$ out of the $K$ users make a file request, and the $s$ users who do not make file requests are the designed senders. We call this problem the \emph{$K$-user D2D coded caching with $K-s$ random requesters and $s$ senders}. The 3-user D2D coded caching with 2RR1S problem studied in Section \ref{SecNan01} is a special case of this model where $K=3$ and $s=1$. Inspired by the 2RR1S scheme, we have the following achievability result. 
\begin{Theo} \label{Theo3}
	For the $K$-user D2D coded caching with $K-s$ random requesters and $s$ senders, the following memory and worst-case rate tradeoff points $(M,R(M))$, are achievable for $N \geq 2$
	\begin{equation}
		\left(\frac{1}{s+1}N,\frac{s}{s+1}\min\{N,K-s\} \right),\qquad\left(\frac{K-1}{K}N,\frac{1}{K} \right),\qquad \left(N,0\right).\qquad\qquad\nonumber
	\end{equation}
\end{Theo}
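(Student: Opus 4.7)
The plan is to achieve the three claimed corner points separately and then to memory-share between them. The trivial point $(N,0)$ requires no delivery: every user caches all $N$ files, so the decodability constraint is met vacuously for any choice of $s$ senders and any demand pattern of the $K-s$ requesters.

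For the point $\left(\frac{N}{s+1},\,\frac{s}{s+1}\min\{N,K-s\}\right)$, I would generalize the MDS-coded placement that achieves $(N/2,1)$ in the 2RR1S scheme. Split each file $W_n$ into $s+1$ equal-size subfiles $(W_{n,1},\dots,W_{n,s+1})$, and let $g_1,\dots,g_K\in\mathbb{F}_q^{s+1}$ be the encoding vectors of a $(K,s+1)$-MDS code over a sufficiently large field, so that any $s+1$ of them are linearly independent. User $k$ caches the single coded symbol $Z_{k,n}=\sum_{j=1}^{s+1}g_{k,j}\,W_{n,j}$ for every $n\in[N]$, yielding cache size exactly $N/(s+1)$. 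In the delivery phase, let $\mathcal{S}$ and $\mathcal{R}$ denote the sender and requester sets and let $\mathcal{D}=\{d_r:r\in\mathcal{R}\}$ be the set of distinct requested files. Each sender $i\in\mathcal{S}$ transmits $\{Z_{i,n}:n\in\mathcal{D}\}$, giving total length $\frac{s}{s+1}|\mathcal{D}|\le\frac{s}{s+1}\min\{N,K-s\}$. A requester $r\in\mathcal{R}$ then holds $s+1$ coded symbols $\{Z_{r,d_r}\}\cup\{Z_{i,d_r}:i\in\mathcal{S}\}$ of $W_{d_r}$ from $s+1$ distinct encoding vectors; by the MDS property it inverts the resulting $(s+1)\times(s+1)$ system to recover $(W_{d_r,1},\dots,W_{d_r,s+1})$.

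For the point $\left(\frac{K-1}{K}N,\,\frac{1}{K}\right)$, I would generalize the Maddah-Ali Niesen (MAN) uncoded placement that achieves $(2N/3,1/3)$ in the 2RR1S scheme. Split each file into $K$ equal-size subfiles indexed by the $(K-1)$-subsets of $[K]$, writing $W_n=(W_{n,S})_{S\subset[K],\,|S|=K-1}$, and set $Z_k=\{W_{n,S}:k\in S,\,n\in[N]\}$; the cache size is $\frac{K-1}{K}N$. In the delivery phase, pick any single sender $i\in\mathcal{S}$ and have it transmit
\[
Y\;=\;\bigoplus_{r\in\mathcal{R}}W_{d_r,\,[K]\setminus\{r\}}.
\]
This is computable at $i$ because for every $r\in\mathcal{R}$ we have $i\neq r$, hence $i\in[K]\setminus\{r\}$, so $W_{d_r,[K]\setminus\{r\}}\in Z_i$. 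A requester $r^\star\in\mathcal{R}$ needs the unique subfile $W_{d_{r^\star},[K]\setminus\{r^\star\}}$ that is not in $Z_{r^\star}$; since for every other $r\in\mathcal{R}\setminus\{r^\star\}$ we have $r^\star\in[K]\setminus\{r\}$, all other summands in $Y$ are cached at $r^\star$ and can be cancelled to recover the missing subfile. The delivery length is $1/K$.

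I expect the main technical obstacle to be sharpening the worst-case analysis for the MDS-based scheme: one must argue that the bound $\frac{s}{s+1}\min\{N,K-s\}$ is actually tight over all admissible demand vectors -- when $N\le K-s$ an adversary can force all $N$ files to appear, and when $N>K-s$ an adversary can force all $K-s$ requesters to request distinct files -- and one must verify the MDS code is realizable by taking $F$ large enough that the subfile alphabet exceeds $K$. The remaining tasks -- cache-size accounting, correct decoding when several requesters happen to request the same file (where the senders' transmissions contain redundant information that only shortens the effective rate), and combining the three corner points via memory-sharing to sweep out the claimed achievable tradeoff -- are direct extensions of the arguments in Section~\ref{AchN>4}.
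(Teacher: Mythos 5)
Your proposal is correct and follows essentially the same route as the paper: the $(K,s+1)$-MDS placement with each sender forwarding its coded symbols of the requested files for the first corner point, and the MAN placement with $t=K-1$ where a single designated sender transmits the XOR $\oplus_{r\in\mathcal{R}}W_{d_r,[K]\setminus\{r\}}$ for the second. The only differences are notational (the paper indexes the MAN subfiles by $[K]$ rather than by $(K-1)$-subsets) and your added, but standard, remarks on MDS realizability and worst-case demand patterns.
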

For other memory sizes, the memory and worst-case rate tradeoff point can be obtained by the memory sharing of the corner points above.
\begin{proof}
	Similar to the 2RR1S scheme in Section~\ref{Ach}, the scheme of corner point $(\frac{1}{s+1}N,\\ \frac{s}{s+1}\min\{N,K-s\})$ employs MDS code and the scheme of corner point $(\frac{K-1}{K}N,\frac{1}{K})$ employs the MAN uncoded symmetric placement. More details of the proof is given in Appendix~\ref{0815}.   
\end{proof}
\begin{Remark}
	For the special case where $K=3$ and $s=1$, Theorem \ref{Theo3} recovers the achievability result of Theorem \ref{MainResult} Point (1). 		
\end{Remark}

\section{Conclusions}\label{Conclusions}
In this paper, we studied a new model called the 3-user D2D coded caching with 2RR1S, where during the delivery phase, any two out of the three users will make file requests and the user who does not make any file request will be the designated sender. We first characterized the optimal rate-memory tradeoff of the 3-user D2D coded caching with 2RR1S for any number of files  by proving matching converse and achievability results. It is shown that coded cache placement is needed to achieve the optimal performance. Next, using the optimal achievable scheme for the 3-user D2D coded caching with 2RR1S as a base scheme, we proposed a new achievable scheme for the 3-user D2D coded caching problem which involves coded cache placement. The new achievable scheme outperforms existing schemes when the number of files is 2 and the cache size is medium. We further characterized the optimal rate-memory tradeoff of the 3-user D2D coded caching when the number of files is 2. In doing so, we showed that the proposed new achievable scheme is in fact optimal when the cache size is medium. Comparing to existing works which focus on schemes of uncoded cache placement, we characterized the amount of performance gain enabled by allowing coded cache placement. 
Further, we adapted the optimal achievable scheme for the 3-user D2D coded caching with 2RR1S to the 3-user request-random D2D coded caching problem and found that the adapted scheme outperforms the existing schemes  when the cache size is medium to large.
Lastly, we extended the 3-user D2D coded caching with 2RR1S to the $K$-user D2D coded caching with $K-s$ random requesters and $s$ senders problem, and obtained an achievability result for this problem, which is inspired by the 2RR1S scheme and employs MDS code.

\appendices
\section{Proof of Converse for Theorem \ref{MainResult}: $N=2,3$} \label{0420a}
\subsubsection{$N=2$}\label{ProofN=2}
The bound $M+2 R(M) \geq 2$ has been proved in (\ref{22070401}), which is valid for any $N \geq 2$. We now proceed to first prove the bound $18M+8R(M)\geq 25$. First, we notice the following proposition for the problem under consideration.

\begin{Proposition}\label{Proposition 1}
For our system model with $N\geq 2$, the entropy of users' caches must satisfy
\begin{equation}\label{eq:28}
H(Z_f,A)+H(Z_g,A)\geq H(A)+H(W_{[N]}),\qquad  \forall f\ne g, \quad f,g\in[K],
\end{equation}
where $A$ can be any set of random variables of files $W_1,\cdots, W_N$, cache contents $Z_1, Z_2, Z_3$ and transmitted signals $X^{(0,d_2,d_3)}$, $X^{(d_1,0,d_3)}$, $X^{(d_1,d_2,0)}$, $d_1, d_2, d_3 \in [N]$ in our system model.
\end{Proposition}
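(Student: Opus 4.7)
The plan is to reduce the inequality to submodularity of entropy combined with the decodability property (\ref{eq:2}) that any two users' caches jointly recover all $N$ files. This property is the distinctive structural feature of the 3-user D2D coded caching with 2RR1S problem, and it is precisely what makes a bound of this form plausible.

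Concretely, I would first apply the submodularity inequality
\begin{equation}
H(Z_f,A)+H(Z_g,A)\geq H(Z_f,Z_g,A)+H(A) \nonumber
\end{equation}
for any set of random variables $A$ and any $f\neq g$ in $[K]$. This reduces the task to proving $H(Z_f,Z_g,A)\geq H(W_{[N]})$.

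For the remaining step, I would invoke (\ref{eq:2}): since $\{f,g\}\in \{\{1,2\},\{2,3\},\{3,1\}\}$, we have $H(W_{[N]}\,|\,Z_f,Z_g)=0$, and conditioning on the additional variables in $A$ can only reduce this further, so $H(W_{[N]}\,|\,Z_f,Z_g,A)=0$. Therefore $H(Z_f,Z_g,A)=H(Z_f,Z_g,A,W_{[N]})\geq H(W_{[N]})$, where the last inequality is just the monotonicity of joint entropy. Chaining this with the submodularity inequality yields exactly (\ref{eq:28}).

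I do not anticipate a serious obstacle: the proof is a two-line application of standard Shannon inequalities, with (\ref{eq:2}) doing all the substantive work. The only thing that deserves care is noting that the choice of $A$ is completely arbitrary among the random variables in the problem (files, caches, and any of the $3N^2$ transmitted signals), which is why the bound is stated in this general form so it can be reused freely in the subsequent converse derivation for $N=2$ (and analogously for $N=3$).
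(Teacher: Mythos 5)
Your proposal is correct and follows essentially the same route as the paper: submodularity of entropy to reduce to bounding $H(Z_f,Z_g,A)$, then the decodability property (\ref{eq:2}) to conclude. The only cosmetic difference is that the paper establishes the equality $H(Z_f,Z_g,A)=H(W_{[N]},A)=H(W_{[N]})$ by additionally invoking that $A$ is a deterministic function of the files (via (\ref{1711}) and (\ref{eq:0})), whereas you settle for the inequality $H(Z_f,Z_g,A)\geq H(W_{[N]})$ by monotonicity, which is all that is needed.
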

\begin{proof}
For $\forall f,g\in[1:3]$, $f\ne g$, due to the sub-modular property of the entropy function, we have
\begin{align}
H(Z_f,A)+H(Z_g,A)\geq& H(A)+H(Z_f,Z_g,A)
\nonumber\\
= &H(A)+H(W_{[N]},A)
\label{eq:29.5}\\
= &H(A)+H(W_{[N]}),\label{eq:30}
\end{align}
where (\ref{eq:29.5}) follows from (\ref{eq:2}) and (\ref{eq:30}) follows from (\ref{1711}), (\ref{eq:0}) and the fact that $A$ can only contain files, cache contents and transmitted signals, which are all deterministic functions of the files $W_{[N]}$. 

For example,
\begin{align}
H(Z_1,W_3,X^{(2,4,0)})+H(Z_2,W_3,X^{(2,4,0)})\overset{(a)}{\geq}& H(W_3,X^{(2,4,0)})+H(Z_1,Z_2,W_3,X^{(2,4,0)})
\nonumber\\
\overset{(b)}{=} &H(W_3,X^{(2,4,0)})+H(W_{[N]},W_3,X^{(2,4,0)})
\nonumber\\
= &H(W_3,X^{(2,4,0)})+H(W_{[N]},X^{(2,4,0)})
\nonumber\\
\overset{(c)}{=} &H(W_3,X^{(2,4,0)})+H(W_{[N]},Z_3,X^{(2,4,0)})
\nonumber\\
\overset{(d)}{=} &H(W_3,X^{(2,4,0)})+H(W_{[N]},Z_3)
\nonumber\\
\overset{(e)}{=} &H(W_3,X^{(2,4,0)})+H(W_{[N]}),
\nonumber
\end{align}
where (a) follows from the sub-modular property of the entropy function, (b) follows from (\ref{eq:2}), (c) and (e) follows from (\ref{1711}), and (d) follows from (\ref{eq:0}).
\end{proof}

To prove the bound $18M+8R(M)\geq 25$, we use the following lemma, whose proof is given in Appendix~\ref{AppC}. We note here that the proof benefited from the computer-aided discovery of outer bounds \cite{e20080603} and Proposition \ref{Proposition 1}.
\begin{Lem} \label{LemmaNan02}
For $N=2$,  we have the following results:
\begin{align}
8M+6R(M)+2H(X^{(1,0,1)})-4H(W_1, X^{(0,1,2)}, X^{(1,0,1)}) & \geq 8, \label{015}\\
4M+2R(M)-2H(X^{(0,1,1)})+2H(W_1, X^{(0,1,1)})-H(W_2, X^{(0,1,2)}, X^{(1,0,2)}) &\geq 5,  \label{016}\\
6M-3H(X^{(1,0,2)}, W_1, X^{(0,1,2)})+3H(X^{(0,1,2)}, X^{(1,0,2)})  & \geq 6. \label{017}
\end{align}
\end{Lem}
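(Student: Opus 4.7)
The plan is to establish each of the three inequalities in Lemma~\ref{LemmaNan02} via a chain of entropy manipulations built from four standard ingredients: (i) the basic size bounds $H(Z_k)\le M$ and $H(X^D)\le R(M)$ implied by (\ref{1711}) and (\ref{eq:0}); (ii) the decodability constraints (\ref{eq:1}), which allow a requested file to be appended freely to any joint entropy that already contains both a cache and the matching transmitted signal; (iii) Proposition~\ref{Proposition 1}, the sub-modularity-plus-recoverability bound, which in the case $N=2$ turns the sum $H(Z_f,A)+H(Z_g,A)$ into $H(A)+2$ for any $f\ne g$; and (iv) the user-index and file-index symmetries guaranteed by Lemma~\ref{Lemma0}, which let us identify entropy terms related by coordinate permutations (for example $H(Z_1,X^{(1,0,2)},W_1)=H(Z_2,X^{(0,1,2)},W_1)$ under the user swap $1\leftrightarrow 2$). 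Because (\ref{015})--(\ref{017}) are intermediate bounds that are ultimately combined by Fourier--Motzkin-style elimination with the already-established $M+2R(M)\ge 2$ to produce the target $18M+8R(M)\ge 25$, each sub-inequality deliberately keeps certain composite entropies such as $H(W_1,X^{(0,1,2)},X^{(1,0,1)})$ in symbolic form rather than being immediately relaxed to $M$ and $R(M)$; these residual terms must cancel in the final weighted sum.

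For each of (\ref{015})--(\ref{017}), I would apply Proposition~\ref{Proposition 1} with one or more contexts $A$ chosen so that the cache-appended terms $H(Z_f,A)+H(Z_g,A)$ simplify --- either because one of the signals in $A$ is a deterministic function of $Z_f$ or $Z_g$ via (\ref{eq:0}), or because a file in $A$ is decodable from $Z_f$ together with a signal in $A$ via (\ref{eq:1}). The resulting lower bounds on $H(A)$ are then added, possibly after applying user- and file-index permutations to generate symmetric copies, and each surviving cache-containing joint entropy is bounded by $M$ plus a controlled signal entropy. The numerical coefficients $(8,6,+2,-4)$ in (\ref{015}), $(4,2,-2,+2,-1)$ in (\ref{016}), and $(6,-3,+3)$ in (\ref{017}) then emerge as the integer multiplicities of each entropy term in the weighted sum; in particular, I expect (\ref{017}) to come from a single application of Proposition~\ref{Proposition 1} with $A=\{W_1,X^{(0,1,2)},X^{(1,0,2)}\}$ combined with a second application obtained by user permutation, while (\ref{015}) and (\ref{016}) --- which involve three distinct signals including the self-demand signals $X^{(1,0,1)}$ and $X^{(0,1,1)}$ --- should require two or more applications chained together under the file-index symmetry $1\leftrightarrow 2$.

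The main obstacle is not any individual entropy manipulation --- each step is a routine invocation of sub-additivity, decodability, Proposition~\ref{Proposition 1}, or symmetry --- but the discovery of the precise linear combination that yields these three specific sub-inequalities, together with the outer combination that subsequently eliminates the composite entropies to give the clean bound $18M+8R(M)\ge 25$. The paper itself notes that the combinations were found via the computer-aided outer-bound search of \cite{e20080603}; a purely manual approach would enumerate applications of Proposition~\ref{Proposition 1} over small context sets (say $|A|\le 3$) and search for non-negative weighted sums whose residual composite entropies cancel, which is combinatorially delicate even for $N=2$ and three users. Once the correct combinations are in hand, verifying each of (\ref{015})--(\ref{017}) should proceed mechanically along the lines sketched above.
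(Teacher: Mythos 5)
Your plan correctly identifies every ingredient the paper actually uses for Lemma~\ref{LemmaNan02} --- the size bounds $H(Z_k)\le M$ and $H(X^D)\le R(M)$, the deterministic-function relations (\ref{eq:0}), the decodability constraints (\ref{eq:1}), Proposition~\ref{Proposition 1}, sub-modularity, and the user- and file-index symmetries --- and your guess that (\ref{017}) hinges on Proposition~\ref{Proposition 1} with $A=\{W_1,X^{(0,1,2)},X^{(1,0,2)}\}$ applied to the pair $(Z_1,Z_2)$ is essentially what the paper does at step (\ref{012}). But the proposal stops at the level of strategy: none of the three inequalities is actually derived, and you explicitly concede that discovering the precise chains is the part you have not carried out. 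Since the entire content of the lemma is those three specific chains, what you have is an accurate description of the paper's method rather than a proof of the statement.

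Two concrete points show that the missing derivations are not purely mechanical given your outline. First, for (\ref{016}), after Proposition~\ref{Proposition 1} produces the term $2H(W_1,X^{(0,1,2)})$, the paper must convert it into $H(W_2,X^{(0,1,2)},X^{(1,0,2)})+H(W_2)$; this is done by rewriting the two copies of $H(W_1,X^{(0,1,2)})$ as $H(W_2,X^{(0,1,2)})$ and $H(W_2,X^{(1,0,2)})$ via two \emph{different} compositions of user- and file-index permutations (steps $(a)$--$(d)$ following (\ref{004})) and only then merging them by sub-modularity. Your outline speaks of "generating symmetric copies," but this particular asymmetric splitting of a single doubled term is the crux of (\ref{016}) and is not predicted by it. Second, for (\ref{017}) the positive term $+3H(X^{(0,1,2)},X^{(1,0,2)})$ on the left-hand side does not come out of Proposition~\ref{Proposition 1} at all: it is manufactured beforehand by inserting $3H(X^{(1,0,2)})-3H(X^{(0,1,2)})$ (which is zero by symmetry) into $6H(Z_1)$ and then running a sub-modularity exchange at (\ref{009}); only afterwards is the proposition applied. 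Also note that (\ref{015}), contrary to your expectation, needs no file-index symmetry at all --- it is a direct four-step chain (merge caches with signals, append $W_1$ by decodability, apply Proposition~\ref{Proposition 1} once with $A=\{W_1,X^{(0,1,2)},X^{(1,0,1)}\}$). To turn your proposal into a proof you would need to exhibit these chains explicitly.
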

Adding (\ref{015})-(\ref{017}) together, we have
\begin{align}
& \hspace{-0.2in}18M+8R(M)-3H(X^{(1,0,2)}, W_1, X^{(0,1,2)})+3H(X^{(0,1,2)}, X^{(1,0,2)})  \nonumber\\
\geq& 19+4H(W_1, X^{(0,1,2)}, X^{(1,0,1)})+H(W_2, X^{0,1,2}, X^{1,0,2})-2H(W_1, X^{0,1,1}) \nonumber\\
= &19+2H(W_1, X^{(0,2,1)}, X^{(1,1,0)})+2H(W_1, X^{(2,0,1)}, X^{(1,1,0)})-2H(W_{1},X^{(1,1,0)})
\label{013}\\
& +H(W_2,X^{(0,1,2)},X^{(1,0,2)})\nonumber\\
\geq &19+2H(W_1, X^{(0,2,1)}, X^{(1,1,0)},X^{(2,0,1)})
+H(W_2,X^{(0,1,2)},X^{(1,0,2)})\nonumber\\
= &19+2(W_2,X^{(0,1,2)}, X^{(2,2,0)}, X^{(1,0,2)})+H(W_2,X^{(0,1,2)},X^{(1,0,2)})\label{020}\\
\geq & 19+3H(W_2,X^{(0,1,2)},X^{(1,0,2)}), \nonumber
\end{align}
where (\ref{013}) and (\ref{020}) both follow from the property of file-index-symmetric schemes. Hence, we have
$
18M+8R(M) \geq 19+3H(W_2,X^{(0,1,2)},X^{(1,0,2)})+3H(X^{(1,0,2)}, W_1, X^{(0,1,2)})-3H(X^{(0,1,2)},\break X^{(1,0,2)}) 
\geq  19+3 H(W_1, W_2) =25, 
$
which completes the proof of the bound $18M+8R(M)\geq 25$.

Next, we proceed to prove the bound $3M+3R(M)\geq 5$. We have the following lemma, whose proof is given in Appendix~\ref{AppD}. We note here that the proof benefited from the computer-aided discovery of outer bounds \cite{e20080603}.
\begin{Lem} \label{LemmaNan03}
For $N=2$, we have the following results:
\begin{align}
M+R(M)+H(W_1,X^{(2,0,1)}) & \geq 3, \label{021}\\
2M+2R(M)-H(W_1,X^{(1,0,2)}) &\geq 2. \label{022}
\end{align}
\end{Lem}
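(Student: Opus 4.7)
The lemma has two parts, and I would prove them by two short, independent applications of submodularity of entropy and of Proposition~\ref{Proposition 1}. A recurring ingredient in both proofs is the bound $H(W_i \mid Z_k) \leq R(M)$, valid for any file index $i$ and any user $k$: one decodes $W_i$ from $(Z_k, X^D)$ for a request vector $D$ in which some user other than $k$ is the sender and $d_k = i$, yielding $H(W_i \mid Z_k) \leq H(X^D \mid Z_k) \leq R(M)$.

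Proof of (\ref{021}): I would apply the submodular inequality $H(A,C) + H(B,C) \geq H(A \cup B, C) + H(C)$ with $A = \{X^{(2,0,1)}\}$, $B = \{Z_1\}$, and $C = \{W_1\}$. Decodability (\ref{eq:1}) for the request vector $(2,0,1)$ gives $H(W_2 \mid Z_1, X^{(2,0,1)}) = 0$, so $H(W_1, X^{(2,0,1)}, Z_1) = H(W_1, W_2) = 2$. Submodularity then yields
\[
H(W_1, X^{(2,0,1)}) + H(W_1, Z_1) \geq 2 + H(W_1) = 3.
\]
Since $H(W_1, Z_1) = M + H(W_1 \mid Z_1) \leq M + R(M)$ by the recurring ingredient above, substituting produces (\ref{021}).

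Proof of (\ref{022}): I would apply Proposition~\ref{Proposition 1} with $f = 1$, $g = 2$, and $A = \{W_1, X^{(1,0,2)}\}$ to obtain
\[
H(Z_1, W_1, X^{(1,0,2)}) + H(Z_2, W_1, X^{(1,0,2)}) \geq H(W_1, X^{(1,0,2)}) + 2.
\]
Two simplifications on the left-hand side are crucial: decodability of $W_1$ from $(Z_1, X^{(1,0,2)})$ gives $H(Z_1, W_1, X^{(1,0,2)}) = H(Z_1, X^{(1,0,2)})$, which is at most $M + R(M)$ by subadditivity; and by (\ref{eq:0}) the signal $X^{(1,0,2)}$ is a deterministic function of $Z_2$, so $H(Z_2, W_1, X^{(1,0,2)}) = H(Z_2, W_1)$, which is at most $M + R(M)$ by subadditivity together with the recurring ingredient. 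Summing the two bounds yields $2M + 2R(M) \geq H(W_1, X^{(1,0,2)}) + 2$, which is (\ref{022}).

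The main difficulty is not the calculation but identifying the right auxiliary variables. In Proposition~\ref{Proposition 1} the pair $(f, g) = (1, 2)$ is the crucial choice that makes both simplifications above available simultaneously (other choices such as $(1,3)$ or $(2,3)$ yield only the loose bound $H(W_1, X^{(1,0,2)}) \leq M + R(M)$), and in the submodular step for (\ref{021}) the common variable $C$ must be $W_1$ so that the decoding identity $H(W_1, X^{(2,0,1)}, Z_1) = 2$ can be invoked. A useful sanity check is that both (\ref{021}) and (\ref{022}) are tight at the coded-placement corner point $(M, R(M)) = (7/6, 1/2)$ of Theorem~\ref{MainResult}, and that their sum recovers the target converse bound $3M + 3R(M) \geq 5$ for $N = 2$.
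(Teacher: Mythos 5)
Your proof is correct and follows essentially the same route as the paper's: for (\ref{021}) you use the identical submodular step pairing $H(W_1,X^{(2,0,1)})$ with $H(W_1,Z_1)$ and the decoding of $W_2$ from $(Z_1,X^{(2,0,1)})$, merely obtaining $H(W_1,Z_1)\leq M+R(M)$ via $H(W_1\mid Z_1)\leq R(M)$ rather than the paper's chain through $H(Z_1,X^{(1,0,2)},W_1)$. For (\ref{022}) you invoke Proposition~\ref{Proposition 1} directly with $A=\{W_1,X^{(1,0,2)}\}$, which is a slightly cleaner packaging of the paper's inline derivation (the paper carries an extra auxiliary signal $X^{(0,2,1)}$ through the same submodularity-plus-decodability argument) but rests on exactly the same ingredients.
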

Then, adding (\ref{021}) and (\ref{022}) together, we have
\begin{align}
3M+3R(M)+H(W_1,X^{(2,0,1)})-H(W_1,X^{(1,0,2)})&= 3M+3R(M) \geq 5, \label{031}
\end{align}
where the equality in (\ref{031}) follows from the property of user-index-symmetric schemes.
Thus, (\ref{eq:6}) in Theorem~\ref{MainResult} is proved.

\subsubsection{$N=3$}\label{ProofN=3}
The bound $M+3 R(M) \geq 3$ has been proved in (\ref{22070401}), which is valid for any $N \geq 2$. We now proceed to first prove the bound $6M+4R(M)\geq 13$ by using the following lemma, whose proof is in Appendix~\ref{AppE}. We note here that the proof benefited from the computer-aided discovery of outer bounds \cite{e20080603} and Proposition \ref{Proposition 1}.
\begin{Lem} \label{LemmaNan04}
For $N=3$, we have the following results:
\begin{align}
2M+2R(M)-H(W_1,X^{(0,1,3)},X^{(1,0,2)}) & \geq 3,\label{032}\\
4M+2R(M)+H(W_1,X^{(0,1,3)},X^{(1,0,2)}) & \geq 10. \label{033}
\end{align}
\end{Lem}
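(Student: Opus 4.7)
The plan is to establish the two inequalities (\ref{032}) and (\ref{033}) separately; adding them then yields $6M+4R(M)\geq 13$ immediately, since the $H(W_1,X^{(0,1,3)},X^{(1,0,2)})$ terms cancel exactly. By Lemma~\ref{Lemma0} I may restrict throughout to caching and delivery schemes that are simultaneously user-index and file-index symmetric, and the main tool will be Proposition~\ref{Proposition 1}, which for $N=3$ reads $H(Z_f,A)+H(Z_g,A)\geq H(A)+3$ for any $f\neq g$ and any $A$ built from files, caches, and transmitted signals.

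For (\ref{032}) the plan is short: apply Proposition~\ref{Proposition 1} with $f=1$, $g=2$, and $A=T\triangleq (W_1,X^{(0,1,3)},X^{(1,0,2)})$ to obtain $H(Z_1,T)+H(Z_2,T)\geq H(T)+3$. The key absorption step uses (\ref{eq:0}) and (\ref{eq:1}): User~1 is the sender for demand $(0,1,3)$, so $X^{(0,1,3)}$ is a function of $Z_1$, and decodability forces $W_1$ to be a function of $(Z_1,X^{(1,0,2)})$; hence $H(Z_1,T)=H(Z_1,X^{(1,0,2)})\leq H(Z_1)+H(X^{(1,0,2)})\leq M+R(M)$. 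The analogous argument with Users~1 and 2 interchanged (using that $X^{(1,0,2)}$ is a function of $Z_2$ and $W_1$ is a function of $(Z_2,X^{(0,1,3)})$) gives $H(Z_2,T)\leq M+R(M)$, and summing produces (\ref{032}).

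For (\ref{033}) the argument is considerably more delicate and, as the paper notes, benefits from the computer-aided outer-bound discovery methodology of \cite{e20080603}. The plan is to stack several instances of Proposition~\ref{Proposition 1} applied to enlarged $A$-sets that also incorporate transmitted signals from other demand vectors (for example $X^{(2,3,0)}$, which arises when the sender role rotates to User~3), combined with standard submodular inequalities $H(Y_{\mathcal A})+H(Y_{\mathcal B})\geq H(Y_{\mathcal A\cup\mathcal B})+H(Y_{\mathcal A\cap\mathcal B})$. The driving identities are $H(W_{[3]}\mid Z_i,Z_j)=0$ for $i\neq j$ (which converts any pair-of-caches joint entropy into the constant $H(W_{[3]})=3$), and $H(Z_3,T)=H(W_{[3]})=3$ (because $Z_3$ with $X^{(0,1,3)}$ decodes $W_3$ and with $X^{(1,0,2)}$ decodes $W_2$ by (\ref{eq:1}), while $W_1\in T$ completes the full file set); file-index symmetry then contributes analogous identities with the roles of $W_2$ and $W_3$ permuted.

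The principal obstacle is the $2{:}1$ asymmetry between the $M$- and $R(M)$-coefficients in (\ref{033}): a single Proposition~\ref{Proposition 1} application contributes cache and signal entropies in matched pairs, so no one-shot submodular argument on $\{Z_1,Z_2\}$ paired with $\{X^{(0,1,3)},X^{(1,0,2)}\}$ can produce $4M+2R(M)$ on the left. One must therefore stack several applications whose signal contributions overlap (keeping the $R(M)$ coefficient at $2$) while their cache contributions remain disjoint (driving the $M$-coefficient up to $4$), and then arrange the overlaps so the residual file-plus-signal entropy collapses to exactly $10-H(T)$ after invoking the identities above. Identifying the precise linear combination that achieves this is the combinatorial task the automated search of \cite{e20080603} is designed to handle, and full algebraic details are deferred to Appendix~\ref{AppE}.
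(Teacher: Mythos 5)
Your argument for (\ref{032}) is correct and is essentially the paper's own proof read from the other end: you apply Proposition~\ref{Proposition 1} with $A=(W_1,X^{(0,1,3)},X^{(1,0,2)})$ and then absorb $W_1$ and the sender's own signal into each cache via (\ref{eq:0}) and (\ref{eq:1}), which is exactly the chain culminating in (\ref{034})--(\ref{035}). No issue there.

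Your treatment of (\ref{033}), however, is not a proof. You correctly diagnose the structural difficulty (the $4{:}2$ coefficient asymmetry rules out a single matched cache/signal application of Proposition~\ref{Proposition 1}) and you state some true auxiliary facts, e.g.\ $H(Z_3,W_1,X^{(0,1,3)},X^{(1,0,2)})=H(W_{[3]})=3$, but you never exhibit the linear combination of entropy inequalities that actually yields the bound; instead you defer the ``full algebraic details'' to Appendix~\ref{AppE}, which is precisely the proof you are being asked to supply. The actual derivation is long and quite specific: it begins by showing $3M+2R(M)+H(X^{(1,0,3)})\geq 3+H(W_1,X^{(0,1,2)},X^{(0,1,3)},X^{(1,0,2)})+H(Z_1,W_1)$ using \emph{three} copies of $H(Z_1)$ together with the two signals $X^{(1,0,2)}$ and $X^{(1,0,3)}$; it then establishes $M-H(X^{(1,0,2)})+H(X^{(0,1,2)},X^{(0,1,3)},X^{(1,0,2)})\geq H(Z_1,X^{(1,0,3)},X^{(2,0,3)})$; and it closes by repeatedly adding auxiliary joint entropies to both sides, invoking user- and file-index symmetry on terms such as $X^{(0,2,1)},X^{(0,3,1)},X^{(1,2,0)},X^{(3,0,2)}$, until the residuals telescope (via a final submodularity step of the form $H(Z_1,W_1)-H(Z_1,W_1,X^{(2,0,1)})+H(W_1,X^{(2,0,1)})\geq H(W_1)$) to the constant $10$. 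None of this is recoverable from your sketch, which moreover gestures at signals like $X^{(2,3,0)}$ that play no role in the argument. As written, (\ref{033}) --- and hence the lemma and the resulting bound $6M+4R(M)\geq 13$ --- remains unproven.
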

Adding (\ref{032}) and (\ref{033}) together, we have
$
6M+4R(M)-H(W_1,X^{(0,1,3)},X^{(1,0,2)})+H(W_1,\break X^{(0,1,3)},X^{(1,0,2)})=6M+4R(M)\geq13. 
$

Next, we proceed to prove the bound $3M+3R(M)\geq 7$. We have the following lemma, whose proof is given in Appendix~\ref{AppF}. We note here that the proof benefited from the computer-aided discovery of outer bounds \cite{e20080603} and Proposition \ref{Proposition 1}.
\begin{Lem} \label{LemmaNan05}
For $N=3$, we have the following results:
\begin{align}
M+R(M)+H(W_1,X^{(0,1,3)},X^{(1,0,2)}) & \geq 4,\label{070}\\
2M+2R(M)-H(W_1,X^{(0,1,3)},X^{(1,0,2)}) &\geq 3. \label{071}
\end{align}
\end{Lem}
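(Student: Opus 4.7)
My plan is to prove (\ref{070}) and (\ref{071}) separately; adding the two inequalities cancels the $\pm H(W_1,X^{(0,1,3)},X^{(1,0,2)})$ terms and yields $3M+3R(M)\geq 7$, the second line of (\ref{eq:7}).

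The inequality (\ref{071}) is, on inspection, literally the same statement as (\ref{032}) in Lemma~\ref{LemmaNan04}, so I would simply invoke its proof in Appendix~\ref{AppE}. The underlying argument applies Proposition~\ref{Proposition 1} with $A=(W_1,X^{(0,1,3)},X^{(1,0,2)})$ and $(f,g)=(1,2)$, and then uses (\ref{eq:0}) together with the decodability of $W_1$ from $(Z_1,X^{(1,0,2)})$ and from $(Z_2,X^{(0,1,3)})$ in (\ref{eq:1}) to collapse each of the two joint entropies on the left-hand side to a quantity at most $M+R(M)$, after which (\ref{071}) is immediate.

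The genuinely new work is (\ref{070}). My natural starting point is sub-modularity applied to the sets $\{Z_3,X^{(0,1,3)}\}$ and $\{W_1,X^{(0,1,3)},X^{(1,0,2)}\}$, whose common variable is $X^{(0,1,3)}$. Using $H(W_3|Z_3,X^{(0,1,3)})=0$ and $H(W_2|Z_3,X^{(1,0,2)})=0$ from (\ref{eq:1}) together with (\ref{1711}), the joint entropy on the right of the sub-modular inequality covers all three files and is therefore at least $H(W_{[3]})=3$. Combined with the trivial $H(Z_3,X^{(0,1,3)})\leq M+R(M)$, this delivers
\begin{equation*}
M+R(M)+H(W_1,X^{(0,1,3)},X^{(1,0,2)})\geq 3+H(X^{(0,1,3)}).
\end{equation*}
This already recovers $M+R(M)+H(W_1,\cdot,\cdot)\geq 3$, but leaves a residual gap of $1-H(X^{(0,1,3)})$ to close.

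To close that gap I plan to combine the above with an auxiliary sub-modular chain that re-introduces $H(X^{(0,1,3)})$ on the right with the correct sign so that summing telescopes to $+4$. A promising candidate is Proposition~\ref{Proposition 1} applied to $A=X^{(0,1,3)}$ with $(f,g)=(2,3)$, collapsed via the user- and file-index symmetries of Lemma~\ref{Lemma0} to $2\,H(Z_3,X^{(0,1,3)})\geq H(X^{(0,1,3)})+3$, supplemented by either a conditional sub-modular inequality of the form $H(Z_3|W_1)+H(X^{(0,1,3)},X^{(1,0,2)}|W_1)\geq H(W_2,W_3|W_1)=2$, or a further application of Proposition~\ref{Proposition 1} to a set involving both $X^{(0,1,3)}$ and $X^{(1,0,2)}$. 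The main obstacle will be choosing the correct weights so that the combined chain is tight simultaneously at all three corner points $(11/6,1/2)$, $(2,1/3)$, $(3,0)$ at which (\ref{070}) must hold with equality; since slack at any one of them would break the proof. As already noted for Lemmas~\ref{LemmaNan02}--\ref{LemmaNan04}, the precise weighting that telescopes cleanly was located via the computer-aided outer-bound discovery framework of \cite{e20080603}, and the written-out proof would mirror that framework by exhibiting the chain and verifying each step by invoking sub-modularity, (\ref{eq:0}), (\ref{eq:1}), (\ref{eq:2}), Proposition~\ref{Proposition 1}, and the symmetries of Definitions~\ref{Definition 3}--\ref{Definition 4}.
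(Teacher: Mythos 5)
Your treatment of (\ref{071}) is fine: it is indeed the same inequality as (\ref{032}), and your description of its proof (Proposition~\ref{Proposition 1} with $A=(W_1,X^{(0,1,3)},X^{(1,0,2)})$ and $(f,g)=(1,2)$, then collapsing $H(Z_1,W_1,X^{(0,1,3)},X^{(1,0,2)})=H(Z_1,X^{(1,0,2)})\leq M+R(M)$ and symmetrically for $Z_2$) matches what the paper does in Appendix~\ref{AppF}. The problem is (\ref{070}), where your argument is not complete. Your main chain proves only $M+R(M)+H(W_1,X^{(0,1,3)},X^{(1,0,2)})\geq 3+H(X^{(0,1,3)})$, and the deficit $1-H(X^{(0,1,3)})$ is real: $H(X^{(0,1,3)})\leq R(M)$, which is $0$ at the corner point $(3,0)$ and $1/3$ at $(2,1/3)$, so no lower bound on $H(X^{(0,1,3)})$ can close the gap. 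The supplements you propose do not assemble into a proof. The inequality $2H(Z_3,X^{(0,1,3)})\geq H(X^{(0,1,3)})+3$ has $H(X^{(0,1,3)})$ on the same side as in your main chain and costs two additional copies of $M+R(M)$ on the left, so adding it with any positive weight makes things worse, not better; and you explicitly defer the choice of weights to a computer search you have not carried out. As written, the proposal establishes only $M+R(M)+H(W_1,X^{(0,1,3)},X^{(1,0,2)})\geq 3$.

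The structural point you are missing is that the extra ``$+1$'' must come from decoding a file out of the $M+R(M)$ budget, not from the residual $H(X^{(0,1,3)})$ of a sub-modularity step. The paper does this by writing $M+R(M)\geq H(Z_1,X^{(1,0,2)})=H(Z_1,X^{(1,0,2)},W_1)\geq H(Z_1,W_1)$, then adding and subtracting $H(W_1,X^{(1,0,2)})=H(W_1,X^{(2,0,1)})$ (user-index symmetry) and applying sub-modularity twice to produce $H(W_1)+H(Z_3,W_1,X^{(0,1,3)},X^{(1,0,2)})=1+3=4$. Ironically, your own auxiliary inequality $H(Z_3|W_1)+H(X^{(0,1,3)},X^{(1,0,2)}|W_1)\geq H(W_2,W_3|W_1)=2$, which is equivalent to $H(Z_3,W_1)+H(W_1,X^{(0,1,3)},X^{(1,0,2)})\geq 4$, would finish the proof in one more line if you paired it with $M+R(M)\geq H(Z_3)+H(X^{(2,0,1)})\geq H(Z_3,X^{(2,0,1)})=H(Z_3,X^{(2,0,1)},W_1)\geq H(Z_3,W_1)$, using a delivery signal from which \emph{User 3} decodes $W_1$. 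But that step --- bounding $H(Z_3,W_1)$ by $M+R(M)$ --- is exactly what your proposal never supplies, so the gap stands.
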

Adding (\ref{070}) and (\ref{071}) together, we have
$
3M+3R(M)+H(W_1,X^{(0,1,3)},X^{(1,0,2)})-H(W_1,X^{(0,1,3)},\break X^{(1,0,2)}) =3M+3R(M)\geq7. \nonumber
$
Thus, (\ref{eq:7}) in Theorem~\ref{MainResult} is proved.

\section{Converse Proof of Theorem \ref{MR_normal}} \label{AppB}
For the 3-user D2D coded caching problem, we denote the request vector as $D\stackrel{\triangle}{=}(d_1,d_2,d_3)$, $d_1,d_2,d_3 \in [N]$. Each user sends a signal that is received without error by the other two users. The transmitted signal generated from User $k$ is denoted as $X_k^{D}$, consisting of $R_k^D(M)$ bits. Hence, we have:
	\begin{align}
		H(X_k^{D}|Z_k)=0. \label{eq:17}
	\end{align}
	Similar to the 3-user coded caching with 2RR1S problem, (\ref{1711}) and decodability constraint are applicable here. The decodability constraint means that it is required that the caching and delivery scheme must satisfy the constraint of correct decoding, i.e., 
	\begin{equation}
		\begin{aligned}\label{eq:18}
			H\left(W_{d_1}|Z_{1},X_2^{(d_1,d_2,d_3)},X_3^{(d_1,d_2,d_3)} \right)=0,\\
			H\left(W_{d_2}|Z_{2},X_1^{(d_1,d_2,d_3)},X_3^{(d_1,d_2,d_3)} \right)=0,\\
			H\left(W_{d_3}|Z_{3},X_1^{(d_1,d_2,d_3)},X_2^{(d_1,d_2,d_3)} \right)=0.
		\end{aligned}
	\end{equation}
Combining (\ref{eq:17}) and (\ref{eq:18}), one can decode any file by knowing the cache of all users, which implies that we are interested in the case where $3M\geq N$ and
	\begin{equation}
		H(W_{[N]}|Z_1,Z_2,Z_3)=0. \nonumber
	\end{equation}
$R(M)$ is defined as the worst-case delivery rate of all users, i.e., $	R(M)=\max_D  R^D_1(M)+R^D_2(M)+R^D_3(M)$.
	
	Moreover, we may define the file-index-symmetric and user-index-symmetric schemes for the 3-user D2D coded caching problem and again have the result of Lemma \ref{Lemma0} for this model also. Hence, for 3-user D2D coded caching problems, it is sufficient to consider caching and delivery schemes that satisfy both user-index symmetry and file-index symmetry. 

We first derive the following proposition for the 3-user D2D coded caching problem.

\begin{Proposition}\label{Proposition 2}
	For the 3-user D2D coded caching problem, when $N\geq 2$, the following equation must be satisfied for any caching and delivery scheme: 
	\begin{align}
		3M+2NR(M)\geq 3N. \nonumber
	\end{align}
\end{Proposition}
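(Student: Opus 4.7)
My plan is to exploit decodability by choosing demand vectors that cumulatively expose all $N$ files, and then sum cyclically over the three choices of ``decoding user'' to get the factor of $3$ on $M$ and $2$ on $NR(M)$ that the bound requires.

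First I would consider the $N$ ``aligned'' demand vectors $D^{(n)} = (n,n,n)$ for $n \in [N]$, i.e., all three users request file $W_n$. By the decodability constraint (\ref{eq:18}) applied to User~$1$, one has $H(W_n \mid Z_1, X_2^{D^{(n)}}, X_3^{D^{(n)}}) = 0$ for every $n$. Stacking these $N$ constraints yields
\begin{equation*}
H\bigl(W_{[N]} \,\bigm|\, Z_1, \{X_2^{D^{(n)}}, X_3^{D^{(n)}}\}_{n=1}^N\bigr) = 0,
\end{equation*}
so that $N = H(W_{[N]}) \leq H(Z_1, \{X_2^{D^{(n)}}, X_3^{D^{(n)}}\}_n)$. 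Subadditivity of entropy together with $H(Z_1) \leq M$ and $H(X_k^{D}) \leq R_k^{D}(M)$ then gives
\begin{equation*}
N \leq M + \sum_{n=1}^{N} \bigl(R_2^{D^{(n)}}(M) + R_3^{D^{(n)}}(M)\bigr).
\end{equation*}

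Next I would repeat the same argument with User~$2$ and User~$3$ playing the role of the decoding user, yielding analogous inequalities in which $Z_1$ is replaced by $Z_2$ or $Z_3$ and the summed signal rates cycle accordingly. Adding the three resulting inequalities, every pair $(k,n)$ contributes $R_k^{D^{(n)}}(M)$ exactly twice (once for each of the two users other than $k$), so
\begin{equation*}
3N \leq 3M + 2 \sum_{n=1}^{N} \bigl(R_1^{D^{(n)}}(M) + R_2^{D^{(n)}}(M) + R_3^{D^{(n)}}(M)\bigr).
\end{equation*}
Finally, since $R(M) = \max_D \sum_k R_k^{D}(M)$ upper-bounds the sum rate for each individual demand $D^{(n)}$, the sum over $n$ is at most $N R(M)$, which gives $3M + 2 N R(M) \geq 3N$ as desired.

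I do not anticipate a serious obstacle: all that is used is the decodability constraint (\ref{eq:18}), the fact that each $X_k^{D}$ is a function of $Z_k$ (so entropy is bounded by the transmission rate), and subadditivity of entropy. The only design choice is picking the aligned demands $D^{(n)} = (n,n,n)$, which is natural because it makes the \emph{same} file decodable from three different cache/signal combinations and thereby produces the $3\times$ multiplicity on $M$ when the cyclic sum is taken. Symmetry assumptions from Lemma~\ref{Lemma0} are not needed for this argument.
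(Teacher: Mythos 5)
Your proof is correct, and it rests on the same cut-set skeleton as the paper's: three inequalities, one per decoding user, each asserting that a user's cache together with the other two users' signals over a suitable family of $N$ demands must reveal all $N$ files, summed so that every cache appears once and every signal rate twice. The difference is in how the $N$ demands are furnished. The paper fixes the single demand $(1,2,2)$, writes $NH(X_2^{(1,2,2)})$ as a sum of entropies of signals for the demands $(i,1,1)$, $i\in[2:N]$, via the file-index-symmetry property (step (\ref{Nan40})), and therefore needs the symmetrization reduction of Lemma~\ref{Lemma0} to restrict attention to symmetric schemes. You instead take the $N$ aligned demands $(n,n,n)$ explicitly and bound each $H(X_k^{D^{(n)}})$ by its own rate $R_k^{D^{(n)}}(M)$, then use $\sum_k R_k^{D^{(n)}}(M)\leq R(M)$ separately for each $n$; this makes the argument self-contained and valid for arbitrary (not necessarily symmetric) schemes, at no cost in the resulting bound. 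Both routes land on $3N\leq 3M+2NR(M)$; yours is the more elementary of the two, while the paper's fits the symmetrized framework it uses throughout its other converse arguments.
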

\begin{proof}
	We write the chain of inequalities as
	\begin{align} \label{eq:25}
		3M+2NR(M)
		\geq&H(Z_{1})+H(Z_{2})+H(Z_{3})+2N[H(X_1^{(1,2,2)})+H(X_2^{(1,2,2)})+H(X_3^{(1,2,2)})]. 
	\end{align}
	Observe that
	\begin{align}
		&H(Z_{1})+NH(X_2^{(1,2,2)})+NH(X_3^{(1,2,2)})
		\nonumber\\
		= &H(Z_{1})+H(X_2^{(1,2,2)})+H(X_3^{(1,2,2)})+\sum_{i=2}^N[H(X_2^{(i,1,1)})+H(X_3^{(i,1,1)})]
		\label{Nan40}\\
		\geq& H(Z_{1})+H(X_2^{(1,2,2)})+H(X_3^{(1,2,2)})+H(X_2^{([2:N],1,1)})+H(X_3^{([2:N],1,1)})
		\nonumber\\
		\geq& H(Z_{1}, X_2^{(1,2,2)}, X_3^{(1,2,2)}, X_2^{([2:N],1,1)}, X_3^{([2:N],1,1)})
		\nonumber\\
		= & H(Z_{1}, X_2^{(1,2,2)}, X_3^{(1,2,2)}, X_2^{([2:N],1,1)}, X_3^{([2:N],1,1)},W_{[N]})
		\nonumber\\
		= & N, \label{eq:26}
	\end{align}
	where $X_2^{([2:N],1,1)}=\{X_2^{(2,1,1)},...,X_2^{(N,1,1)}\}$, $X_3^{([2:N],1,1)}=\{X_3^{(2,1,1)},...,X_3^{(N,1,1)}\}$, (\ref{Nan40}) follows from the property of file-index-symmetric schemes, and (\ref{eq:26}) follows from (\ref{1711}), (\ref{eq:17}) and (\ref{eq:18}).
	
	Similarly, we have 
	\begin{align}
		H(Z_{2})+NH(X_1^{(1,2,2)})+NH(X_3^{(1,2,2)}) &\geq N, \label{Nan100}\\
		H(Z_{3})+NH(X_1^{(1,2,2)})+NH(X_2^{(1,2,2)}) & \geq N. \label{Nan101}
	\end{align}
	From (\ref{eq:25}), (\ref{eq:26}), (\ref{Nan100}) and (\ref{Nan101}), we have proved Proposition~\ref{Proposition 2}.
\end{proof}
The bound $3M+4R(M)\geq 6$ is a special case of Proposition~\ref{Proposition 2} when $N=2$. Hence, we are left to prove the bounds $2M+R(M)\geq 3$ and $3M+2R(M)\geq 5$.

We first prove the bound $2M+R(M)\geq 3$. Due to the property of file-index-symmetric and user-index-symmetric schemes for the 3-user D2D coded caching model, we have
	\begin{equation}
		\begin{aligned}\label{eq:36}
			H(X_1^{(1,1,2)})=H(X_1^{(1,2,1)})&=H(X_2^{(1,1,2)})=H(X_2^{(1,2,2)})=H(X_3^{(1,2,1)})=H(X_3^{(1,2,2)}),\\
			&H(X_1^{(1,2,2)})=H(X_2^{(1,2,1)})=H(X_3^{(1,1,2)}).
		\end{aligned}
	\end{equation}

Then, we have the following lemma, whose proof is given in Appendix~\ref{AppG}. We note here that the proof benefited from the computer-aided discovery of outer bounds \cite{e20080603}.

\begin{Lem} \label{LemmaNan06}
	In the 3-user D2D coded caching problem, when $N=2$, we have the following results:
	\begin{align}
		M+R(M) \geq & H(X_1^{(1,2,2)})-H(W_1,X_1^{(1,2,1)},X_2^{(1,2,1)},X_3^{(1,2,1)})-2H(W_1,Z_1,Z_2)\label{0713}\\
		& +3H(W_1,W_2)+H(W_1), \nonumber\\
		2M+R(M) \geq & H(X_1^{(1,2,1)},X_2^{(1,2,1)})+H(Z_1,X_2^{(1,2,1)},X_3^{(1,2,1)}) \label{0714}\\
		& -H(W_1,X_1^{(1,2,1)},X_2^{(1,2,1)},X_3^{(1,2,1)})+H(W_1,W_2), \nonumber\\
		3M+R(M) \geq & 2H(Z_1)+H(Z_1,X_2^{(1,1,2)},X_3^{(1,1,2)})+H(X_1^{(1,1,2)}). \label{0715}
	\end{align}
\end{Lem}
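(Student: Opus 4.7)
The plan is to establish each of (\ref{0713}), (\ref{0714}), and (\ref{0715}) separately by starting from the basic bounds $M \geq H(Z_k)$ and $R(M) \geq H(X_1^D)+H(X_2^D)+H(X_3^D)$ for a suitably chosen demand vector $D$, and then reducing the resulting sum to the stated form using (i) submodularity and subadditivity of entropy, (ii) the decodability constraint (\ref{eq:18}), (iii) the determinism constraints (\ref{1711}) and (\ref{eq:17}), and (iv) the user-index and file-index symmetry relations in (\ref{eq:36}), whose use is justified by the analogue of Lemma \ref{Lemma0} for the traditional 3-user D2D model. The first two inequalities are aligned with demand $(1,2,1)$ and the third with demand $(1,1,2)$, so that the signal aggregates on the RHS match $R(M)\geq\sum_{k}H(X_k^D)$.

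Inequality (\ref{0715}) is the most direct. Taking $D=(1,1,2)$ and invoking user-index symmetry to equate $H(Z_1)=H(Z_2)=H(Z_3)$ gives $3M+R(M)\geq 3H(Z_1)+H(X_1^{(1,1,2)})+H(X_2^{(1,1,2)})+H(X_3^{(1,1,2)})$, after which a single subadditivity step $H(Z_1)+H(X_2^{(1,1,2)})+H(X_3^{(1,1,2)})\geq H(Z_1,X_2^{(1,1,2)},X_3^{(1,1,2)})$ rearranges immediately to (\ref{0715}). For (\ref{0714}) I would begin with $2M+R(M)\geq H(Z_2)+H(Z_3)+H(X_1^{(1,2,1)})+H(X_2^{(1,2,1)})+H(X_3^{(1,2,1)})$, use subadditivity $H(X_1^{(1,2,1)})+H(X_2^{(1,2,1)})\geq H(X_1^{(1,2,1)},X_2^{(1,2,1)})$ to produce the first RHS term, and then group the remaining $H(Z_2)+H(Z_3)+H(X_3^{(1,2,1)})$ via a submodular step so as to yield $H(Z_1,X_2^{(1,2,1)},X_3^{(1,2,1)})+H(W_1,W_2)$ modulo a correction of exactly $H(W_1,X_1^{(1,2,1)},X_2^{(1,2,1)},X_3^{(1,2,1)})$, which appears on the RHS with a negative sign. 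The crucial identity making this correction telescope cleanly is that $W_1$ is decodable from $(Z_1,X_2^{(1,2,1)},X_3^{(1,2,1)})$ and $X_1^{(1,2,1)}$ is a function of $Z_1$, so that $H(W_1,X_1^{(1,2,1)},X_2^{(1,2,1)},X_3^{(1,2,1)})\leq H(Z_1,X_2^{(1,2,1)},X_3^{(1,2,1)})$.

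Inequality (\ref{0713}), which has by far the most elaborate right-hand side, is where I expect the main obstacle. I would again start with $M+R(M)\geq H(Z_1)+H(X_1^{(1,2,1)})+H(X_2^{(1,2,1)})+H(X_3^{(1,2,1)})$, exploit the symmetry $H(X_2^{(1,2,1)})=H(X_1^{(1,2,2)})$ from (\ref{eq:36}) to expose the first RHS term $H(X_1^{(1,2,2)})$, and then expand the remaining entropies by the chain rule in such a way that the aggregates $H(W_1,Z_1,Z_2)$ (picked up twice) and $H(W_1,X_1^{(1,2,1)},X_2^{(1,2,1)},X_3^{(1,2,1)})$ (picked up once) appear as joint entropies to be subtracted, while the balance produces the constants $3H(W_1,W_2)+H(W_1)$ through repeated use of decodability and determinism. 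As the authors note, the exact pattern of submodular combinations was discovered by the computer-aided outer-bound search of \cite{e20080603}, so the proof is essentially a verification that one specific linear combination of Shannon-type inequalities, symmetry identities, and problem-specific equalities collapses to the stated bound. The hard part will be identifying the correct order of submodular expansions and the correct grouping of cache/signal aggregates; I would present each chain as a numbered sequence of inequalities in the style of the proofs of Lemmas \ref{LemmaNan02}--\ref{LemmaNan05}, with each step annotated by the justifying property---submodularity, subadditivity, (\ref{1711}), (\ref{eq:17}), (\ref{eq:18}), or (\ref{eq:36}).
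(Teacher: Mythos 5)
Your treatment of (\ref{0715}) is exactly the paper's one-line argument ($3M\geq 3H(Z_1)$ by user-index symmetry, then one subadditivity step), so that part is fine. The problems are with the other two inequalities. For (\ref{0714}), your grouping is not the paper's and I do not see how it can close: once you spend $H(X_1^{(1,2,1)})+H(X_2^{(1,2,1)})\geq H(X_1^{(1,2,1)},X_2^{(1,2,1)})$ up front, the leftover $H(Z_2)+H(Z_3)+H(X_3^{(1,2,1)})$ cannot invoke any decodability constraint --- in the traditional 3-user D2D model each requester needs the signals of \emph{both} other users (see (\ref{eq:18})), and, unlike the 2RR1S model, two caches alone do not determine $W_{[N]}$ --- so there is no mechanism to manufacture the $H(W_1,W_2)$ term you need. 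The paper instead groups $H(Z_2)+H(X_1^{(1,2,1)})+H(X_3^{(1,2,1)})\geq H(Z_2,X_1^{(1,2,1)},X_3^{(1,2,1)})=H(W_2,Z_2,X_1^{(1,2,1)},X_2^{(1,2,1)},X_3^{(1,2,1)})$ (decoding $W_2$ at User 2 and using that $X_2^{(1,2,1)}$ is a function of $Z_2$) and $H(Z_1)+H(X_2^{(1,2,1)})\geq H(Z_1,X_2^{(1,2,1)})$, adds and subtracts $H(W_1,X_1^{(1,2,1)},X_2^{(1,2,1)},X_3^{(1,2,1)})$, and only at the \emph{end} produces $H(X_1^{(1,2,1)},X_2^{(1,2,1)})$ via a submodular exchange that moves $X_3^{(1,2,1)}$ onto $(Z_1,X_1^{(1,2,1)},X_2^{(1,2,1)})$. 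Your observation that $H(W_1,X_1^{(1,2,1)},X_2^{(1,2,1)},X_3^{(1,2,1)})\leq H(Z_1,X_2^{(1,2,1)},X_3^{(1,2,1)})$ is true but is not the inequality that makes the correction term telescope.

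For (\ref{0713}) you have given a plan, not a proof: ``expand the remaining entropies by the chain rule in such a way that the aggregates \dots appear'' is precisely the step whose existence is in question. Two concrete discrepancies with the paper: the paper's chain starts from demand $(1,2,2)$, i.e., $M+R(M)\geq H(Z_1)+\sum_k H(X_k^{(1,2,2)})\geq H(Z_1,X_2^{(1,2,2)},X_3^{(1,2,2)})+H(X_1^{(1,2,2)})$, rather than from $(1,2,1)$ with a symmetry swap on a single signal; and the derivation relies on several non-obvious devices --- adding and subtracting $H(Z_1,Z_2,X_2^{(1,2,2)},W_1)$ so that a submodular step yields $H(W_1,W_2)-H(W_1,Z_1,Z_2)$, a user-index swap $H(W_1,Z_1,X_2^{(1,2,2)})=H(W_1,Z_1,X_3^{(1,2,2)})$, the monotonicity bound $H(W_1,Z_1)\leq H(W_1,Z_1,Z_2)$ to create the coefficient $-2$, and a second add-and-subtract of $H(W_1,X_1^{(1,2,1)})$ to bring in the $(1,2,1)$-demand signals --- none of which appear in your sketch. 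Since (\ref{0713}) carries essentially all of the difficulty of the lemma, the proposal as written does not establish the statement.
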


Using (\ref{eq:36}) to combine (\ref{0713}), (\ref{0714}) and (\ref{0715}), we have
\begin{align}
	&6M+3R(M)
	\geq 2[H(Z_1,X_2^{(1,1,2)},X_3^{(1,1,2)})+H(Z_2,X_1^{(1,1,2)},X_3^{(1,1,2)})\nonumber\\
	& -H(W_1,X_1^{(1,2,1)},X_2^{(1,2,1)},X_3^{(1,2,1)})]
	-2H(W_1,Z_1,Z_2)+4H(W_1,W_2)+H(W_1)
	\label{Nan45.99}\\
	= &2[H(Z_1,W_1,X_1^{(1,2,1)},X_2^{(1,2,1)},X_3^{(1,2,1)})+H(Z_2,W_1,X_1^{(1,2,1)},X_2^{(1,2,1)},X_3^{(1,2,1)})
	\label{Nan45.995}\\
	& -H(W_1,X_1^{(1,2,1)},X_2^{(1,2,1)},X_3^{(1,2,1)})]-2H(W_1,Z_1,Z_2)+4H(W_1,W_2)+H(W_1)
	\nonumber\\
	\geq &2H(Z_1,Z_2,W_1,X_1^{(1,2,1)},X_2^{(1,2,1)},X_3^{(1,2,1)})-2H(W_1,Z_1,Z_2)+4H(W_1,W_2)+H(W_1)
	\label{Nan46}\\
	\geq &4H(W_1,W_2)+H(W_1)= 9,
	\label{eq:40}
\end{align}
where (\ref{Nan45.99}) follows from the property of user-index-symmetric and file-index-symmetric schemes,
(\ref{Nan45.995}) follows from (\ref{eq:17}) and (\ref{eq:18}).
(\ref{Nan46}) follows from the sub-modular property of the entropy function.
The chain of inequalities (\ref{eq:40}) proves the bound $2M+R(M)\geq 3$.

Next, we proceed to prove the bound $3M+2R(M)\geq 5$. We have the following lemma, whose proof is given in Appendix~\ref{AppH}. We note here that the proof benefited from the computer-aided discovery of outer bounds \cite{e20080603}.
\begin{Lem} \label{LemmaNan07}
	In the 3-user D2D coded caching problem, when $N=2$, we have the following results:
	\begin{align}
		3M+2R(M)-H(W_1,Z_1)-H(W_1,X_1^{(1,1,2)},X_2^{(1,1,2)}) \geq & 2, \label{0716}\\
		H(W_1,Z_1)+H(W_1,X_1^{(1,1,2)},X_2^{(1,1,2)}) \geq & 3. \label{0717}
	\end{align}
\end{Lem}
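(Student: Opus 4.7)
The plan is to follow the same general recipe established in the earlier lemmas of this section (Lemma~\ref{LemmaNan02}--Lemma~\ref{LemmaNan06}): reduce to user-index and file-index symmetric schemes, then combine (i) the cache size and worst-case rate bounds $H(Z_k)\leq M$ and $H(X_1^D)+H(X_2^D)+H(X_3^D)\leq R(M)$, (ii) the deterministic relations in (\ref{1711}) and (\ref{eq:17}), (iii) the decodability constraint (\ref{eq:18}) specialized to $D=(1,1,2)$, and (iv) repeated applications of submodularity $H(A,C)+H(B,C)\geq H(A\cup B,C)+H(C)$. Throughout I would fix the demand $D=(1,1,2)$, for which User~1 and User~2 can recover $W_1$ from $(Z_1,X_2^{(1,1,2)},X_3^{(1,1,2)})$ and from $(Z_2,X_1^{(1,1,2)},X_3^{(1,1,2)})$ respectively, while User~3 recovers $W_2$ from $(Z_3,X_1^{(1,1,2)},X_2^{(1,1,2)})$.

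For (\ref{0716}), I would start from
\begin{align*}
3M+2R(M)\geq H(Z_1)+H(Z_2)+H(Z_3)+2H(X_1^{(1,1,2)})+2H(X_2^{(1,1,2)})+2H(X_3^{(1,1,2)})
\end{align*}
and then manufacture the right-hand side of (\ref{0716}) by grouping these terms. Concretely, I would form pairs like $H(Z_1)+H(X_1^{(1,1,2)})$ and use the fact that $X_1^{(1,1,2)}$ is a deterministic function of $Z_1$ to keep $H(Z_1)$ while peeling off slack, then combine $H(X_2^{(1,1,2)})+H(X_3^{(1,1,2)})+H(Z_3)$ with the decodability-driven identity $H(Z_3,X_1^{(1,1,2)},X_2^{(1,1,2)})=H(Z_3,X_1^{(1,1,2)},X_2^{(1,1,2)},W_2)$ to force $H(W_1,W_2)=2$ to appear as a fixed additive term. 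The leftover cache and signal entropies should then be assembled, via a further submodularity step that inserts $W_1$, into $H(W_1,Z_1)$ and $H(W_1,X_1^{(1,1,2)},X_2^{(1,1,2)})$.

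For (\ref{0717}), which is purely entropic, I would first apply submodularity with the common variable $W_1$ to get
\begin{align*}
H(W_1,Z_1)+H(W_1,X_1^{(1,1,2)},X_2^{(1,1,2)})\geq H(W_1,Z_1,X_2^{(1,1,2)})+H(W_1),
\end{align*}
using that $X_1^{(1,1,2)}$ is a function of $Z_1$, so this reduces the claim to $H(W_1,Z_1,X_2^{(1,1,2)})\geq 2$. I would then introduce the companion quantity $H(Z_3,X_1^{(1,1,2)},X_2^{(1,1,2)})$ and apply submodularity with $X_2^{(1,1,2)}$ as common variable, invoking the decodability of $W_2$ at User~3 to replace $H(Z_1,Z_3,X_1^{(1,1,2)},X_2^{(1,1,2)})$ by something lower bounded by $H(W_1,W_2)=2$. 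The resulting negative term $-H(Z_3,X_1^{(1,1,2)},X_2^{(1,1,2)})$ would be absorbed by first rewriting $H(Z_3,X_1^{(1,1,2)},X_2^{(1,1,2)})=H(W_2,Z_3,X_1^{(1,1,2)},X_2^{(1,1,2)})$ and then using user-index symmetry (swapping Users~1 and~3 while keeping the demand structure via the induced file permutation) to relate it back to entropies already appearing on the left-hand side of (\ref{0717}), so that after averaging the bound closes at $3$.

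The main obstacle is that a single submodularity step only delivers $H(W_1,Z_1)+H(W_1,X_1^{(1,1,2)},X_2^{(1,1,2)})\geq H(W_1,W_2)+H(W_1)=3$ in the idealized case where $(Z_1,X_2^{(1,1,2)})$ already determines $W_2$, but this is \emph{not} guaranteed by the decodability constraint, which only lets $(Z_3,X_1^{(1,1,2)},X_2^{(1,1,2)})$ decode $W_2$. The tight example at the corner $(1,1)$ in the achievability of Theorem~\ref{MR_normal} shows that naive submodularity yields only $8/3$, short of the required $3$. So the hard part is identifying the precise chain of submodular expansions, decodability substitutions, and symmetry-induced equalities (averaging over cyclic user permutations and over the file swap $\hat{\pi}(1)=2,\hat{\pi}(2)=1$) that recover the missing $1/3$; as acknowledged in the paper, this chain was located via the computer-aided outer-bound discovery of \cite{e20080603}, and the proof will consist of a careful bookkeeping of these steps analogous to the arguments given for Lemmas~\ref{LemmaNan02}--\ref{LemmaNan06}.
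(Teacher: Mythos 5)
Your proposal correctly identifies the toolbox (symmetry reduction, submodularity, decodability, the deterministic relations (\ref{1711}) and (\ref{eq:17})), but it does not close either inequality, and for (\ref{0717}) the concrete first step you commit to is fatal. You reduce (\ref{0717}) to the sub-claim $H(W_1,Z_1,X_2^{(1,1,2)})\geq 2$, but this sub-claim is \emph{false}: for the optimal scheme at the corner point $(1,1)$ in Theorem~\ref{MR_normal} (where User 1 caches $(A_1\oplus B_1,A_2\oplus B_2,A_3,A_4,B_3,B_4)$ and $X_2^{(1,1,2)}=\{A_5,A_6\}$) one computes $H(W_1,Z_1,X_2^{(1,1,2)})=\frac{5}{3}<2$, even though $H(W_1,Z_1)+H(W_1,X_1^{(1,1,2)},X_2^{(1,1,2)})=\frac{5}{3}+\frac{4}{3}=3$ exactly. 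So the submodularity step $H(W_1,Z_1)+H(W_1,X_1^{(1,1,2)},X_2^{(1,1,2)})\geq H(W_1,Z_1,X_2^{(1,1,2)})+H(W_1)$ already discards an irrecoverable $\frac{1}{3}$, and no subsequent manipulation of $H(W_1,Z_1,X_2^{(1,1,2)})$ can restore it; your follow-up combination with $H(Z_3,X_1^{(1,1,2)},X_2^{(1,1,2)})$ leaves a residual deficit that cannot be absorbed by symmetry alone. The idea you are missing is to \emph{never} split the two terms of (\ref{0717}) against each other. The paper instead adds two auxiliary entropies $H(W_1,X_2^{(2,1,1)})$ and $H(W_1,X_1^{(1,1,2)},Z_3)$ to the left-hand side, pairs $H(W_1,Z_1)$ with $H(W_1,X_2^{(2,1,1)})$ (yielding $H(W_1,Z_1,X_2^{(2,1,1)})+H(W_1)$) and pairs $H(W_1,X_1^{(1,1,2)},X_2^{(1,1,2)})$ with $H(W_1,X_1^{(1,1,2)},Z_3)$ (yielding $H(W_1,W_2)+H(W_1,X_1^{(1,1,2)})$ via decodability of $W_2$ at User 3), and then observes that by user-index symmetry the two residuals $H(W_1,Z_1,X_2^{(2,1,1)})$ and $H(W_1,X_1^{(1,1,2)})$ are \emph{exactly equal} to the two auxiliaries that were added, so they cancel and the bound closes at $3$ with no loss. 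Note the deliberate use of the demand $(2,1,1)$, not $(1,1,2)$, for the auxiliary signal: that is what makes the cancellation under the user permutation work.

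For (\ref{0716}) your sketch is too vague to verify and is missing two specific ingredients of the paper's derivation: (i) the bound starts from $3M\geq 3H(Z_1)$ (three copies of the \emph{same} cache, not $H(Z_1)+H(Z_2)+H(Z_3)$), together with the symmetry identities (\ref{eq:36}) to rewrite one copy of the rate as the demand-$(1,2,2)$ signals, producing $2H(Z_1,X_2^{(1,1,2)},X_3^{(1,1,2)})+H(Z_1,X_2^{(1,2,2)},X_3^{(1,2,2)})$; and (ii) a user-index-symmetry step converting one copy of $H(W_1,Z_1,X_2^{(1,1,2)})$ into $H(W_1,Z_2,X_1^{(1,1,2)})$, which is what allows submodularity to merge $Z_1$ with $Z_2$ while leaving behind the cache-free term $H(W_1,X_1^{(1,1,2)},X_2^{(1,1,2)})$; your proposed groupings (e.g., $H(Z_3)+H(X_2^{(1,1,2)})+H(X_3^{(1,1,2)})$, which does not even contain $X_1^{(1,1,2)}$ and so cannot invoke the decodability of $W_2$ at User 3) do not assemble into the required right-hand side. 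You are candid that the precise chain would have to come from the computer-aided search, but as written the proposal does not constitute a proof of either inequality.
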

Adding (\ref{0716}) and (\ref{0717}) together, we have
$
3M+2R(M)\geq 5.
$

%
%
Thus, (\ref{eq:2,3}) in Theorem~\ref{MR_normal} is proved.

\section{Proof of Lemma \ref{LemmaNan02}} \label{AppC}
We will first prove (\ref{015}).
\begin{align}
	8M+6R(M)+2H(X^{(1,0,1)}) \geq& 4H(Z_1)+4H(Z_2)+4H(X^{(0,1,2)})\nonumber\\
	&+2H(X^{(1,0,1)}) +2H(X^{(1,0,1)})\nonumber\\
	\geq& 4H(Z_1, X^{(1,0,1)})+4H(Z_2, X^{(0,1,2)})\nonumber\\
	=&4H(Z_1, X^{(1,0,1)}, X^{(0,1,2)})+4H(Z_2, X^{(0,1,2)}, X^{(1,0,1)})\label{006}\\
	=&4H(Z_1, X^{(1,0,1)}, X^{(0,1,2)}, W_1)\label{007}\\
	&+4H(Z_2, X^{(0,1,2)}, X^{(1,0,1)}, W_1) \nonumber\\
	\geq& 4H(W_1, X^{(0,1,2)}, X^{(1,0,1)})+4H (W_1,W_2), \label{008}
\end{align}
where (\ref{006}) follows from the fact that $X^{(0,1,2)}$ is a deterministic function of $Z_1$ and $X^{(1,0,1)}$ is a deterministic function of $Z_2$, (\ref{007}) follows from the fact that knowing $(Z_1, X^{(1,0,1)})$ can decode $W_1$ and knowing $(Z_2, X^{(0,1,2)})$ can decode $W_1$, and (\ref{008}) follows from (\ref{eq:28}).
Thus, (\ref{015}) is proved.

Next, we will prove (\ref{016}). We have
\begin{align}
	4M+2R(M)-2H(X^{(0,1,1)}) \geq& 2H(Z_1)+2H(Z_2)+2H(X^{(0,1,2)})-2H(X^{(0,1,1)}) \nonumber\\
	=& 2H(Z_1, X^{(0,1,1)})+2H(Z_2)+2H(X^{(0,1,2)})\label{000}\\
	&-2H(X^{(0,1,1)})\nonumber\\
	\geq& 2H(Z_1, X^{(0,1,1)})+2H(Z_2,X^{(0,1,2)})-2H(X^{(0,1,1)})\nonumber\\
	\geq&  2H(W_1,Z_1, X^{(0,1,1)})-2H(W_1, X^{(0,1,1)})\label{001}\\
	&+2H(Z_2,X^{(0,1,2)}), \nonumber
\end{align}
where (\ref{000}) follows from the fact that $X^{(0,1,1)}$ is a deterministic function of $Z_1$, (\ref{001}) follows the sub-modular property of the entropy function. Hence, we further have
\begin{align}
	&4M+2R(M)+2H(W_1, X^{(0,1,1)}) -2H(X^{(0,1,1)}) \nonumber\\
	\geq & 2H(W_1,Z_1, X^{(0,1,1)})+2H(Z_2,X^{(0,1,2)}) \nonumber\\
	\geq & 2H(W_1,Z_1)+2H(Z_2,X^{(0,1,2)}) \nonumber\\
	= & 2H(W_1,Z_1, X^{(0,1,2)})+2H(Z_2,X^{(0,1,2)}, W_1)\label{002}\\
	\geq & 2H(W_1,X^{(0,1,2)})+2H(W_1,W_2) \label{003}\\
	= & H(W_2, X^{(0,1,2)})+H(W_2,X^{(1,0,2)})+2H(W_1,W_2) \label{004}\\
	\geq & H(W_2, X^{(0,1,2)}, X^{(1,0,2)})+H(W_2)+2H(W_1,W_2), \label{005}
\end{align}
where (\ref{002}) follows from the fact that $X^{(0,1,2)}$ is a deterministic function of $Z_1$ and knowing $(Z_2,X^{(0,1,2)})$ can decode $W_1$, (\ref{003}) follows from (\ref{eq:28})
, (\ref{004}) follows from
\begin{align}
	&H(W_1,X^{(0,1,2)}) \overset{(a)}{=}H(W_1, X^{(2,0,1)})\overset{(b)}{=}H(W_2, X^{(1,0,2)}), \nonumber\\
	&H(W_1,X^{(0,1,2)}) \overset{(c)}{=} H(W_2,X^{(0,2,1)})\overset{(d)}{=} H(W_2,X^{(0,1,2)}),
	\nonumber
\end{align}
where $(a)$ and $(d)$ follow from the property of user-index-symmetric schemes, and $(b)$ and $(c)$ follow from the property of file-index-symmetric schemes, and (\ref{005}) follows again from the sub-modular function of the entropy function. Hence, (\ref{016}) is proved.

Finally, to prove (\ref{017}), we have
\begin{align}
	6M & \geq 6 H(Z_1) \nonumber\\
	& =6 H(Z_1)+3H(X^{(1,0,2)})-3H(X^{(0,1,2)}) \nonumber\\
	& \geq 3H(Z_1)+3H(Z_1,X^{(1,0,2)})-3H(X^{(0,1,2)}) \nonumber\\
	& = 3H(Z_1, X^{(0,1,2)})+3H(Z_1,X^{(1,0,2)})-3H(X^{(0,1,2)}) \label{108}\\
	& \geq 3H(Z_1, X^{(0,1,2)}, X^{(1,0,2)})+3H(Z_1,X^{(1,0,2)})-3H(X^{(0,1,2)}, X^{(1,0,2)})  \label{009}\\
	& \geq 6H(Z_1, X^{(1,0,2)}) -3H(X^{(0,1,2)}, X^{(1,0,2)})\nonumber\\
	&=3H(Z_1, X^{(1,0,2)}) +3H(Z_2, X^{(0,1,2)})-3H(X^{(0,1,2)}, X^{(1,0,2)})\label{010}\\
	&=3H(Z_1, X^{(1,0,2)}, W_1, X^{(0,1,2)}) +3H(Z_2, X^{(0,1,2)}, W_1, X^{(1,0,2)})\label{011}\\
	&\quad-3H(X^{(0,1,2)}, X^{(1,0,2)}) \nonumber\\
	& \geq 3H(X^{(1,0,2)}, W_1, X^{(0,1,2)})+3H(W_1,W_2)-3H(X^{(0,1,2)}, X^{(1,0,2)}), \label{012}
\end{align}
where (\ref{108}) follows from the fact that $X^{(0,1,2)}$ is a deterministic function of $Z_1$, (\ref{009}) follows from the sub-modular property of the entropy function, (\ref{010}) follows from the property of user-index-symmetric schemes, (\ref{011}) follows from the fact that knowing $(Z_1, X^{(1,0,2)})$ can decode $W_1$, $X^{(0,1,2)}$ is a deterministic function of $Z_1$, knowing $(Z_2, X^{(0,1,2)})$ can decode $W_1$ and $X^{(1,0,2)}$ is a deterministic function of $Z_2$, and (\ref{012}) follows from (\ref{eq:28}). Thus, (\ref{017}) is proved.

\section{Proof of Lemma \ref{LemmaNan03}} \label{AppD}
We will first prove (\ref{021}).
\begin{align}
	M+R(M)+H(W_1,X^{(2,0,1)}) & \geq H(Z_1)+H(X^{(1,0,2)})+H(W_1,X^{(2,0,1)})\nonumber\\
	&\geq H(Z_1,X^{(1,0,2)})+H(W_1,X^{(2,0,1)})\nonumber\\
	&= H(Z_1,X^{(1,0,2)},W_1)+H(W_1,X^{(2,0,1)})\label{023}\\
	&\geq H(Z_1,W_1)+H(W_1,X^{(2,0,1)})\nonumber\\
	&\geq H(Z_1,W_1,X^{(2,0,1)})+H(W_1)\label{024}\\
	&= H(Z_1,W_1,X^{(2,0,1)},W_2)+H(W_1)\label{025}\\
	&= H(W_1,W_2)+H(W_1)=3,\label{026}
\end{align}
where (\ref{023}) follows from the fact that knowing $(Z_1, X^{(1,0,2)})$ can decode $W_1$, (\ref{024}) follows from the sub-modular property of the entropy function, (\ref{025}) follows from the fact that knowing $(Z_1, X^{(2,0,1)})$ can decode $W_2$, and (\ref{026}) follows from (\ref{1711}), (\ref{eq:0}). Hence, (\ref{021}) is proved.
\vspace{-0.26in}

Next, we will prove (\ref{022}).
\begin{align}
	&2M+2R(M)-H(W_1,X^{(1,0,2)})\nonumber\\
	\geq & H(Z_1)+H(Z_2)+H(X^{(1,0,2)})+H(X^{(0,1,2)})-H(W_1,X^{(1,0,2)})\nonumber\\
	\geq & H(Z_1,X^{(1,0,2)})+H(Z_2,X^{(0,1,2)})-H(W_1,X^{(1,0,2)})\nonumber\\
	= & H(Z_1,X^{(1,0,2)},W_1,X^{(0,2,1)})+H(Z_2,X^{(0,1,2)},W_1,X^{(1,0,2)})-H(W_1,X^{(1,0,2)})\label{027}\\
	\geq & H(Z_1,X^{(1,0,2)},W_1,X^{(0,2,1)})+H(Z_2,X^{(0,1,2)},W_1,X^{(1,0,2)},X^{(0,2,1)})\label{028}\\
	& -H(W_1,X^{(1,0,2)},X^{(0,2,1)})\nonumber\\
	\geq & H(Z_2,X^{(0,1,2)},W_1,X^{(1,0,2)},X^{(0,2,1)},W_2)\label{029}\\
	= & H(W_1,W_2)=2,\label{030}
\end{align}
where (\ref{027}) follows from the fact that knowing $(Z_1, X^{(1,0,2)})$ can decode $W_1$, $X^{(0,2,1)}$ is a deterministic function of $Z_1$, knowing $(Z_2, X^{(0,1,2)})$ can decode $W_1$ and $X^{(1,0,2)}$ is a deterministic function of $Z_2$, (\ref{028}) follows from the sub-modular property of the entropy function, (\ref{029}) follows from the fact that knowing $(Z_2, X^{(0,2,1)})$ can decode $W_2$, and (\ref{030}) follows from (\ref{1711}) and (\ref{eq:0}). Hence, (\ref{022}) is proved.

\section{Proof of Lemma \ref{LemmaNan04}} \label{AppE}
We will first prove (\ref{032}).
\begin{align}
	&2M+2R(M)-H(W_1,X^{(0,1,3)},X^{(1,0,2)}) \nonumber\\
	\geq & H(Z_1)+H(Z_2)+H(X^{(1,0,2)})+H(X^{(0,1,3)})-H(W_1,X^{(0,1,3)},X^{(1,0,2)})\nonumber\\
	\geq & H(Z_1,X^{(1,0,2)})+H(Z_2,X^{(0,1,3)})-H(W_1,X^{(0,1,3)},X^{(1,0,2)})\nonumber\\
	= & H(Z_1,X^{(1,0,2)},W_1,X^{(0,1,3)})+H(Z_2,X^{(0,1,3)},W_1,X^{(1,0,2)})-H(W_1,X^{(0,1,3)},X^{(1,0,2)})\label{034}\\
	\geq & H(W_1,W_2,W_3)+H(W_1,X^{(0,1,3)},X^{(1,0,2)})-H(W_1,X^{(0,1,3)},X^{(1,0,2)})\label{035}\\
	= & 3,\nonumber
\end{align}
where (\ref{034}) follows from the fact that knowing $(Z_1, X^{(1,0,2)})$ can decode $W_1$, $X^{(0,1,3)}$ is a deterministic function of $Z_1$, knowing $(Z_2, X^{(0,1,3)})$ can decode $W_1$ and $X^{(1,0,2)}$ is a deterministic function of $Z_2$, (\ref{035}) follows from (\ref{eq:28}). Hence, (\ref{032}) is proved.

Next, we will prove (\ref{033}). Firstly, we have
\begin{align}
	&3M+2R(M)+H(X^{(1,0,3)})\nonumber\\
	\geq & 3H(Z_1)+2H(X^{(1,0,2)})+H(X^{(1,0,3)})\nonumber\\
	\geq & 2H(Z_1,X^{(1,0,2)})+H(Z_1,X^{(1,0,3)})\nonumber\\
	= & H(Z_1,X^{(1,0,2)})+H(Z_1,X^{(1,0,2)},W_1)+H(Z_1,X^{(1,0,3)},W_1)\label{037}\\
	\geq & H(Z_1,X^{(1,0,2)})+H(Z_1,X^{(1,0,2)},W_1,X^{(1,0,3)})+H(Z_1,W_1)\label{038}\\
	= & H(Z_1,X^{(1,0,2)})+H(Z_2,X^{(0,1,2)},W_1,X^{(0,1,3)})+H(Z_1,W_1)\label{039}\\
	= & H(Z_1,X^{(1,0,2)},W_1,X^{(0,1,2)},X^{(0,1,3)})+H(Z_2,X^{(0,1,2)},W_1,X^{(0,1,3)},X^{(1,0,2)})\label{040}\\
	&+H(Z_1,W_1)\nonumber\\
	\geq & H(W_1,W_2,W_3)+H(W_1,X^{(0,1,2)},X^{(0,1,3)},X^{(1,0,2)})+H(Z_1,W_1)\label{041}\\
	= & 3+H(W_1,X^{(0,1,2)},X^{(0,1,3)},X^{(1,0,2)})+H(Z_1,W_1),\label{042}
\end{align}
where (\ref{037}) follows from the fact that knowing $(Z_1, X^{(1,0,2)})$ or $(Z_1, X^{(1,0,3)})$ both can decode $W_1$, (\ref{038}) follows from the sub-modular
property of the entropy function, (\ref{039}) follows from the property of user-index-symmetric schemes,
(\ref{040}) follows from the fact that $X^{(0,1,2)},X^{(0,1,3)}$ is a deterministic function of $Z_1$ and $X^{(1,0,2)}$ is a deterministic function of $Z_2$, (\ref{041}) follows from (\ref{eq:28}). Then, we notice that
\begin{align}
	&M-H(X^{(1,0,2)})+H(X^{(0,1,2)},X^{(0,1,3)},X^{(1,0,2)})\nonumber\\
	\geq & H(Z_1)-H(X^{(1,0,2)})+H(X^{(0,1,2)},X^{(0,1,3)},X^{(1,0,2)})\nonumber\\
	= & H(Z_1,X^{(0,1,2)},X^{(0,1,3)})-H(X^{(1,0,2)})+H(X^{(0,1,2)},X^{(0,1,3)},X^{(1,0,2)})\label{043}\\
	\geq & H(Z_1,X^{(0,1,2)},X^{(0,1,3)},X^{(1,0,2)})+H(X^{(0,1,2)},X^{(0,1,3)})-H(X^{(1,0,2)})\label{044}\\
	= & H(Z_1,X^{(1,0,2)})+H(X^{(0,1,2)},X^{(0,1,3)})-H(X^{(1,0,2)})\label{045}\\
	= & H(Z_1,X^{(1,0,3)})+H(X^{(1,0,3)},X^{(2,0,3)})-H(X^{(1,0,3)})\label{046}\\
	\geq & H(Z_1,X^{(1,0,3)},X^{(2,0,3)}),\label{047}
\end{align}
where (\ref{043}) and (\ref{045}) both follow from the fact that $X^{(0,1,2)},X^{(0,1,3)}$ is a deterministic function of $Z_1$, (\ref{044}) and (\ref{047}) both follow from the sub-modular
property of the entropy function, (\ref{046}) follows from
\begin{align}
	&H(Z_1,X^{(1,0,2)}) \overset{(a)}{=}H(Z_1, X^{(1,0,3)}), H(X^{(1,0,2)}) \overset{(b)}{=}H(X^{(1,0,3)}), \nonumber\\
	&H(X^{(0,1,2)},X^{(0,1,3)}) \overset{(c)}{=}H(X^{(2,0,1)},X^{(3,0,1)})\overset{(d)}{=}H(X^{(2,0,3)},X^{(1,0,3)}),\nonumber
\end{align}
where $(a)$, $(b)$ and $(d)$ all follow from the property of file-index-symmetric schemes, and $(c)$ follows from the property of user-index-symmetric schemes.

Using the property of user-index-symmetric schemes to combine (\ref{042}) and (\ref{047}), we have
\begin{align}
	&4M+2R(M) \nonumber\\
	\geq & 3+H(W_1,X^{(0,1,2)},X^{(0,1,3)},X^{(1,0,2)})+H(Z_1,W_1)+H(Z_1,X^{(1,0,3)},X^{(2,0,3)})\nonumber\\
	&-H(X^{(0,1,2)},X^{(0,1,3)},X^{(1,0,2)})\nonumber\\
	\geq & 3+H(W_1,W_2,X^{(0,1,2)},X^{(0,1,3)},X^{(1,0,2)})+H(Z_1,W_1)+H(Z_1,X^{(1,0,3)},X^{(2,0,3)})\label{049}\\
	& -H(W_2,X^{(0,1,2)},X^{(0,1,3)},X^{(1,0,2)})\nonumber\\
	= & 3+H(W_1,W_2,X^{(0,2,1)},X^{(0,3,1)},X^{(1,2,0)})+H(Z_1,W_1)+H(Z_1,X^{(1,0,3)},X^{(2,0,3)})\label{050}\\
	& -H(W_2,X^{(0,2,1)},X^{(0,3,1)},X^{(1,2,0)})\nonumber\\
	\geq & 3+H(W_1,W_2,X^{(0,2,1)},X^{(0,3,1)},X^{(1,2,0)},X^{(3,0,2)})+H(Z_1,W_1)\label{051}\\
	& +H(Z_1,X^{(1,0,3)},X^{(2,0,3)}) -H(W_2,X^{(0,2,1)},X^{(0,3,1)},X^{(1,2,0)},X^{(3,0,2)})\nonumber\\
	= &3+H(W_3,W_2,X^{(0,2,3)},X^{(0,1,3)},X^{(3,2,0)},X^{(1,0,2)})+H(Z_1,W_1)\label{052}\\
	& +H(Z_1,X^{(1,0,3)},X^{(2,0,3)}) -H(W_2,X^{(0,2,1)},X^{(0,3,1)},X^{(1,2,0)},X^{(3,0,2)})\nonumber\\
	\geq & 3+H(W_3,W_2,X^{(0,2,3)},X^{(0,1,3)},X^{(1,0,2)})+H(Z_1,W_1)+H(Z_1,X^{(1,0,3)},X^{(2,0,3)})\label{053}\\
	& -H(W_2,X^{(0,2,1)},X^{(0,3,1)},X^{(1,2,0)},X^{(3,0,2)}),\nonumber
\end{align}
where (\ref{049}) and (\ref{051}) both follow from the sub-modular
property of the entropy function, (\ref{050}) follows from the property of user-index-symmetric schemes, (\ref{052}) follows from the property of file-index-symmetric schemes.

Adding $H(W_1,W_2,X^{(0,2,3)},X^{(0,1,3)},X^{(1,0,2)})$ to both sides of (\ref{053}), we have
\begin{align}
	&4M+2R+H(W_1,W_2,X^{(0,2,3)},X^{(0,1,3)},X^{(1,0,2)}) \nonumber\\
	\geq & 3+H(W_3,W_2,X^{(0,2,3)},X^{(0,1,3)},X^{(1,0,2)})+H(W_1,W_2,X^{(0,2,3)},X^{(0,1,3)},X^{(1,0,2)})\nonumber\\
	& +H(Z_1,W_1) +H(Z_1,X^{(1,0,3)},X^{(2,0,3)})-H(W_2,X^{(0,2,1)},X^{(0,3,1)},X^{(1,2,0)},X^{(3,0,2)})\nonumber\\
	\geq & 3+H(W_1,W_2,W_3,X^{(0,2,3)},X^{(0,1,3)},X^{(1,0,2)})+H(W_2,X^{(0,2,3)},X^{(0,1,3)},X^{(1,0,2)})\label{054}\\
	& +H(Z_1,W_1) +H(Z_1,X^{(1,0,3)},X^{(2,0,3)})-H(W_2,X^{(0,2,1)},X^{(0,3,1)},X^{(1,2,0)},X^{(3,0,2)})\nonumber\\
	= & 3+H(W_1,W_2,W_3)+H(W_2,X^{(0,2,3)},X^{(0,1,3)},X^{(1,0,2)})+H(Z_1,W_1)\label{055}\\
	& +H(Z_1,X^{(1,0,3)},X^{(2,0,3)})-H(W_2,X^{(0,2,1)},X^{(0,3,1)},X^{(1,2,0)},X^{(3,0,2)})\nonumber\\
	= & 6+H(W_2,X^{(0,2,3)},X^{(0,1,3)},X^{(1,0,2)})+H(Z_1,W_1)+H(Z_1,X^{(1,0,3)},X^{(2,0,3)})\nonumber\\
	& -H(W_2,X^{(0,2,1)},X^{(0,3,1)},X^{(1,2,0)},X^{(3,0,2)})\nonumber\\
	= & 6+H(W_2,X^{(0,2,3)},X^{(0,1,3)},X^{(1,0,2)})+H(Z_1,W_1)+H(Z_2,X^{(0,1,3)},X^{(0,2,3)})\label{056}\\
	& -H(W_2,X^{(0,2,1)},X^{(0,3,1)},X^{(1,2,0)},X^{(3,0,2)})\nonumber\\
	= & 6+H(W_2,X^{(0,2,3)},X^{(0,1,3)},X^{(1,0,2)})+H(Z_1,W_1)\label{057}\\
	& +H(Z_2,X^{(0,1,3)},X^{(0,2,3)},X^{(1,0,2)},W_1,W_2) -H(W_2,X^{(0,2,1)},X^{(0,3,1)},X^{(1,2,0)},X^{(3,0,2)}),\nonumber
\end{align}
where (\ref{054}) follows from the sub-modular
property of the entropy function, (\ref{055}) follows from (\ref{1711}) and (\ref{eq:0}), (\ref{056}) follows from the property of user-index-symmetric schemes, (\ref{057}) follows from the fact that $X^{(1,0,2)}$ is a deterministic function of $Z_2$ and knowing $(Z_2, X^{(0,1,3)}, X^{(0,2,3)})$ can decode $W_1,W_2$.

Through further calculations, we find that
\begin{align}
	& H(W_2,X^{(0,2,3)},X^{(0,1,3)},X^{(1,0,2)})-H(W_2,X^{(0,2,1)},X^{(0,3,1)},X^{(1,2,0)},X^{(3,0,2)})\nonumber\\
	= &
	H(X^{(0,1,3)},X^{(0,2,3)},W_2,X^{(1,0,2)})-H(X^{(0,1,3)},X^{(0,2,3)},X^{(3,2,0)},W_2,X^{(1,0,2)})\label{065}\\
	\geq & H(W_2,X^{(1,0,2)})-H(X^{(3,2,0)},W_2,X^{(1,0,2)})\label{066}\\
	= & H(W_1,X^{(2,0,1)})-H(X^{(1,0,2)},W_1,X^{(0,1,3)}),\label{067}
\end{align}
where (\ref{065}) follows from the property of user-index-symmetric schemes, (\ref{066}) follows from the sub-modular
property of the entropy function, (\ref{067}) follows from
\begin{align}
	&H(W_2,X^{(1,0,2)}) \overset{(a)}{=}H(W_1,X^{(2,0,1)}), \nonumber\\
	&H(X^{(3,2,0)},W_2,X^{(1,0,2)}) \overset{(b)}{=}H(X^{(2,1,0)},W_1,X^{(3,0,1)})\overset{(c)}{=}H(X^{(1,0,2)},W_1,X^{(0,1,3)}),\nonumber
\end{align}
where $(a)$ and $(b)$ both follow from the property of file-index-symmetric schemes, and $(c)$ follows from the property of user-index-symmetric schemes.

Again, adding $H(Z_1,W_1,W_2,X^{(0,2,3)},X^{(0,1,3)},X^{(1,0,2)})$ to both sides of (\ref{057}) and applying (\ref{067}), we have
\begin{align}
	&4M+2R+H(W_1,W_2,X^{(0,2,3)},X^{(0,1,3)},X^{(1,0,2)})+H(Z_1,W_1,W_2,X^{(0,2,3)},X^{(0,1,3)},X^{(1,0,2)}) \nonumber\\
	\geq &6+H(Z_1,W_1)+H(Z_2,X^{(0,1,3)},X^{(0,2,3)},X^{(1,0,2)},W_1,W_2)\nonumber\\
	& +H(Z_1,W_1,W_2,X^{(0,2,3)},X^{(0,1,3)},X^{(1,0,2)})+H(W_1,X^{(2,0,1)})-H(X^{(1,0,2)},W_1,X^{(0,1,3)})\nonumber\\
	\geq &6+H(Z_1,W_1)+H(W_1,W_2,W_3)+H(W_1,W_2,X^{(0,2,3)},X^{(0,1,3)},X^{(1,0,2)})\label{058}\\
	& +H(W_1,X^{(2,0,1)})-H(X^{(1,0,2)},W_1,X^{(0,1,3)})\nonumber\\
	= &9+H(Z_1,W_1)+H(W_1,W_2,X^{(0,2,3)},X^{(0,1,3)},X^{(1,0,2)})+H(W_1,X^{(2,0,1)})\label{059}\\
	& -H(X^{(1,0,2)},W_1,X^{(0,1,3)}),\nonumber
\end{align}
where (\ref{058}) follows from (\ref{eq:28}).

Then, adding $H(W_1,X^{(0,1,3)},X^{(1,0,2)})$ to both sides of (\ref{059}), and removing $H(W_1,W_2,X^{(0,2,3)},X^{(0,1,3)},X^{(1,0,2)})$ and $H(Z_1,W_1,W_2,X^{(0,2,3)},X^{(0,1,3)},X^{(1,0,2)})$ from both sides of (\ref{059}), we have
\begin{align}
	&4M+2R+H(W_1,X^{(0,1,3)},X^{(1,0,2)})\nonumber\\
	\geq & 9+H(Z_1,W_1)-H(Z_1,W_1,W_2,X^{(0,2,3)},X^{(0,1,3)},X^{(1,0,2)})+H(W_1,X^{(2,0,1)})\nonumber\\
	= & 9+H(Z_1,W_1)-H(Z_1,W_2,X^{(1,0,2)})+H(W_1,X^{(2,0,1)})\label{060}\\
	= & 9+H(Z_1,W_1)-H(Z_1,W_1,X^{(2,0,1)})+H(W_1,X^{(2,0,1)})\label{061}\\
	\geq &9+H(W_1)-H(X^{(2,0,1)},W_1)+H(W_1,X^{(2,0,1)})\label{062}\\
	= & 10,\nonumber
\end{align}
where (\ref{060}) follows from the fact that $X^{(0,2,3)},X^{(0,1,3)}$ is a deterministic function of $Z_1$ and knowing $(Z_1, X^{(1,0,2)})$ can decode $W_1$, (\ref{061}) follows from the property of file-index-symmetric schemes, (\ref{062}) follows from the sub-modular
property of the entropy function. Hence, (\ref{033}) is proved.

\section{Proof of Lemma \ref{LemmaNan05}} \label{AppF}
We will first prove (\ref{070}).
\begin{align}
	&M+R(M)+H(W_1,X^{(0,1,3)},X^{(1,0,2)})\geq H(Z_1)+H(X^{(1,0,2)})+H(W_1,X^{(0,1,3)},X^{(1,0,2)})\nonumber\\
	\geq & H(Z_1,X^{(1,0,2)})+H(W_1,X^{(0,1,3)},X^{(1,0,2)}) \nonumber\\
	=& H(Z_1,X^{(1,0,2)},W_1)+H(W_1,X^{(0,1,3)},X^{(1,0,2)})\label{072}\\
	\geq &H(Z_1,W_1)+H(W_1,X^{(0,1,3)},X^{(1,0,2)})\nonumber\\
	= & H(Z_1,W_1)+H(W_1,X^{(1,0,2)})+H(W_1,X^{(0,1,3)},X^{(1,0,2)})-H(W_1,X^{(1,0,2)})\nonumber\\
	= & H(Z_1,W_1)+H(W_1,X^{(2,0,1)})+H(W_1,X^{(0,1,3)},X^{(1,0,2)})-H(W_1,X^{(1,0,2)})\label{073}\\
	\geq & H(Z_1,X^{(2,0,1)},W_1)+H(W_1)+H(Z_3,W_1,X^{(0,1,3)},X^{(1,0,2)})-H(Z_3,W_1,X^{(1,0,2)})\label{074}\\
	= &H(Z_3,X^{(1,0,2)},W_1)+H(W_1)+H(Z_3,W_1,X^{(0,1,3)},X^{(1,0,2)})-H(Z_3,W_1,X^{(1,0,2)})\label{075}\\
	= &H(W_1)+H(Z_3,W_1,X^{(0,1,3)},X^{(1,0,2)},W_2,W_3)\label{076}\\
	= &H(W_1)+H(W_1,W_2,W_3)=4,\label{077}
\end{align}
where (\ref{072}) follows from the fact that knowing $(Z_1, X^{(1,0,2)})$ can decode $W_1$, (\ref{073}) and (\ref{075}) both follow from the property of user-index-symmetric schemes, (\ref{074}) follows from the sub-modular
property of the entropy function, (\ref{076}) follows from the fact that knowing $(Z_3, X^{(0,1,3)},\break X^{(1,0,2)})$ can decode $W_2,W_3$, (\ref{077}) follows from (\ref{1711}) and (\ref{eq:0}). Hence, (\ref{070}) is proved.

Next, we will prove (\ref{071}).
\begin{align}
	&2M+2R(M)-H(W_1,X^{(0,1,3)},X^{(1,0,2)})\nonumber\\ 
	\geq & H(Z_1)+H(Z_2)+H(X^{(1,0,2)})+H(X^{(0,1,3)})-H(W_1,X^{(0,1,3)},X^{(1,0,2)})\nonumber\\
	\geq & H(Z_1,X^{(1,0,2)})+H(Z_2,X^{(0,1,3)})-H(W_1,X^{(0,1,3)},X^{(1,0,2)})\nonumber\\
	= & H(Z_1,X^{(1,0,2)},W_1,X^{(0,1,3)})+H(Z_2,X^{(0,1,3)},W_1,X^{(1,0,2)})-H(W_1,X^{(0,1,3)},X^{(1,0,2)})\label{078}\\
	\geq & H(W_1,W_2,W_3)+H(X^{(1,0,2)},W_1,X^{(0,1,3)})-H(W_1,X^{(0,1,3)},X^{(1,0,2)})\label{079}\\
	= & 3,\nonumber
\end{align}
where (\ref{078}) follows from the fact that knowing $(Z_1, X^{(1,0,2)})$ can decode $W_1$, $X^{(0,1,3)}$ is a deterministic function of $Z_1$, knowing $(Z_2, X^{(0,1,3)})$ can decode $W_1$, and $X^{(1,0,2)}$ is a deterministic function of $Z_2$, (\ref{079}) follows from (\ref{eq:28}). Hence, (\ref{071}) is proved.

\section{Proof of Lemma \ref{LemmaNan06}} \label{AppG}
We will first prove (\ref{0713}). Applying (\ref{1711}), (\ref{eq:17}), (\ref{eq:18}) and (\ref{eq:36}), the following chains of inequalities can be written as
\begin{align}
	& M+R(M) 
	\nonumber\\
	\geq &H(Z_1)+H(X_1^{(1,2,2)})+H(X_2^{(1,2,2)})+H(X_3^{(1,2,2)}) \geq H(Z_1,X_2^{(1,2,2)},X_3^{(1,2,2)})+H(X_1^{(1,2,2)})
	\nonumber\\
	= &H(Z_1,X_2^{(1,2,2)},X_3^{(1,2,2)},W_1)+H(Z_2,Z_1,X_2^{(1,2,2)},W_1)-H(Z_2,Z_1,X_2^{(1,2,2)},W_1)
	\label{Nan40.99}\\
	&+H(X_1^{(1,2,2)})\nonumber\\
	\geq &H(X_3^{(1,2,2)},Z_2,Z_1,X_2^{(1,2,2)},W_1)+H(W_1,Z_1,X_2^{(1,2,2)})-H(W_1,Z_1,Z_2,X_2^{(1,2,2)})
	\label{Nan41}\\
	&+H(X_1^{(1,2,2)})\nonumber\\
	= &H(W_1,W_2)+H(W_1,Z_1,X_2^{(1,2,2)})-H(W_1,Z_1,Z_2)+H(X_1^{(1,2,2)})
	\label{Nan41.5}\\
	= &H(W_1,Z_1,X_3^{(1,2,2)})-H(W_1,Z_1,Z_2)+H(X_1^{(1,2,2)})+H(W_1,W_2)
	\label{Nan41.6}\\
	\geq &H(W_1,Z_1,Z_2,X_3^{(1,2,2)})-H(W_1,Z_1)+H(X_1^{(1,2,2)})+H(W_1,W_2)
	\label{Nan41.7}\\
	= &H(W_1,W_2)-H(W_1,Z_1,Z_2)+H(X_1^{(1,2,2)})+H(W_1,W_2)
	\label{Nan42}\\
	\geq &H(W_1,Z_1)-2H(W_1,Z_1,Z_2)+H(X_1^{(1,2,2)})+2H(W_1,W_2)
	\label{Nan42.1}\\
	= &H(W_1,Z_2)-2H(W_1,Z_1,Z_2)+H(X_1^{(1,2,2)})+2H(W_1,W_2)+H(W_1,X_1^{(1,2,1)})
	\nonumber\\
	&-H(W_1,X_1^{(1,2,1)})\nonumber\\
	\geq &H(W_1,Z_2,X_1^{(1,2,1)})-H(W_1,X_1^{(1,2,1)})-2H(W_1,Z_1,Z_2)+H(X_1^{(1,2,2)})
	\label{Nan43}\\
	&+2H(W_1,W_2)+H(W_1)\nonumber\\
	= &H(W_1,X_1^{(1,2,1)},X_2^{(1,2,1)},Z_2)+H(W_1,X_1^{(1,2,1)},X_2^{(1,2,1)},X_3^{(1,2,1)})
	\label{Nan43.5}\\
	& -H(W_1,X_1^{(1,2,1)},X_2^{(1,2,1)},X_3^{(1,2,1)}) -H(W_1,X_1^{(1,2,1)})-2H(W_1,Z_1,Z_2)+H(X_1^{(1,2,2)})
	\nonumber\\
	&+2H(W_1,W_2)+H(W_1)\nonumber\\
	\geq &H(W_1,X_1^{(1,2,1)},X_2^{(1,2,1)},X_3^{(1,2,1)},Z_2)+H(W_1,X_1^{(1,2,1)},X_2^{(1,2,1)})
	\label{Nan44}\\
	& -H(W_1,X_1^{(1,2,1)},X_2^{(1,2,1)},X_3^{(1,2,1)}) -H(W_1,X_1^{(1,2,1)})-2H(W_1,Z_1,Z_2)+H(X_1^{(1,2,2)})
	\nonumber\\
	&+2H(W_1,W_2)+H(W_1)\nonumber\\
	\geq &H(X_1^{(1,2,2)})+3H(W_1,W_2)+H(W_1)-H(W_1,X_1^{(1,2,1)},X_2^{(1,2,1)},X_3^{(1,2,1)})
	\label{eq:37}\\
	&-2H(W_1,Z_1,Z_2),\nonumber
\end{align}
where (\ref{Nan40.99}) follows from the fact that knowing $(Z_1, X_2^{(1,2,2)}, X_3^{(1,2,2)})$ can decode $W_1$, 
(\ref{Nan41}),  (\ref{Nan41.7}),  (\ref{Nan43}) and (\ref{Nan44}) all follow from the sub-modular property of the entropy function,
(\ref{Nan41.5}) and (\ref{Nan42}) both follow from the fact that $X_1^{(1,2,2)}$ is a deterministic function of $Z_1$, knowing $(Z_2, X_1^{(1,2,2)}, X_3^{(1,2,2)})$ can decode $W_2$, (\ref{1711}) and (\ref{eq:17}),
(\ref{Nan41.6}) follows from the property of user-index-symmetric schemes, (\ref{Nan42.1}) follows from the fact that $H(W_1,Z_1)\leq H(W_1,Z_1,Z_2)$,
(\ref{Nan43.5}) follows from the fact that $X_2^{(1,2,1)}$ is a deterministic function of $Z_2$,
(\ref{eq:37}) follows from the fact that knowing $(Z_2, X_1^{(1,2,1)}, X_3^{(1,2,1)})$ can decode $W_2$, (\ref{1711}), (\ref{eq:17}) and $H(W_1,X_1^{(1,2,1)},X_2^{(1,2,1)})\geq H(W_1,X_1^{(1,2,1)})$. 
Hence, (\ref{0713}) is proved.

Next, we will prove (\ref{0714}).
\begin{align}
	& 2M+R(M) \geq H(Z_2,X_1^{(1,2,1)},X_3^{(1,2,1)})+H(Z_1)+H(X_2^{(1,2,1)})
	\nonumber\\
	\geq &H(Z_2,X_1^{(1,2,1)},X_3^{(1,2,1)})+H(Z_1,X_2^{(1,2,1)})
	\nonumber\\
	= &H(W_2,Z_2,X_1^{(1,2,1)},X_3^{(1,2,1)},X_2^{(1,2,1)})+H(W_1,X_1^{(1,2,1)},X_2^{(1,2,1)},X_3^{(1,2,1)})
	\label{Nan44.99}\\
	& -H(W_1,X_1^{(1,2,1)},X_2^{(1,2,1)},X_3^{(1,2,1)})+H(Z_1,X_2^{(1,2,1)})
	\nonumber\\
	\geq &H(X_1^{(1,2,1)},X_2^{(1,2,1)},X_3^{(1,2,1)})-H(W_1,X_1^{(1,2,1)},X_2^{(1,2,1)},X_3^{(1,2,1)})
	\label{Nan45}\\
	&+H(Z_1,X_2^{(1,2,1)})+H(W_1,W_2)\nonumber\\
	= &H(X_1^{(1,2,1)},X_2^{(1,2,1)},X_3^{(1,2,1)})-H(W_1,X_1^{(1,2,1)},X_2^{(1,2,1)},X_3^{(1,2,1)})
	\label{Nan45.5}\\
	&+H(Z_1,X_2^{(1,2,1)},X_1^{(1,2,1)})+H(W_1,W_2)\nonumber\\
	\geq &H(X_1^{(1,2,1)},X_2^{(1,2,1)})+H(Z_1,X_1^{(1,2,1)},X_2^{(1,2,1)},X_3^{(1,2,1)})
	\label{Nan45.6}\\
	&-H(W_1,X_1^{(1,2,1)},X_2^{(1,2,1)},X_3^{(1,2,1)})+H(W_1,W_2)\nonumber\\
	= &H(X_1^{(1,2,1)},X_2^{(1,2,1)})+H(Z_1,X_2^{(1,2,1)},X_3^{(1,2,1)})
	\label{eq:38}\\
	&-H(W_1,X_1^{(1,2,1)},X_2^{(1,2,1)},X_3^{(1,2,1)})+H(W_1,W_2),\nonumber
\end{align}
where (\ref{Nan44.99}) follows from the fact that $X_2^{(1,2,1)}$ is a deterministic function of $Z_2$ and knowing $(Z_2, X_1^{(1,2,1)}, X_3^{(1,2,1)})$ can decode $W_2$,
(\ref{Nan45}) follows from the sub-modular property of the entropy function, (\ref{1711}) and (\ref{eq:17}),
(\ref{Nan45.5}) and (\ref{eq:38}) both follow from the fact that $X_1^{(1,2,1)}$ is a deterministic function of $Z_1$, 
(\ref{Nan45.6}) follows from the sub-modular property of the entropy function. Hence, (\ref{0714}) is proved.

Moreover, we directly have
\begin{align}
	3M+R(M) & \geq 2H(Z_1)+H(Z_1)+H(X_1^{(1,1,2)})+H(X_2^{(1,1,2)})+H(X_3^{(1,1,2)}) \nonumber\\
	& \geq 2H(Z_1)+H(Z_1,X_2^{(1,1,2)},X_3^{(1,1,2)})+H(X_1^{(1,1,2)}),\nonumber
\end{align}
Hence, (\ref{0715}) is proved.

\section{Proof of Lemma \ref{LemmaNan07}} \label{AppH}
We will first prove (\ref{0716}).
\begin{align}
	& 3M+2R(M) \geq 3H(Z_1)+2[H(X_1^{(1,1,2)})+H(X_2^{(1,1,2)})+H(X_3^{(1,1,2)})]
	\nonumber\\
	\geq &2H(Z_1,X_2^{(1,1,2)},X_3^{(1,1,2)})+H(Z_1,X_2^{(1,2,2)},X_3^{(1,2,2)})
	\label{Nan47}\\
	\geq &2H(W_1,Z_1,X_2^{(1,1,2)})+H(W_1,Z_1,X_3^{(1,2,2)})
	\label{Nan47.5}\\
	= &H(W_1,Z_1,X_2^{(1,1,2)})+H(W_1,Z_2,X_1^{(1,1,2)})+H(W_1,Z_1,X_3^{(1,2,2)})
	\label{Nan47.55}\\
	= &H(W_1,Z_1,X_2^{(1,1,2)},X_1^{(1,1,2)})+H(W_1,Z_2,X_1^{(1,1,2)},X_2^{(1,1,2)})+H(W_1,Z_1,X_3^{(1,2,2)})
	\label{Nan47.6}\\
	\geq &H(W_1,Z_1,Z_2,X_1^{(1,1,2)},X_2^{(1,1,2)})+H(W_1,X_1^{(1,1,2)},X_2^{(1,1,2)})+H(W_1,Z_1,X_3^{(1,2,2)})
	\label{Nan48}\\
	\geq &H(W_1,Z_1)+H(W_1,X_1^{(1,1,2)},X_2^{(1,1,2)})+H(W_1,Z_1,Z_2,X_1^{(1,1,2)},X_2^{(1,1,2)},X_3^{(1,2,2)})
	\label{Nan49}\\
	= &H(W_1,Z_1)+H(W_1,X_1^{(1,1,2)},X_2^{(1,1,2)})+H(W_1,W_2)
	\label{Nan49.5}\\
	= &2+H(W_1,Z_1)+H(W_1,X_1^{(1,1,2)},X_2^{(1,1,2)}),\nonumber
\end{align}
where (\ref{Nan47}) follows from (\ref{eq:36}), (\ref{Nan47.5}) follows from the fact that knowing $(Z_1, X_2^{(1,1,2)}, X_3^{(1,1,2)})$ can decode $W_1$ and knowing $(Z_1, X_2^{(1,2,2)}, X_3^{(1,2,2)})$ can decode $W_1$, 
(\ref{Nan47.55}) follows from the property of user-index-symmetric schemes, (\ref{Nan47.6}) follows from the fact that $X_1^{(1,1,2)}$ is a deterministic function of $Z_1$ and $X_2^{(1,1,2)}$ is a deterministic function of $Z_2$, (\ref{Nan48}) and (\ref{Nan49}) both follow from the sub-modular property of the entropy function, (\ref{Nan49.5}) follows from the fact that $X_1^{(1,2,2)}$ is a deterministic function of $Z_1$, knowing $(Z_2, X_1^{(1,2,2)}, X_3^{(1,2,2)})$ can decode $W_2$, (\ref{1711}) and (\ref{eq:17}). Hence, (\ref{0716}) is proved.

Next, we will prove (\ref{0717}). We notice that
\begin{align}
	&H(W_1,Z_1)+H(W_1,X_2^{(2,1,1)})+H(W_1,X_1^{(1,1,2)},X_2^{(1,1,2)})+H(W_1,X_1^{(1,1,2)},Z_3)
	\nonumber\\
	\geq &H(W_1,Z_1,X_2^{(2,1,1)})+H(W_1)+H(W_1,X_1^{(1,1,2)})+H(W_1,X_1^{(1,1,2)},X_2^{(1,1,2)},Z_3)
	\label{Nan50}\\
	= &H(W_1,Z_1,X_2^{(2,1,1)})+H(W_1)+H(W_1,X_1^{(1,1,2)})+H(W_1,X_1^{(1,1,2)},X_2^{(1,1,2)},Z_3,W_2)
	\label{Nan51}\\
	= &H(W_1,Z_1,X_2^{(2,1,1)})+H(W_1)+H(W_1,X_1^{(1,1,2)})+H(W_1,W_2)
	\label{Nan51.1}\\
	= &3+H(W_1,Z_1,X_2^{(2,1,1)})+H(W_1,X_1^{(1,1,2)})
	\nonumber\\
	= &3+H(W_1,Z_3,X_1^{(1,1,2)})+H(W_1,X_2^{(2,1,1)}),
	\label{Nan51.5}
\end{align}
where (\ref{Nan50}) follows from the sub-modular property of the entropy function, (\ref{Nan51}) follows from the fact that knowing $(Z_3, X_1^{(1,1,2)}, X_2^{(1,1,2)})$ can decode $W_2$, (\ref{Nan51.1}) follows from (\ref{1711}) and (\ref{eq:17}), (\ref{Nan51.5}) follows from the property of user-index-symmetric schemes.
Removing $H(W_1,Z_3,X_1^{(1,1,2)})+H(W_1,X_2^{(2,1,1)})$ from both sides of (\ref{Nan51.5}), (\ref{0717}) is proved.

\section{Proof of Theorem \ref{Theo3}} \label{0815}
	
	
	
	The corner point of $(N,0)$ is trivial. 
	As for the corner point of $\left(\frac{1}{s+1}N,\frac{s}{s+1}\min\{N,K-s\}\right)$, its achievability scheme is as follows: split all files into $s+1$ subfiles of equal sizes. For each file, use a $(K,s+1)$-MDS code to encode the $s+1$ subfiles. The encoded copies of File $n$ are denoted as $\bar{W}_{n,i}$, $n \in [N], i \in [K]$. User $k$ stores $\left\{\bar{W}_{n,k}, n \in [N] \right\}$, which is of size $\frac{1}{s+1}N$. 
	%
	%
	In the delivery phase, each sender in $\mathcal{S}$ transmits the encoded copy of the requested files stored in its cache. Denote the number of distinct requested files as $N_e$, then the transmission rate of each sender is $\frac{1}{s+1}N_e$, and the total transmission rate of all $s$ senders is $\frac{s}{s+1}N_e$. In the worst-case, $N_e=\min\{N,K-s\}$. Hence, the worst-case delivery rate is $R(M)=\frac{s}{s+1}\min\{N,K-s\}$. Due to the $(K,s+1)$-MDS code employed, each requester can decode its requested file based on the one encoded copy in its own cache and the $s$ encoded copies from the $s$ senders.

	%
	
	Lastly, we provide the achievability scheme for the corner point of $\left(\frac{K-1}{K}N,\frac{1}{K} \right)$. The caching scheme is the same as that of the MAN uncoded symmetric placement in \cite[Algorithm 1]{6620392} with $t=K-1$. More specifically, all files are split into $K$ subfiles of equal sizes, denoted as $(W_{n,1}, \cdots, W_{n, K})$, and User $k$ stores $\left\{(W_{n,1}, \cdots, W_{n,k-1}, W_{n,k+1},\cdots, W_{n,K}), n \in [N]\right\}$, which is of size $\frac{K-1}{K}N$. In the delivery phase, denote the set of users not requesting files as $\mathcal{S}$, and let User $k \in [K] \setminus \mathcal{S}$ requests File $d_k$. One of the designated senders in $\mathcal{S}$ sends the signal $\oplus_{k\in[K]\setminus \mathcal{S}}W_{d_k,k}$, which is of rate $\frac{1}{K}$. Note that the sender has all the pieces $W_{d_k,k},  k\in[K]\setminus \mathcal{S}$ in its cache. It is easily checked that the decodability constraint is satisfied. 
	
	
	%
	%

\ifCLASSOPTIONcaptionsoff
\newpage
\fi


\end{document}